\documentclass[aps,
10pt,
groupedaddress,
twocolumn,
notitlepage,
nofootinbib,        
  bibnotes 
  ]{revtex4-2}

\usepackage{amsmath, amsfonts, amscd, amsthm, xspace, mdframed}
\usepackage{mathtools, amssymb}              
\usepackage{dsfont, mathrsfs, nicefrac, stmaryrd, bm, bbm, 
}
\usepackage{enumitem}

\usepackage{graphicx, graphics, color}
\usepackage[unicode=true, pdfusetitle, bookmarks=true, bookmarksnumbered=false,%
    bookmarksopen=false, breaklinks=false, pdfborder={0 0 0}, pdfborderstyle={},%
    backref=false, colorlinks=true]{hyperref}
\usepackage{tikz-cd}
\usepackage{changepage}
\usepackage{iftex}

\ifXeTeX
  \usepackage{fontspec}
  \usepackage{unicode-math}           
  
\else
  \ifLuaTeX
    \usepackage{fontspec}
    \usepackage{unicode-math} 
    
  \else
    \usepackage[T1]{fontenc}
    \usepackage{lmodern}
  \fi
\fi

\usepackage{savesym}
\savesymbol{degree}
\savesymbol{corresponds}
\restoresymbol{whatever}{degree}
\restoresymbol{whatever2}{corresponds}

\usepackage{tikz}
\usetikzlibrary{arrows, chains, matrix, positioning, scopes}
\usepackage[most]{tcolorbox}

\theoremstyle{definition}
\newtheorem{definition}{Definition}[section]

\newtheorem*{example*}{Example}
\newtheorem*{remark}{Remark}
\theoremstyle{plain}
\newtheorem{theorem}{Theorem}[section]
\newtheorem{theoremNS}{Theorem}
\newtheorem{lemma}[theorem]{Lemma}
\newtheorem{lemmaNS}{Lemma}

\newtheorem{corollary}[theorem]{Corollary}
\newtheorem{claim}[theorem]{Claim}

\newtheorem{observation}[theorem]{Observation}

\usepackage{thmtools}
\usepackage{thm-restate}

\newcommand{\AssumptionAnchor}[2]{%
  \hypertarget{ass:#1}{\textbf{#2}.}\hspace{-7pt}
}
\newcommand{\AssumptionLink}[2]{\hyperlink{ass:#1}{#2}}

\newcommand{\divergenceSeparator}{\delimsize\|\mathopen{}}

\newcommand{\smallgiven}{\nonscript\,\delimsize\vert\nonscript\,\mathopen{}}
\DeclarePairedDelimiterX{\pa}[1](){#1}
\DeclarePairedDelimiterX{\pb}[1][]{#1}
\DeclarePairedDelimiterX{\pc}[1]\{\}{#1}
\DeclarePairedDelimiterX{\braket}[1]\langle \rangle {#1}
\DeclarePairedDelimiterX{\verts}[1] \lvert \rvert {#1}
\DeclarePairedDelimiterX{\Verts}[1] \lVert \rVert {#1}
\DeclarePairedDelimiter{\norm}{\lVert}{\rVert}

\newcommand{\divergence}[2]{\pa*{#1 \divergenceSeparator #2}}

   \def\mf{\mathfrak}   \def\md{\mathbb}   \def\mb{\mathbf}

\def\8{\infty}
    \def\p{\mb{p}}

  \def\u{\mb{u}}  \def\0{\mathbf{0}}
\def\1{\mathbf{1}}
\def\N{\md N}   \def\R{\md{R}}  \def\C{\md{C}}
\def\D{\md D}     

\def\states{\mathfrak{D}}

\DeclareMathOperator{\tr}{Tr}
\DeclareMathOperator{\rank}{rk}
\DeclareMathOperator{\ima}{Im}

\newcommand{\bra}[1]{ \langle#1|}
\newcommand{\ket}[1]{|#1\rangle}

\newcommand{\defeq}{\stackrel{\textup{\tiny def}}{=}}

\def\id{\mathsf{id}}

  \def\pos{{\rm Pos}}  
\def\CPTP{{\rm CPTP}}  \def\cptp{\CPTP}  
    
\def\sym{{\rm Sym}}

\def\free{\mf F}           

\newcommand{\yield}[1][0]{{\mathsf {Yield}}^{#1}}
\newcommand{\cost}[1][0]{{\mathsf{Cost}}^{#1}}

\def\dmax{D_{\rm max}}
\def\dmin{D_{\rm min}}

\renewcommand{\ge}{\geqslant}  \renewcommand{\le}{\leqslant}
  
\renewcommand{\qedsymbol}{\nobreak\ifvmode\relax\else
  \ifdim\lastskip<1.5em \hskip-\lastskip \hskip1.5em plus0em minus0.5em\fi
  \nobreak\vrule height0.75em width0.5em depth0.25em\fi}

\newcommand{\scriptveryshortarrow}[1][3pt]{{%
  \hbox{\rule[\scriptratio\dimexpr\fontdimen22\textfont2-.2pt\relax]
    {\scriptratio\dimexpr#1\relax}{\scriptratio\dimexpr.4pt\relax}}%
  \mkern-4mu\hbox{\let\f@size\sf@size\usefont{U}{lasy}{m}{n}\symbol{41}}}}

\newtcolorbox{myt}[2][]{%
  attach boxed title to top center={yshift=-4pt},
  colback=blue!5!white, colframe=blue!75!black,
  halign=flush left, fonttitle=\bfseries\sffamily,
  colbacktitle=blue!65!black, title=#2,#1, enhanced,
}
\newtcolorbox{myd}[2][]{%
  attach boxed title to top center={yshift=-4pt},
  colback=violet!5!white, colframe=violet!75!black,
  halign=flush left, fonttitle=\bfseries\sffamily,
  colbacktitle=violet!65!black, title=#2,#1, enhanced,
}
\newtcolorbox{mye}[2][]{%
  attach boxed title to top center={yshift=-4pt},
  colback=purple!5!white, colframe=purple!75!black,
  halign=flush left, fonttitle=\bfseries\sffamily,
  colbacktitle=purple!65!black, title=#2,#1, enhanced,
}
\newtcolorbox{myg}[2][]{%
  attach boxed title to top center={yshift=-4pt},
  colback=green!5!white, colframe=green!75!black,
  halign=flush left, fonttitle=\bfseries\sffamily,
  colbacktitle=green!65!black, title=#2,#1, enhanced,
}
\def\bmyd{\begin{myd}{} \begin{definition}}
\def\emyd{\end{definition}\end{myd}}
\def\bmyl{\begin{myg}{} \begin{lemma}}
\def\emyl{\end{lemma}\end{myg}}
\def\bmyt{\begin{myt}{} \begin{theorem}}
\def\emyt{\end{theorem}\end{myt}}
\def\bmyc{\begin{myg}{} \begin{corollary}}
\def\emyc{\end{corollary}\end{myg}}

\newcommand{\explain}[1]{{\color{red}\text{#1}\rightarrow }}
\newcommand{\explainabove}[2]{{\color{red}\overbrace{\color{black}#1}^{#2}}}

\usepackage{makecell}

\usepackage{anyfontsize}

\newcommand{\terminal}[1]{#1_{\scriptscriptstyle \!\infty}}
\newcommand{\scalarTerminal}{\mathfrak{A}}
\newcommand{\currency}[1][]{\mathdollar_{#1}}
\def \gibbs {\gamma}
\usepackage{tikz}
\usetikzlibrary{calc}
\newcommand{\convto}[1][]{\succ^{\! #1}}

\newcommand{\cat}{\rm cat}
\newcommand{\bat}{\rm bat}

\newcommand{\yieldcat}[1][0]{\yield[#1]_{\cat}}
\newcommand{\yieldbat}[1][0]{\yield[#1]_{\bat}}

\newcommand{\monotone}{\mathsf{M}}
\newcommand{\region}{\mathfrak{R}}

\begin{document}
	\title{Instability as a Quantum Resource}
	\date{\today}
	\author{\vspace{-5pt}Goni Yoeli}
	\email{yoeli.goni@campus.technion.ac.il}
  	\affiliation{%
		Department of Mathematics, Technion - Israel Institute of Technology, Haifa 3200000, Israel
	}
	\author{Gilad Gour}%
	\email{gilad.gour@technion.ac.il}
	\affiliation{%
		Department of Mathematics, Technion - Israel Institute of Technology, Haifa 3200000, Israel
	}%

\begin{abstract}
We consolidate coherence, athermality, and nonuniformity as sub-resources within an underlying quantum resource theory: \emph{instability}.
We formulate instability axiomatically as the transient information within a decaying physical system.
Specifying a decay mechanism (e.g., dephasing, thermalization) recovers these familiar resources as specific manifestations of instability.
We compute the one-shot distillation yield and dilution cost in various operational paradigms, and use them to pin down the extremal additive monotones.
In the asymptotic regime, we show that all conversion rates are governed by a single additive monotone, and thereby we establish a universal second law for instability.
\end{abstract}

\maketitle

Dissipative evolution drives a physical system from an \emph{unstable} state \(\rho\) toward a steady state \(\terminal \rho\), thereby assigning clear roles: \(\rho\) carries a resource, while \(\terminal{\rho}\) is resource-free.
Mechanisms such as decoherence, thermalization, and depolarization manifest coherence, athermality, and nonuniformity
 as resources.
The corresponding resource theories have been investigated extensively, yet largely independently~\cite{BCP2014,WY2016,MS2016,CG2016,YMG+2016,Chitambar2018,SKW+2018,RFXA2018,ZLY+2018,ZLY+2019,TEZP2019,RNBG2020,HFW2021,HHO2003,GMN+2015,HO2013,BHOR2013,FOR2015,WW2019,Gour2022}.

The long-time dynamics \(\Delta:\rho \mapsto \terminal{\rho}\) defines an idempotent (\(\Delta\circ\Delta = \Delta\)) known as the ``resource-destroying map'' (RDM)~\cite{LHL2017}.
A resource theory admitting an RDM satisfies specific structural properties~\cite{LHL2017,Gour2017,LBT2019,CG2019,RBTL2020,Gour2025}.

This Work inverts the RDM paradigm: rather than treating destruction as a feature of specific resources, we define \emph{instability} axiomatically as the resource induced by an arbitrary idempotent (``destruction'') channel. 
Encompassing the full spectrum of destruction mechanisms, the resource theory of instability unifies coherence, athermality, and nonuniformity (plus their conditional counterparts~\cite{NG2017,BRLW2017,MLA2019,JGW2025,GGH+2018,BGG2022,GWBG2024}) as sub-resource theories.
Previously regarded as stand-alone resources, they are now recast as \emph{forms of instability} and become interconvertible; for example, coherence can transform into athermality.
  \begin{figure}[b]
    \vspace{-15pt}
  \centering\includegraphics[width=0.47\textwidth]{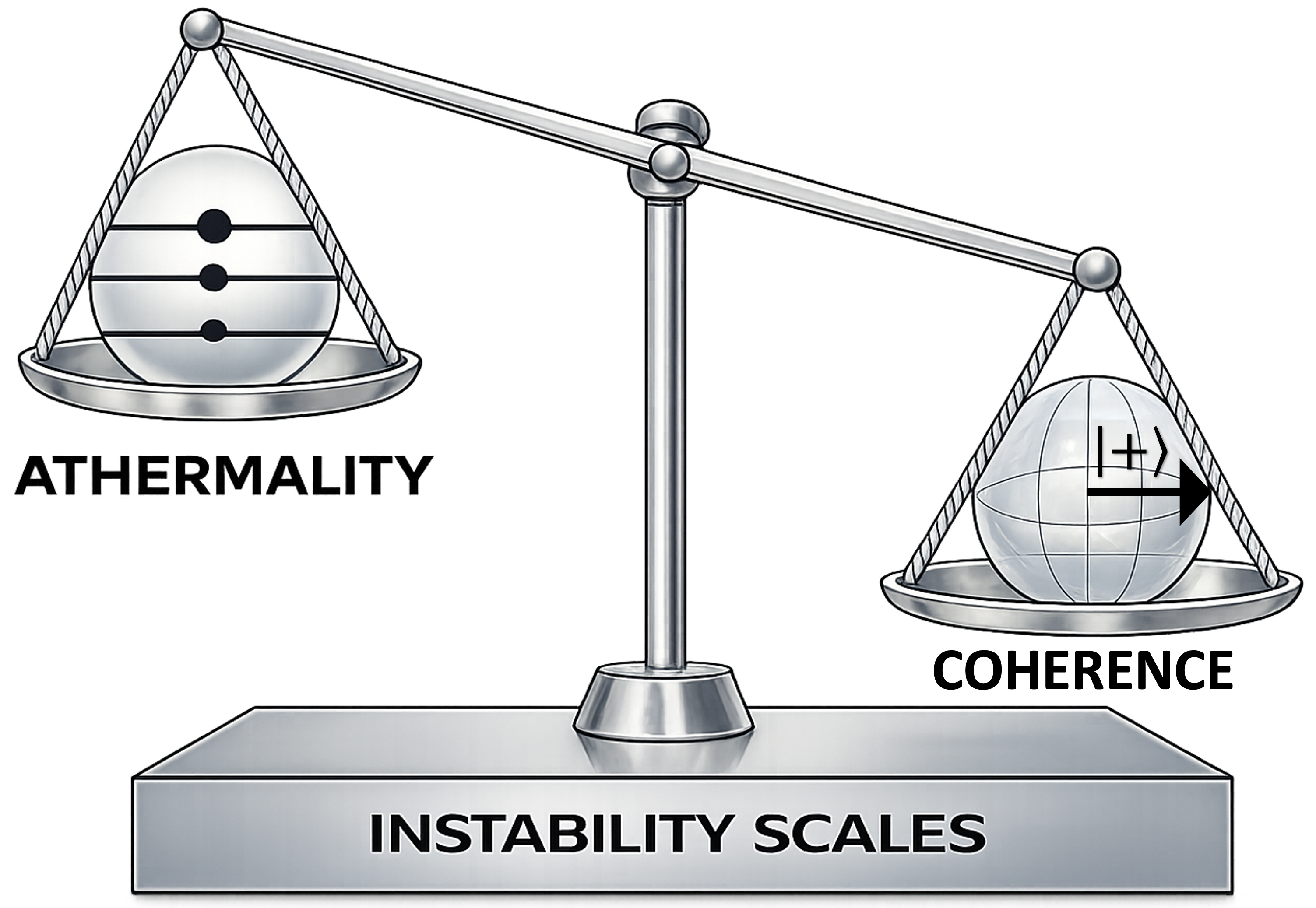}\vspace{-11pt}
  \caption{
   \!\!Athermality and coherence are now formally comparable as \emph{forms of instability}.
  \vspace{-13pt}}
\end{figure}

Diverse tasks such as purity concentration~\cite{HHO2003}, gambling with correlated sources~\cite{BGG2022}, and conditional thermodynamic work extraction~\cite{BRLW2017,MLA2019,JGW2025} thus emerge as instances of instability conversion.
The resulting resource-state hierarchy subsumes many familiar forms of majorization (standard \cite{HLP1929,LY1999,Partovi2011,FGG2013}, relative \cite{Blackwell1953,BST2019}, and conditional \cite{BGG2022,GWBG2024}).

Despite its wide scope, instability theory remains tightly structured.
Additive monotones arise generically; we identify the minimal and maximal ones and construct a three-parameter manifold of concrete examples in between them.
Operationally, we derive exact closed forms for the one-shot distillation yield and dilution cost.
We show that in the asymptotic regime, yield and cost of instability resources coincide and thereby establish a universal second law for instability.

\paragraph*{Framework.\!\!}
A quantum resource is defined relative to additional structure on the bare Hilbert space;
  entanglement relies on a partition into tensor factors, coherence on a distinguished orthonormal basis, and athermality on a Gibbs state (equivalently, on a Hamiltonian and environment temperature~\cite{BHOR2013}).
Unlike entanglement (see~\cite{Gour2017}), the extra structure in coherence and athermality  can be encoded by an idempotent channel; a \emph{dephaser} \(\rho \mapsto \sum_x \bra{x} \rho \ket x \ket{x}\! \bra x\) determines the basis \(\pc{\ket{x}}\), and a \emph{replacer} \(\rho \mapsto \tr [\rho] \gibbs\) targets the Gibbs state \(\gamma\).

The resource theory of instability therefore considers systems as pairs \((A,\Delta^{\!A})\), where \(A\cong \C^d\) and \(\Delta^{\!A}\) is an idempotent ``destruction'' channel.
We assume \(\Delta^{\!A}\) fixes at least one full-rank state, ensuring the Heisenberg dual \(\Delta^{\!A*}\) is a \emph{faithful conditional expectation} (a standard operator-algebraic concept, see Appendix~\ref{appendix:conditional-expectation} and~\cite{Takesaki1972,Petz1988,Kadison2004,AGG2002,Wolf2012,GR2022,JLR2023}).

A quantum resource theory is defined by its free operations~\cite{CFS2016,Gour2025}.
In instability theory, we take the free operations between systems \(A\) and \(B\) to be the \emph{destruction-covariant channels} \(\mathcal{N} \) obeying
\begin{equation}\label{eq:dcov}
 \mathcal{N}^{A\to B}\circ\Delta^{\!A}=\Delta^{\!B}\circ\mathcal{N}^{A\to B} \,.
\end{equation}
Equivalently, \(\mathcal{N}\) is \emph{dually resource non-generating}~\cite{LR2024}: \(\mathcal{N}\) maps \(\Delta^{\!A}\)-fixed states to \(\Delta^{\!B}\)-fixed states while \(\mathcal{N}^*\) sends \(\Delta^{\!B*}\)-fixed observables to \(\Delta^{\!A*}\)-fixed observables~\cite{TEZP2019}.

These channels were originally proposed in coherence theory as Dephasing-Covariant Operations (DIO)~\cite{CG2016,MS2016} (see also~\cite{RNBG2020}).
Ref.~\cite{LHL2017} later extended the destruction-covariant approach to resource theories with general, possibly non-physical RDMs (see also~\cite{Gour2017,LBT2019}).

\begin{table*}[t]

\footnotesize
\begin{ruledtabular}
\begin{tabular}{l||c|c|c|c|c|c}
 Instability & 
 \makecell{Nonuniformity \\ (purity)}
  & \makecell{Athermality (asymm-\\etric distinguishability) } & 
  Coherence
 & \makecell[c]{
  Cond.~non-\\uniformity \(A|B\)
 } &
  \makecell{
  Cond.~ather-\\ mality \(A|B\)
} &
  \makecell{
    Self-adjoint \\ instability
    }
\\
\hline
\makecell[l]{
  Destruction\\ \\\hspace{10pt} \(\Delta(\rho)\)
}
 &
\makecell{
  Depolarization \\ \\
  \(\u = I/d\)
}  &
\makecell{
  Replacer \\ \\
  \(\gamma\) (Gibbs state)
} &
\makecell{
  Dephaser \\ \\
  \(\sum_x \bra x \rho \ket x \ket x \!  \bra x\)
} &
\makecell{
  Depolarizing \(A\) \\ \\
\(\u^A \otimes \rho^B \)
} &
\makecell{
  Replacing \(A\)\\ \\
  \(\gamma^A \otimes \rho^B\) 
}&
\makecell{
  Trace-preserving \\ cond. expectation 
  \\
  \(\Delta(\rho)= \Delta^{\!*}(\rho)\)
}\\
\hline
\makecell[l]{
  Free \\ 
  operations
}
& \makecell{
  Uniformity- \\
  preserving \cite{FOR2015,WW2019}
} 
&
\makecell{
  Gibbs- \\
  preserving\\
   (GPO) \cite{FOR2015,WW2019}
}
& 
\makecell{
  Dephasing- \\
  covariant
  \\ (DIO) \cite{MS2016,CG2016}
}
& 
\makecell{
  Non-signaling\\ 
  cond.~uniformity \\
  preserving \cite{BGG2022,GWBG2024,JGW2025}
}
&
\makecell{ 
 Quantum-feedback-\\
 assisted GPO \cite{JGW2025}
}
& 
\makecell{
  Generalized \\
  DIO \cite{GR2024}
}
\\
\hline
\makecell[l]{
  Reversibility \\ proof
}
& \cite{HHO2003} & \cite{WW2019} & \cite{Chitambar2018} 
& \makecell{
  With battery: \cite{JGW2025}\\
  Full: this Work
} &
 \makecell{
  With battery: \cite{JGW2025}\\
  Full: this Work
  } &
   This Work
\end{tabular}
\end{ruledtabular}\vspace{-10pt}

\caption{
  Summary of properties for various sub-resources (forms) of instability.
Rows list the destruction map and the free (destruction-covariant) operations; the bottom row indicates where reversibility is established. In classical athermality, destruction-covariant operations (GPO) recover relative majorization \cite{Blackwell1953}; within a (conditional) nonuniformity system, they induce (conditional) majorization \cite{BGG2022,GWBG2024,JGW2025}.}\label{tab:vulnerable-resources}\vspace{-14pt}
\end{table*}
Restricting instability theory to specific destruction mechanisms yields sub-resources (forms) of instability, including coherence under DIO and athermality under Gibbs-preserving operations (GPO) \cite{FOR2015,WW2019}.
  Figure~\ref{fig:venn} presents the inclusion relations between these forms and Table~\ref{tab:vulnerable-resources} summarizes additional information, including how free operations emerge in each one.
Instability theory interlinks these isolated resources by allowing free operations between systems with distinct destruction mechanisms; for instance, if \(\Delta^{\!A}\) is a dephaser while \(\Delta^{\!B}\) is a replacer, \(\mathcal{N}\) transforms coherence into athermality.

  \begin{figure}[b]
    \vspace{-12pt}
  \centering\includegraphics[width=0.47\textwidth]{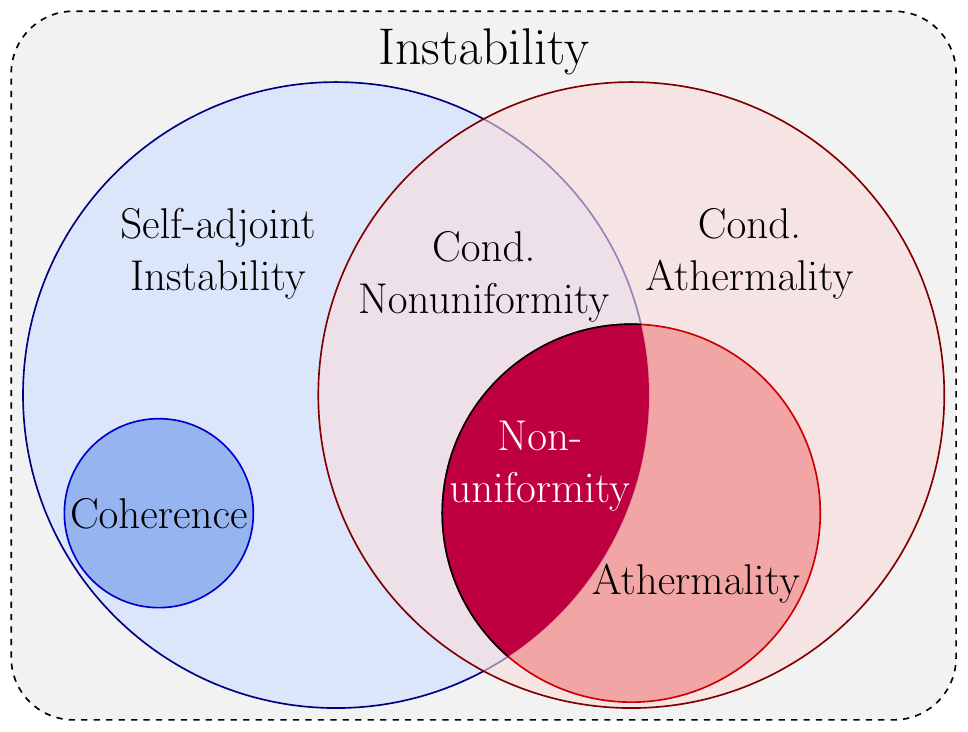}\vspace{-11pt}
  \caption{
    Venn diagram of sub-resources (forms) of instability.
    Conditioning on the trivial system (\(\C\)) recovers the unconditional forms of nonuniformity and athermality from their conditional counterparts.
    The sub-theory of instability on which \(\Delta\) is self-adjoint includes coherence and (conditional) nonuniformity; it was studied in~\cite{GR2024} as `coherence relative to subalgebra'.
    (Conditional) athermality is self-adjoint exactly when the Gibbs state is uniform \(\u = I/d\) (infinite temperature), in which case it reduces to nonuniformity.
  \vspace{-13pt}}\label{fig:venn}
\end{figure}

\paragraph*{Quantifying instability.\!}
Let \(\states\) be the collection of all states (across all systems).
A \emph{monotone} is a function \(\monotone:\states \to \R\) that does not increase under free channels.
On thermodynamic systems, \(\monotone\) takes the form of a \emph{quantum divergence}~\cite{GT2020} relative to the Gibbs state
  \(\monotone(\rho) = \D_\monotone \divergence{\rho}{\gamma}\).
Conversely, any quantum divergence \(\D\) extends to two canonic instability monotones~\cite{LHL2017}:
\begin{align}
  \D&\divergence{\rho}{\Delta(\rho)} \text{,~~~ and}\label{eq:endpoint-local}\\
 \D&\divergence{\rho}{\free} \coloneqq \inf_{\sigma\in\free}\D\divergence{\rho}{\sigma} = \inf_{\sigma\in\states}\D\divergence{\rho}{\Delta(\sigma)}\;.\label{optimized-extension}
\end{align}
\vspace{-11pt}
\\
Here \(\Delta:\states \to \states\) denotes the collective destruction map, acting on each system \(A\) according to its specified idempotent channel \(\Delta^{\!A}\) and \(\free\coloneqq \Delta(\states)\) denotes the family of all fixed states.
While \(\D\divergence{\rho}{\free}\) (known generally as \emph{the resource divergence}) is monotone under any \emph{resource-non-generating} channel~\cite{VP1998,Gour2025},
monotonicity of \(\D\divergence{\rho}{\Delta(\rho)}\) relies strictly on destruction-covariance \eqref{eq:dcov}.
Appendix~\ref{appendix:interpolation-optimized-direct} presents various monotone \(\D\)-extensions that span the gap from \(\D\divergence{\rho}{\free}\) up to \(\D\divergence{\rho}{\Delta(\rho)}\).

We focus on the family of \(\alpha\)--\(z\) Rényi divergences~\cite{AD2015}:
\begin{equation}\label{eq:alpha-z-renyi}
  \!\!D_{\alpha,z} \divergence{\rho}{\gibbs} \!=\! \frac{1}{\!\alpha\!-\!1\!} \!\log \tr \pb[\big]{(\rho^{\frac{1-\alpha}{2z}} \!\gibbs^{\frac \alpha z}\! \rho^{\frac{1-\alpha}{2z}})^z}, ~ (\alpha,z)\! \in \region\,,\!
\end{equation}
where \(\region \) is the parameter region in which the data-processing inequality (DPI) holds~\cite{Zhang2020}.
For particular parameter values within \(\region\), \(D_{\alpha,z}\) reduces to the Petz-Rényi family \(\bar D_\alpha = D_{\alpha,1}\) (\(\alpha\in [0,2]\))~\cite{Petz1985}, the sandwiched Rényi family \(\tilde D_\alpha =D_{\alpha,\alpha}\)~\cite{MDS+2013}, and to particular instances such as the Umegaki, min-, and max-relative entropies~\cite{Datta2009}:
\begin{align}
  D \divergence{\rho}{\gibbs}&= \bar D_{1} \divergence{\rho}{\gibbs}= \tr [\rho \log \rho] - \tr [\rho \log \gibbs]\;,
  \\
\dmin \divergence{\rho}{\gibbs} &= \bar D_{0}\divergence{\rho}{\gibbs}=   - \log \tr [\gibbs \rho^0]\;, \label{dmin-heisenberg}
   \\ \label{eq:def-dmax}
  \dmax \divergence{\rho}{\gibbs} &= \tilde D_\infty \divergence{\rho}{\gibbs} = \min \pc*{m\ge 0 \smallgiven 2^m\gibbs \ge \rho}\;.
\end{align}
Here \(\rho^0\) is the projector onto the support of \(\rho\) and ``\( \ge\)''  denotes the Löwner order. 

For the Umegaki relative entropy, a known chain rule  \(D\divergence{\rho}{\Delta(\sigma)}
=
D\divergence{\rho}{\Delta(\rho)}
+
D\divergence{\Delta(\rho)}{\Delta(\sigma)}\)
 implies that the optimizer of \(D \divergence{\rho}{\free} = \inf _{\sigma \in \free} D\divergence{\rho}{\sigma}\) is \(\sigma_* = \Delta(\rho)\) (Lemma 3.4 in~\cite{JLR2023}).
For the general \(\alpha\)--\(z\) Rényi divergence, in Appendix~\ref{appendix:optimization} (Theorem~\ref{thm:implicit-form-for-optimizer}) we characterize the optimizer as a solution of an implicit equation:
\begin{equation}\label{eq:implicit-form}
  \sigma_* =  \frac{\Delta\pa[\Big]{\pa[\big]{\sigma_*^{\frac{1-\alpha} {2 z}}\rho^{\frac \alpha {z}}\sigma_*^{\frac{1-\alpha} {2z}}}^z}} {\tr \pb[\big]{\pa[\big]{\sigma_*^{\frac{1-\alpha} {2 z}}\rho^{\frac \alpha {z}}\sigma_*^{\frac{1-\alpha} {2z}}}^z}} \;.
\end{equation}
In the Petz-Rényi case (\(z=1\)), we find \(\sigma_*\) explicitly, generalize the Umegaki chain rule---\(\bar D_\alpha \divergence{\rho}{\sigma} = \bar D_\alpha \divergence{\rho}{\sigma_*} + \bar D_\alpha \divergence{\sigma_*}{\sigma}\) for all free states \(\sigma\)---and establish 
\begin{equation}\label{eq:closed-form}
  \bar D_{\alpha} \divergence{\rho}{\free} 
  = \frac{1}{\alpha -1} \log \norm*{\Delta^{\!*}\!\pa[\Big]{\Delta(I)^{-\frac \alpha 2}\rho^\alpha\Delta(I)^{-\frac \alpha 2}}}
  _{\frac{1}{\alpha}}\,,
\end{equation}
where \(\norm{\cdot}_p\) denotes the Schatten \(p\)-(quasi) norm for \(p \in [1/2,\infty]\) (see Corollary~\ref{corollary:petz-renyi} in Appendix~\ref{appendix:renyi}).
In the particular case of \(\dmin\) (\(\alpha=0\)), Eq.~(\ref{eq:closed-form}) and its proof simplify significantly: by definition
\begin{align}\label{eq:dmin-dual-start}
  \dmin\divergence{\rho}{\free}
  &= -\log\sup_{\sigma\in\states}\tr[\Delta(\sigma)\rho^0] \\
  &= -\log\sup_{\sigma\in\states}\tr[\sigma\,\Delta^{\!*}\!\pa*{\rho^0}] \\
  &= -\log\bigl\|\Delta^{\!*}\!\pa*{\rho^0}\bigr\|_\infty\;,\label{eq:dmin-dual-final}
\end{align}
where \(\norm{X}_\infty\) is the largest eigenvalue of a positive operator \(X\).
\(\dmin\) admits a smoothed, operational counterpart: the \emph{\(\epsilon\)-hypothesis-testing divergence},
\begin{equation}\label{eq:def-hypo-testing}
D_{\!H}^\epsilon\divergence{\rho}{\gibbs}
=
-\log\!\!\!\!\!\!\min_{\begin{smallmatrix}
0\le\Gamma\le I\\
\tr[\rho\Gamma]\ge 1-\epsilon
\end{smallmatrix}}\!\!\!\!\!\!
\tr[\gibbs\Gamma]\;,\qquad \epsilon\in[0,1]\;.
\end{equation}
\vspace{-7pt}
\\
It quantifies how well \(\gibbs\) can be distinguished from \(\rho\) by a single test (\emph{quantum effect} \(0\le\Gamma\le I\)) in asymmetric settings~\cite{WR2012}.
Proceeding as in Eqs.~\eqref{eq:dmin-dual-start}--\eqref{eq:dmin-dual-final} (plus a standard minimax argument), we obtain
\begin{align}\label{eq:hypothesis-heisenberg}
  D_{\!H}^\epsilon\divergence{\rho}{\free}
  =
  -\log\!\!\!\!\!\min_{\begin{smallmatrix}
0\le\Gamma\le I\\
\tr[\rho\Gamma]\ge 1-\epsilon
\end{smallmatrix}}\!\!\!\!
\bigl\|\Delta^{\!*}\!\pa*{\Gamma}\bigr\|_\infty\;.
\end{align}\vspace{-7pt}
\\
We consider a version of the hypothesis-testing divergence~\cite{BHA+2020,BBG+2023} \(D_{\!H}^{\epsilon,\scalarTerminal\! }\divergence{\cdot}{\cdot}\) where the discriminating effects are restricted to 
  \(
  \scalarTerminal\! \coloneqq \!\pc{0 \le \Gamma \le I \smallgiven \Delta^{\!*}(\Gamma) \propto I}
  \).
For \(\Gamma\in \scalarTerminal\), \(\tr [\Delta(\sigma) \Gamma]\) is constant across all states \(\sigma\), i.e.,  \(\scalarTerminal\) cannot separate fixed states.
We can thus unambiguously define 
\begin{align}\label{def:restricted-HT}
  h^\epsilon (\rho) &\coloneqq D_{\!H}^{\epsilon,\scalarTerminal }\divergence{\rho}{\sigma}=
  -\log\!\!\!\!\!\min_{\begin{smallmatrix}
\Gamma\in\scalarTerminal\\
\tr[\rho\Gamma]\ge 1-\epsilon
\end{smallmatrix}}\!\!\!\!\!\!
\tr[\sigma\Gamma]\;,
\end{align}
\vspace{-7pt}
\\
where \(\sigma\) can be any free state.
\(h^\epsilon(\rho)\) cannot exceed \(D_{\!H}^\epsilon \divergence{\rho}{\free}\), since restricting the optimization domain to \(\scalarTerminal\) can only decrease the optimum value; 
in thermodynamics, however, only the Gibbs state is free, so the restriction to \(\scalarTerminal\) vacuous, reducing \(h^\epsilon(\rho)\) to \(D_{\!H}^\epsilon \divergence{\rho}{\gamma}\).

\paragraph*{one-shot analysis.\!\!}
In a quantum resource theory, distillation and dilution are defined relative to a \emph{currency}---a parameterized family of reference resource states \(\pc{\currency[m]}\).
For error tolerance \(\epsilon \ge 0\), the \(\epsilon\)\emph{-one-shot distillable yield} and \(\epsilon\)\emph{-one-shot dilution cost} of \(\rho\in \states\) are:
\begin{align}\label{def:one-shot-yield}
  \yield[\epsilon](\rho) &= \sup\pc*{m \smallgiven   \rho \convto[\epsilon] \currency[m]} \;,
  \\
  \cost[\epsilon](\rho) & =  \hspace{1.4pt}\inf \hspace{1.4pt}\pc*{m \smallgiven  \currency[m]\! \convto[\epsilon]\! \rho~}\;,\label{def:one-shot-cost}
\end{align}
where \(\rho \convto[\epsilon] \sigma\) for \(\rho, \sigma \in \states\) means that there exists a free channel \(\mathcal{N}\) with \(\mathcal{N}(\rho) \approx_\epsilon \sigma\) (i.e., when the trace distance between \(\mathcal{N}(\rho)\) and \(\sigma\) is at most \(\epsilon\)).
We additionally write \(\rho\sim \sigma\) if \(\rho \convto[0] \sigma\) and \(\sigma \convto[0] \rho\).

Instability theory admits multiple equivalent currency choices (\(\currency[m] \sim \currency[m]'\), 
see Appendix~\ref{appendix:numeraire}).
This includes currencies traditionally employed in sub-theories such as maximally coherent states (in coherence) or maximally entangled states (in conditional nonuniformity).
Among these theoretically interchangeable choices, we follow Ref.~\cite{WW2019} and take \(\currency[m]\) to live on a classical two-level thermodynamic system with Gibbs state \(\gibbs =2^{-m}\ket 0\! \bra 0 + (1-2^{-m})\ket 1\! \bra 1\), and set \(\currency[m] \coloneqq \ket 0\! \bra 0\) on that system.
This choice offers an operational advantage: distillation reduces to a binary measurement (POVM) and dilution to a preparation (cq) channel. With this, we obtain:
\begin{lemmaNS} \label{thm:one-shot}
   Let \(\epsilon \in [0,1]\) and \(\rho \in \states\). Then, 
\begin{align}
    \yield[\epsilon](\rho) & = h^{\epsilon}(\rho)\;, \label{eq:distill} \\
    \cost[\epsilon](\rho)&   =  \min_{\tau \approx_\epsilon \rho}\!\dmax\divergence{\tau}{\Delta(\tau)}\,.\label{eq:cost}
  \end{align}
\end{lemmaNS}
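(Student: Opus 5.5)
The plan is to argue directly from the structure of the currency system. Write \(C\) for the two-level system carrying \(\currency[m]=\ket0\!\bra0\), with destruction \(\Delta^{C}(X)=\tr[X]\,\gibbs_m\), where \(\gibbs_m=2^{-m}\ket0\!\bra0+(1-2^{-m})\ket1\!\bra1\). Its dual is \(\Delta^{C*}(Y)=\tr[\gibbs_m Y]\,I\). For both the yield and the cost, I will first turn destruction-covariance \eqref{eq:dcov} into a simple algebraic condition on a measurement (for distillation) or a preparation (for dilution). I will then show that one may restrict to measure-and-prepare channels without loss.

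\emph{Distillation.} Suppose \(\mathcal N^{A\to C}\) is free and \(\mathcal N(\rho)\approx_\epsilon\ket0\!\bra0\). Set \(\Gamma\coloneqq\mathcal N^*(\ket0\!\bra0)\), so \(0\le\Gamma\le I\). Dualizing \eqref{eq:dcov} gives
\[
\Delta^{\!A*}(\Gamma)=\mathcal N^*\Delta^{C*}(\ket0\!\bra0)=2^{-m}\mathcal N^*(I)=2^{-m}I,
\]
so \(\Gamma\in\scalarTerminal\) and \(\tr[\sigma\Gamma]=2^{-m}\) for every free \(\sigma\). Because the target is pure, the trace distance is at least \(1-\bra0\mathcal N(\rho)\ket0\); hence \(\tr[\rho\Gamma]\ge1-\epsilon\). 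It follows that \(m\le h^\epsilon(\rho)\).

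For the converse, take any \(\Gamma\in\scalarTerminal\) with \(\tr[\rho\Gamma]\ge1-\epsilon\), and write \(\Delta^{\!A*}(\Gamma)=2^{-m}I\). Define
\[
\mathcal N(X)=\tr[\Gamma X]\ket0\!\bra0+\tr[(I-\Gamma)X]\ket1\!\bra1 .
\]
Covariance holds because \(\mathcal N(\Delta^{\!A}X)=\tr[\Delta^{\!A*}(\Gamma)X]\ket0\!\bra0+\dots=\tr[X]\gibbs_m=\Delta^{C}\mathcal N(X)\). Since \(\mathcal N(\rho)\) is diagonal with \(\bra0\mathcal N(\rho)\ket0\ge1-\epsilon\), it lies within \(\epsilon\) of \(\ket0\!\bra0\). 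Taking the supremum over \(\Gamma\) gives \eqref{eq:distill}. The boundary case \(\Delta^{\!A*}(\Gamma)=0\) corresponds to \(m=\infty\) on both sides.

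\emph{Dilution.} Suppose \(\mathcal N^{C\to A}\) is free and \(\tau\coloneqq\mathcal N(\ket0\!\bra0)\approx_\epsilon\rho\). Applying \eqref{eq:dcov} to \(X=\ket0\!\bra0\) gives
\[
\Delta^{\!A}(\tau)=\mathcal N(\gibbs_m)=2^{-m}\tau+(1-2^{-m})\,\mathcal N(\ket1\!\bra1)\ge2^{-m}\tau .
\]
Hence \(\dmax\divergence{\tau}{\Delta(\tau)}\le m\), which proves that the cost is at least the right-hand side of \eqref{eq:cost}.

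For the converse, let \(\tau\approx_\epsilon\rho\) satisfy \(2^{-m}\tau\le\Delta(\tau)\) with \(m>0\). Put
\[
\omega=\frac{\Delta(\tau)-2^{-m}\tau}{1-2^{-m}},
\]
which is a state. Define the cq channel \(\mathcal N(X)=\bra0X\ket0\,\tau+\bra1X\ket1\,\omega\). Covariance is checked on both sides:
\begin{itemize}
\item \(\mathcal N(\Delta^{C}X)=\tr[X]\,\mathcal N(\gibbs_m)=\tr[X]\,\Delta(\tau)\);
\item idempotence of \(\Delta\) gives \(\Delta(\omega)=\Delta(\tau)\), so \(\Delta^{\!A}\mathcal N(X)=\tr[X]\,\Delta(\tau)\) as well.
\end{itemize}
If \(m=0\), then \(\tau=\Delta(\tau)\), and the choice \(\omega=\tau\) works.

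\emph{Main obstacle.} The step needing the most care is justifying that the infimum in \eqref{eq:cost} is a minimum. The set \(\{\tau:\tau\approx_\epsilon\rho\}\) is compact. The map \(\tau\mapsto\dmax\divergence{\tau}{\Delta(\tau)}\) is lower semicontinuous, because \(\{\tau: 2^{m}\Delta(\tau)-\tau\ge0\}\) is closed for each \(m\); it is also finite everywhere, because \(\Delta(\tau)\) dominates a multiple of \(\tau\), using faithfulness of \(\Delta^{\!*}\). This shows the minimizer exists and that the cost is attained.

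A second point, in the yield argument, is that \(\mathcal N(\rho)\) may be non-diagonal. This is handled by the pure-target fidelity bound used above, \(\tfrac12\|\mathcal N(\rho)-\ket0\!\bra0\|_1\ge1-\bra0\mathcal N(\rho)\ket0\), so no dephasing of the output is needed.
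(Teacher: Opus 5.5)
Your proof is correct and follows the paper's route. Distillation reduces to a binary measurement whose effect must satisfy $\Delta^{\!*}(\Gamma)=2^{-m}I$, and dilution reduces to a cq preparation channel that exists exactly when $2^{m}\Delta(\tau)\ge\tau$. Beyond this, you supply details the paper leaves implicit: treating arbitrary channels into the currency system via $\Gamma=\mathcal N^*(\ket0\!\bra0)$ and the pure-target bound, checking covariance of the constructed preparation channel through idempotence ($\Delta(\omega)=\Delta(\tau)$), and showing by compactness and lower semicontinuity that the infimum in the cost is attained.
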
\vspace{-5pt}
\noindent The derivations appear in Appendices~\ref{appendix:one-shot-yield} (Eq.~(\ref{eq:distill})) and~\ref{appendix:cost-derivation} (Eq.~(\ref{eq:cost})). 
Lemma~\ref{thm:one-shot} serves as the backbone for subsequent results (Lemma~\ref{lemma-catalytic} and Theorems~\ref{thm:extremal}--\ref{thm:second-law}).

\paragraph*{Multipartite instability.\!\!}
To analyze asymptotic and multipartite tasks, we impose the composition rule:
\\
\AssumptionAnchor{locality}{Locality of Destruction} \emph{The composite system \(AB\) is equipped with the composite destruction \(\Delta^{\!AB} \coloneqq \Delta^{\!A}\otimes\Delta^{\!B}\).}
Under locality, \(\Delta\) respects the tensor structure of \(\states\): 
\begin{equation}\label{eq:locality}
  \Delta(\rho\otimes\tau) = \Delta(\rho)\otimes\Delta(\tau)\;, \qquad \forall \,\rho,\tau \in \states\;.
\end{equation}
This assumption makes instability into a proper quantum resource theory~\cite{CFS2016,ZS2025} (verified in Appendix~\ref{appendix:fragility-theory}).
Moreover, the currency becomes additive: \(\currency[m]\otimes \currency[t] \sim \currency[m+t]\) (see Appendix~\ref{appendix:numeraire} or \cite{WW2019}).
This composition rule is standard for destruction-based resources (e.g., coherence) but excludes others like asymmetry~\cite{GS2008,MS2013,MS2014}.
\paragraph*{\!Assisted distillation.\!\!}
With the multipartite framework fixed, we define the \(\epsilon\)-\emph{one-shot catalytic yield} as
\begin{equation}\label{eq:def-yield-cat}
  \yieldcat[\epsilon](\rho)
  \coloneqq
  \sup\pc*{ m \smallgiven \exists\,\tau\in\states\!:\ \rho\otimes\tau \convto[\epsilon] \currency[m]\otimes\tau}.\!
\end{equation}
The \(\epsilon\)-\emph{one-shot battery-assisted yield} is  \cite{JGW2025}
\begin{equation}\label{eq:def-yield-bat}
   \yieldbat[\epsilon](\rho)
  \coloneqq
  \sup \pc*{m-t \smallgiven \rho\otimes\currency[t]\convto[\epsilon]\currency[m]}\,.
\end{equation}
This is a special case of catalytic yield with the catalyst restricted to the currency family (see Remark in Appendix~\ref{appendix:distillable-fragility}).
Therefore, we have
\begin{equation}\label{eq:inequality-of-yield}
  \yield[\epsilon](\rho) \le \yieldbat[\epsilon](\rho) \le \yieldcat[\epsilon](\rho)\;.
\end{equation}
Without catalysts, \(\yield[\epsilon](\rho) \le D_{\!H\!}^\epsilon \divergence{\rho}{\free}\) (by Eq.~(\ref{eq:distill})).
Catalysis attains this bound:
\begin{restatable}{lemmaNS}{LemmaCatalytic}\label{lemma-catalytic}
  Let \(\epsilon \in [0,1]\) and \(\rho \in \states\). Then,
  \begin{align}\label{eq:catalytic}
   \yieldbat[\epsilon](\rho)
  = \yield[\epsilon](\rho\otimes\currency[1])-1=  D_{\!H}^\epsilon\divergence{\rho}{\free}  \,.
\end{align}
At \(\epsilon=0\), catalytic and battery-assisted yields coincide:
\begin{equation} \label{eq:exact-catalytic}
  \yieldcat[0](\rho)=\yieldbat[0](\rho)=\dmin \divergence{\rho}{\free}\;.
\end{equation}
\end{restatable}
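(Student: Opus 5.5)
The first chain of equalities in Eq.~\eqref{eq:catalytic} will come from Lemma~\ref{thm:one-shot} applied to the composite state \(\rho\otimes\currency[1]\), so that everything reduces to computing \(h^\epsilon(\rho\otimes\currency[1])\). I will establish
\[
\yield[\epsilon](\rho)\le D_H^\epsilon\divergence{\rho}{\free},\qquad \yieldbat[\epsilon](\rho)\ge \yield[\epsilon](\rho\otimes\currency[1])-1,
\]
together with the main identity \(h^\epsilon(\rho\otimes\currency[1])=1+D_H^\epsilon\divergence{\rho}{\free}\), and finally an upper bound \(\yieldbat[\epsilon](\rho)\le D_H^\epsilon\divergence{\rho}{\free}\).

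\textbf{Step 1 (upper bound on the battery yield).} If \(\rho\otimes\currency[t]\convto[\epsilon]\currency[m]\), then \(\yield[\epsilon](\rho\otimes\currency[t])\ge m\). Hence \(m\le D_H^\epsilon\divergence{\rho\otimes\currency[t]}{\free}\), using \(h^\epsilon\le D_H^\epsilon(\cdot\|\free)\). Since \(\currency[t]=\ket0\!\bra0\) is pure, I will show directly from Eq.~\eqref{eq:hypothesis-heisenberg} and locality that
\[
D_H^\epsilon\divergence{\rho\otimes\currency[t]}{\free}\le D_H^\epsilon\divergence{\rho}{\free}+t.
\]
For the battery system, \(\Delta\) is the replacer onto \(\gamma_t=\mathrm{diag}(2^{-t},1-2^{-t})\). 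Given a test \(\Gamma\) on \(\rho\), take \(\Gamma\otimes\ket0\!\bra0\). Then \(\Delta^*(\Gamma\otimes\ket0\!\bra0)=2^{-t}\Delta^*(\Gamma)\otimes I\), whose sup-norm is \(2^{-t}\|\Delta^*(\Gamma)\|_\infty\). This gives the inequality \(\ge\) rather than \(\le\), so the direction must be checked carefully.

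For the needed \(\le\), take an arbitrary effect \(\Lambda\) on the composite system and compress it to \(\Gamma=\bra0\Lambda\ket0\). This is a valid test for \(\rho\) with the same success probability. Moreover,
\[
\Delta^*(\Lambda)\ge 2^{-t}\,\Delta^{*}(\Gamma)\otimes I,
\]
because \(\tr_{\gamma_t}\)-averaging of a positive operator dominates its \(\ket0\) block weighted by \(2^{-t}\). Thus \(\|\Delta^*(\Lambda)\|_\infty\ge 2^{-t}\|\Delta^*(\Gamma)\|_\infty\), giving \(m-t\le D_H^\epsilon\divergence{\rho}{\free}\).

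\textbf{Step 2 (achievability, the main step).} The goal is \(h^\epsilon(\rho\otimes\currency[1])\ge 1+D_H^\epsilon\divergence{\rho}{\free}\). Take an optimal \(\Gamma\) in Eq.~\eqref{eq:hypothesis-heisenberg} with \(\lambda=\|\Delta^*(\Gamma)\|_\infty\). The difficulty is that \(\Gamma\otimes\ket0\!\bra0\) has \(\Delta^*\)-image \(\tfrac12\Delta^*(\Gamma)\otimes I\), which is not proportional to \(I\). I will repair this with the \(\ket1\) branch of the one-bit battery, setting
\[
\Lambda=\Gamma\otimes\ket0\!\bra0+\Theta\otimes\ket1\!\bra1,\qquad \Theta=\frac{\lambda I-\Delta^*(\Gamma)}{\|\Delta^*\|}.
\]
Here \(\Theta\) must be chosen so that \(0\le\Theta\le I\) and \(\Delta^*(\Theta)=\lambda I-\Delta^*(\Gamma)\). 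Since \(\Delta^*\) is a unital conditional expectation, \(\Theta=\lambda I-\Delta^*(\Gamma)\) works: it satisfies \(0\le\Theta\le\lambda I\le I\) and, by idempotence, \(\Delta^*(\Theta)=\Theta\). With this choice,
\[
\Delta^*(\Lambda)=\tfrac12\bigl(\Delta^*(\Gamma)+\Theta\bigr)\otimes I=\tfrac{\lambda}{2}\,I,
\]
so \(\Lambda\in\scalarTerminal\). Its success probability is \(\tr[\rho\Gamma]\ge1-\epsilon\), and \(\tr[\sigma\Lambda]=\lambda/2\) for any free \(\sigma\). Therefore \(h^\epsilon(\rho\otimes\currency[1])\ge -\log\lambda+1\). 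Combined with Step 1 at \(t=1\), this yields equality.

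Then \(\yieldbat[\epsilon]\ge\yield[\epsilon](\rho\otimes\currency[1])-1\) holds by taking \(t=1\) in Eq.~\eqref{eq:def-yield-bat}, which closes the chain in Eq.~\eqref{eq:catalytic}.

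\textbf{Step 3 (\(\epsilon=0\)).} By Eq.~\eqref{eq:inequality-of-yield} and the \(\epsilon=0\) case of Eq.~\eqref{eq:catalytic} (noting \(D_H^0=\dmin\)), it remains to show \(\yieldcat[0](\rho)\le\dmin\divergence{\rho}{\free}\). Suppose \(\mathcal N(\rho\otimes\tau)=\currency[m]\otimes\tau\) exactly. Then \(\dmin(\cdot\|\free)\) is monotone and, I claim, additive on tensor products, because \(\Delta^*\) factorizes and the support projector of \(\rho\otimes\tau\) is \(\rho^0\otimes\tau^0\), so Eq.~\eqref{eq:dmin-dual-final} factorizes. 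Monotonicity gives
\[
\dmin\divergence{\rho}{\free}+\dmin\divergence{\tau}{\free}\ge m+\dmin\divergence{\tau}{\free}.
\]
Cancelling the finite term \(\dmin\divergence{\tau}{\free}\) yields \(m\le\dmin\divergence{\rho}{\free}\), and the sup-norm factorization \(\|X\otimes Y\|_\infty=\|X\|_\infty\|Y\|_\infty\) settles the additivity claim.
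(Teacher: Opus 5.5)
Your proof is correct and follows essentially the paper's route. Your effect $\Gamma\otimes\ket{0}\!\bra{0}+(\lambda I-\Delta^{\!*}(\Gamma))\otimes\ket{1}\!\bra{1}$ is exactly the paper's $\Upsilon$ from item~3 of Claim~\ref{effects-claim}, and your $\epsilon=0$ step is the same additivity-plus-monotonicity argument for $\dmin\divergence{\cdot}{\free}$. The only minor deviation is in the converse: you prove just the needed inequality $D_{\!H}^\epsilon\divergence{\rho\otimes\currency[t]}{\free}\le D_{\!H}^\epsilon\divergence{\rho}{\free}+t$ in the Heisenberg picture by compressing onto the $\ket{0}$ block, and combine it with $\yield[\epsilon]=h^\epsilon\le D_{\!H}^\epsilon$, whereas the paper proves the full equality state-by-state (Claim~\ref{claim:hypo-splits-with-battery}) and applies monotonicity of $D_{\!H}^\epsilon\divergence{\cdot}{\free}$ to the output.
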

The proof (Appendix~\ref{appendix:distillable-fragility}) follows Ref.~\cite{JGW2025}.
\paragraph*{Additive monotones.\!}
Under 
{locality}, instability theory admits a rich family of \emph{additive} monotones:
monotones  \(\monotone: \states \to \R\) that satisfy
\(\monotone(\rho \otimes \tau)=\monotone(\rho)+\monotone(\tau)\) for all \(\rho,\tau\in\states\).
For example, whenever \(\D\divergence{\cdot}{\cdot}\) is an additive divergence, Eq.~(\ref{eq:locality}) immediately ensures that the extension
\(\D\divergence{\rho}{\Delta(\rho)}\) is an additive instability monotone.
Does the same remain true for \(\D\divergence{\rho}{\free}\)?

For \(\bar D_\alpha \divergence{\rho}{\free}\), Eq.~(\ref{eq:closed-form}) confirms additivity immediately:
the map \(\rho \mapsto \rho^{\alpha}\) and the Schatten norm factorize on tensor products, as well as the destruction map and its dual \(\Delta^{\!*}\). 
For general \(D_{\alpha,z} \divergence{\rho}{\free}\) additivity follows from the optimizer characterization in Eq.~(\ref{eq:implicit-form})  (see Lemma~\ref{thm:optimized-renyi} in Appendix~\ref{appendix:renyi}).
The next theorem exhibits a continuum of additive monotones extending \(D_{\alpha,z}\).
\begin{restatable}{theoremNS}{ThmRenyi}\label{thm:3-renyi}
  Let \(\alpha,z\) be parameters in the DPI region \(\region\),
   let \(\lambda\in[0,1]\) and for \(\rho\in \states\), let \(X_\rho = \rho^{\frac{\alpha}{2z}} \Delta(\rho)^{\frac{\lambda(1-\alpha)}{2z}}\) and \(X_\rho^*\) the adjoint. Then \vspace{-6pt}
\begin{equation}\label{eq:3-parameter-entropy}
  \begin{aligned}
\!\!\monotone_{\alpha,z}^\lambda(\rho)\! \coloneqq\! 
\inf_{\sigma\in\free}\;
\frac{1}{\!\alpha-1\!} \log
\tr\pb[\big]{\pa[\big]{X_\rho \sigma^{(1-\lambda)\frac{1-\alpha}{z}}X_\rho^*}^z}\;,
  \end{aligned}
\end{equation}\vspace{-10pt}
\\
  is an additive instability monotone.
\end{restatable}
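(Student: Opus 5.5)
The plan is to handle the three defining properties (well-definedness, monotonicity, additivity) separately, and to reduce everything to a property of a three-argument trace functional. For states \(\rho\) and free states \(\omega,\sigma\), set
\[
Q_{\alpha,z}^\lambda(\rho;\omega,\sigma)\coloneqq\tr\pb[\big]{\pa[\big]{\rho^{\frac{\alpha}{2z}}\omega^{\frac{\lambda(1-\alpha)}{2z}}\sigma^{(1-\lambda)\frac{1-\alpha}{z}}\omega^{\frac{\lambda(1-\alpha)}{2z}}\rho^{\frac{\alpha}{2z}}}^z}\;,
\]
so that \(\monotone_{\alpha,z}^\lambda(\rho)=\inf_{\sigma\in\free}\frac{1}{\alpha-1}\log Q_{\alpha,z}^\lambda(\rho;\Delta(\rho),\sigma)\).

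\emph{Sanity check at the endpoints.} Since \(YY^*\) and \(Y^*Y\) have the same nonzero spectrum, the trace may be rewritten with \(\rho^{\alpha/z}\) sandwiched between the \(\omega,\sigma\) powers. At \(\lambda=0\) this recovers \(D_{\alpha,z}\divergence{\rho}{\free}\). At \(\lambda=1\) (where \(\sigma^0=I\) for a full-rank free state) it recovers \(D_{\alpha,z}\divergence{\rho}{\Delta(\rho)}\). So \(\monotone^\lambda_{\alpha,z}\) interpolates the two canonical extensions (\ref{eq:endpoint-local}) and (\ref{optimized-extension}).

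\emph{Monotonicity.} Let \(\mathcal N\) be destruction-covariant. By (\ref{eq:dcov}), \(\Delta(\mathcal N(\rho))=\mathcal N(\Delta(\rho))\), and \(\mathcal N(\sigma)\in\free\) whenever \(\sigma\in\free\). Restricting the infimum defining \(\monotone^\lambda_{\alpha,z}(\mathcal N(\rho))\) to free states of the form \(\mathcal N(\sigma)\) gives
\[
\monotone^\lambda_{\alpha,z}(\mathcal N\rho)\le\inf_{\sigma\in\free}\tfrac{1}{\alpha-1}\log Q^\lambda_{\alpha,z}(\mathcal N\rho;\mathcal N\Delta\rho,\mathcal N\sigma)\;.
\]
It therefore suffices to prove a data-processing inequality for \(Q\) applied simultaneously to all three arguments. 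That is, \(Q\) should not increase for \(\alpha>1\), and not decrease for \(\alpha<1\), under any CPTP map.

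I will obtain this DPI in the standard way, from three ingredients. The first is unitary invariance. The second is invariance under appending a common ancilla \(\otimes\tau\) to all arguments; this works because the exponents sum to the right degrees, \(\frac{\alpha}{z}+\frac{\lambda(1-\alpha)}{z}+\frac{(1-\lambda)(1-\alpha)}{z}=\frac1z\). The third is joint concavity (for \(\alpha<1\)) or joint convexity (for \(\alpha>1\)) of \((\rho,\omega,\sigma)\mapsto Q\). With these three facts, the partial trace is handled by averaging over a unitary twirl, exactly as in the usual proof of DPI for \(D_{\alpha,z}\).

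\emph{Main obstacle: joint concavity/convexity in three arguments.} This is the step I expect to be hardest. My first attempt is Zhang's variational method behind the region \(\region\)~\cite{Zhang2020}. For \(\alpha<1\), I would write
\[
Q=\max_{H\ge0}\pc*{z\tr[H\,\cdot]-(z-1)\tr[H^{z/(z-1)}]}
\]
and express the inner quantity \(\tr[\rho^{\alpha/z}K]\), with \(K=\omega^{p}\sigma^{2q}\omega^{p}\) on the dual side, as a geometric-mean-type expression.

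The key point is that \(\omega\mapsto\omega^{\lambda s}\) and \(\sigma\mapsto\sigma^{(1-\lambda)s}\), with \(s=\frac{1-\alpha}{z}\), split the single exponent \(s\) used in the two-variable case into a convex combination. This lets one apply the two-variable concavity twice: first in \((\rho,\sigma)\) with the effective weight \(\omega\) fixed, then in \(\omega\). Lieb-type concavity of \((A,B)\mapsto\tr[K^*A^{p}KB^{q}]\) for \(p+q\le1\) gives the base case. The hypothesis \((\alpha,z)\in\region\) should guarantee that each of the two split exponents remains in the admissible range. If this direct splitting fails, the fallback is to prove monotonicity only for \(\Delta\)-covariant maps. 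That route uses the fact that \(\omega,\sigma\) lie in the range of the conditional expectation \(\Delta\), and reduces to Stinespring dilations through the Heisenberg picture \(\Delta^{\!*}\).

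\emph{Additivity.} The inequality \(\monotone(\rho\otimes\tau)\le\monotone(\rho)+\monotone(\tau)\) is immediate: choose product free states \(\sigma\otimes\sigma'\) and use (\ref{eq:locality}) together with factorization of powers and traces.

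For the reverse inequality I follow the route of Lemma~\ref{thm:optimized-renyi}. First, I derive the stationarity condition for the optimizer \(\sigma_*\) over the convex set \(\free\), which gives an implicit equation analogous to (\ref{eq:implicit-form}) with \(\Delta(\rho)\)-dependent sandwiching. Next, I observe that the product \(\sigma_*^\rho\otimes\sigma_*^\tau\) satisfies the corresponding equation for \(\rho\otimes\tau\), because \(\Delta\) factorizes. Finally, the concavity/convexity in \(\sigma\) established in the monotonicity step makes the objective convex in \(\sigma\) after taking \(\frac{1}{\alpha-1}\log\), once one restricts to the relevant power. Hence any stationary point is a global minimizer, which yields equality.
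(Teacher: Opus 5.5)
\textbf{The gap is in the monotonicity step, and the missing piece is not just a proof.} You reduce to a data-processing inequality for $Q^\lambda_{\alpha,z}(\rho;\omega,\sigma)$ with three independent arguments, to be derived from joint concavity/convexity. Both claims are false.

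Take $z=1$, $\alpha\in(0,1)$, $\lambda\in(0,1)$. Then $Q=\tr[\rho^{\alpha}\omega^{a}\sigma^{b}\omega^{a}]$ with $a=\lambda(1-\alpha)/2>0$ and $b=(1-\lambda)(1-\alpha)>0$. On a qubit, let $\rho=|0\rangle\langle 0|$, $\sigma=|1\rangle\langle 1|$, $\omega=|+\rangle\langle +|$. Then $Q=|\langle 1|+\rangle\langle +|0\rangle|^2=1/4$. The dephasing channel $\mathcal N(X)=\frac12(X+ZXZ)$ fixes $\rho$ and $\sigma$ and sends $\omega$ to $I/2$, after which $Q=2^{-2a}\tr[\rho^\alpha\sigma^b]=0$. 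So $Q$ strictly decreases although $\alpha<1$, and the same holds for full-rank perturbations by continuity.

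Since $\mathcal N$ averages two unitaries and $\rho,\sigma$ are $Z$-invariant, this also shows $Q$ is not even concave in $\omega$ alone: the midpoint of $|+\rangle\langle+|$ and $|-\rangle\langle-|$ gives $0<1/4$. Separately, concavity in $(\rho,\sigma)$ for each fixed $\omega$ plus concavity in $\omega$ would not give joint concavity anyway. The ``splitting the exponent'' geometric-mean intuition only works when $\omega$ and $\sigma$ commute.

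Your fallback names no mechanism for exactly the hard case. For a noncommutative fixed-point algebra, $\Delta(\rho)$ and a free $\sigma$ need not commute.

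\textbf{The missing idea is to avoid any three-matrix DPI.} When $\sigma$ commutes with $\Delta(\rho)$, the sandwich $\Delta(\rho)^{\cdots}\sigma^{\cdots}\Delta(\rho)^{\cdots}$ is a power of a weighted geometric mean. The objective then becomes an ordinary two-argument $D_{\alpha,z}(\rho\|\sigma\#_\lambda\Delta(\rho))$, up to relabeling $\lambda\leftrightarrow 1-\lambda$. Monotonicity then follows from four ingredients, as in the paper's claim on intermediate $\D$-extensions:
\begin{itemize}
\item the usual DPI of $D_{\alpha,z}$;
\item Ando's inequality $\mathcal N(X\#_\lambda Y)\le\mathcal N(X)\#_\lambda\mathcal N(Y)$;
\item L\"owner antimonotonicity in the second argument;
\item destruction-covariance.
\end{itemize}

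To enforce commutation in general, the paper passes to $\rho^{\otimes n}$ and replaces $\sigma$ by $\Delta(\hat\omega_n)$, where $\hat\omega_n$ is the Hayashi--Tomamichel universal symmetric state. $\Delta(\hat\omega_n)$ commutes with every permutation-invariant free state, including $\Delta(\rho)^{\otimes n}$. Moreover, $g_n\hat\omega_n$ dominates every permutation-invariant state, with $\log g_n=o(n)$. Together with additivity this gives $\monotone^\lambda_{\alpha,z}(\rho)=\lim_n\frac1n\inf_{\sigma\in\free\cap\mathrm{Sym}}D_{\alpha,z}(\rho^{\otimes n}\|\sigma\#_\lambda\Delta(\rho)^{\otimes n})$. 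The two-argument argument then applies with effective destruction $\Delta\circ\mathcal G_n$.

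\textbf{Additivity.} Your argument is essentially the paper's: a fixed-point equation for the optimizer plus locality. However, the convexity/concavity in $\sigma$ it relies on should come from Carlen--Frank--Lieb for the one-variable functional $\sigma\mapsto\tr[(\sigma^{r/2}X_\rho^*X_\rho\sigma^{r/2})^z]$ with $r=(1-\lambda)(1-\alpha)/z$. It cannot come from the false joint concavity.
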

 \(\monotone_{\alpha,z}^\lambda\) interpolates between
\(D_{\alpha,z}\divergence{\rho}{\free}\) (at \(\lambda=0\)) and \(D_{\alpha,z}\divergence{\rho}{\Delta(\rho)}\) (at \(\lambda=1\)).
Our proof (Appendix~\ref{appendix:renyi}) builds upon the conditional-entropy framework of Rubboli \emph{et al.}~\cite{RGT2025,rubboli-phd-2025} and generalizes it.

After constructing a three-parameter manifold of examples, we turn to a structural study of \emph{all} additive monotones.
We normalize all monotones via \(\monotone(\currency[1]) = 1\) to fix the scale (aligning with the normalization of relative entropies~\cite{GT2020}).

\begin{restatable}{theoremNS}{ThmExtremal}
  \label{thm:extremal}
\(\dmin\divergence{\rho}{\free}\) and \(\dmax\divergence{\rho}{\Delta(\rho)}\) are, respectively, the minimal and maximal normalized additive monotones of instability.
\end{restatable}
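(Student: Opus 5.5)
The plan has two parts. First we check that both extremal candidates are normalized additive monotones. Then we show that every normalized additive monotone \(\monotone\) is sandwiched between them, using the one-shot results of Lemma~\ref{thm:one-shot} and Lemma~\ref{lemma-catalytic} with a standard amortization over tensor powers.

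\emph{Membership.} Additivity of \(\dmin\divergence{\rho}{\free}\) follows from Eq.~(\ref{eq:dmin-dual-final}): \(\Delta^{\!*}\) factorizes by locality (\ref{eq:locality}), the support projection of \(\rho\otimes\tau\) is \(\rho^0\otimes\tau^0\), and the operator norm is multiplicative on tensor products of positive operators. Additivity of \(\dmax\divergence{\rho}{\Delta(\rho)}\) follows from Eq.~(\ref{eq:locality}) and additivity of \(\dmax\). Monotonicity holds for both, since they are the canonical extensions (\ref{optimized-extension}) and (\ref{eq:endpoint-local}) of divergences satisfying DPI. For normalization, \(\currency[1]=\ket0\!\bra0\) and \(\Delta(\currency[1])=\gamma\) with \(\gamma=\tfrac12 I\). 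Then \(\dmax\divergence{\currency[1]}{\gamma}=1\) and \(\dmin\divergence{\currency[1]}{\free}=-\log\tr[\gamma\ket0\!\bra0]=1\), because the only free state of the currency system is \(\gamma\).

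\emph{Upper bound.} Fix \(n\) and apply Eq.~(\ref{eq:cost}) at \(\epsilon=0\). Exact dilution \(\currency[m]\convto[0]\rho^{\otimes n}\) is possible with \(m=\dmax\divergence{\rho^{\otimes n}}{\Delta(\rho^{\otimes n})}=n\,\dmax\divergence{\rho}{\Delta(\rho)}\). If the minimizing \(m\) is not an integer, we use that the infimum is attained, or take \(m\) slightly larger. Monotonicity and additivity then give
\[
n\,\monotone(\rho)=\monotone(\rho^{\otimes n})\le\monotone(\currency[m]).
\]
This requires \(\monotone(\currency[m])=m\) for real \(m\ge0\). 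From additivity and \(\currency[m]\otimes\currency[t]\sim\currency[m+t]\) we get \(\monotone(\currency[k])=k\) for integers \(k\), and \(\monotone(\currency[p/q])=p/q\) for rationals. Since \(\currency[m]\convto[0]\currency[m']\) for \(m\ge m'\), the map \(m\mapsto\monotone(\currency[m])\) is monotone nondecreasing, so \(\monotone(\currency[m])=m\) for all real \(m\) by squeezing between rationals. Dividing by \(n\) (here \(n=1\) already suffices) yields \(\monotone(\rho)\le\dmax\divergence{\rho}{\Delta(\rho)}\).

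\emph{Lower bound.} This is the main obstacle, because plain distillation gives only \(h^0(\rho)\), which may be strictly smaller than \(\dmin\divergence{\rho}{\free}\). We use the battery form of Lemma~\ref{lemma-catalytic} at \(\epsilon=0\): \(\yieldbat[0](\rho)=\dmin\divergence{\rho}{\free}\). So for every \(\delta>0\) there exist \(m,t\) with \(\rho\otimes\currency[t]\convto[0]\currency[m]\) and \(m-t\ge\dmin\divergence{\rho}{\free}-\delta\). Additivity and monotonicity give
\[
\monotone(\rho)+t\ge m.
\]
Hence \(\monotone(\rho)\ge\dmin\divergence{\rho}{\free}-\delta\), and letting \(\delta\to0\) gives the lower bound. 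Should the supremum in \(\yieldbat[0]\) be attained only in the limit, the argument already covers that case. If the exact (\(\epsilon=0\)) battery statement needed a finite-\(n\) regularization, we would instead apply the argument to \(\rho^{\otimes n}\) and use additivity of \(\dmin\divergence{\cdot}{\free}\) shown above. The remaining care point is justifying \(\monotone(\currency[m])=m\) for real \(m\), which the monotone-squeezing argument handles.
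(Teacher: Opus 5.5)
Your proposal is correct and follows essentially the same route as the paper. The upper bound comes from \(\cost[0](\rho)=\dmax\divergence{\rho}{\Delta(\rho)}\) together with \(\monotone(\currency[m])=m\), which the paper proves exactly as you do (via rationals and monotonicity in \(m\)), and the lower bound comes from the battery characterization in Lemma~\ref{lemma-catalytic}; the only differences are cosmetic, namely that the paper fixes the battery to \(\currency[1]\) (using \(\yield[0](\rho\otimes\currency[1])-1=\dmin\divergence{\rho}{\free}\)) instead of taking a supremum over \(t\), and that your tensor-power amortization is unnecessary, as you note yourself.
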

\vspace{-3pt}
 \noindent In particular, for all additive monotones \(\monotone\) and all \(\rho\in \states\),\vspace{-5pt}
\begin{equation}\label{eq:optimal-entropies}
  \dmin\divergence{\rho}{\free} \;\le\; \frac {\monotone(\rho)} {\monotone (\currency[1])} \;\le\; \dmax\divergence{\rho}{\Delta(\rho)}\;,
\end{equation}\vspace{-12pt}
\\
Theorem~\ref{thm:extremal} subsumes several known results.
Specialized to thermodynamics, it yields the extremality of \(\dmin \divergence{\rho}{\sigma}\) and \(\dmax \divergence{\rho}{\sigma}\) among relative entropies~\cite{GT2020,GT2021}.
For conditional nonuniformity, it reproduces the analogous claim for conditional entropies (Theorem~3 of~\cite{JGW2025}).
The proof of Theorem~\ref{thm:extremal} directly uses the operational characterizations of \(\dmax \divergence{\rho}{\Delta(\rho)}\) and \(\dmin \divergence{\rho}{\free}\) given in Eqs.~(\ref{eq:cost})  and (\ref{eq:exact-catalytic}) (see Appendix~\ref{appendix:distillable-fragility}).

 \paragraph*{Weak additivity and regularization.\!\!}\(\dmax\divergence{\rho}{\Delta(\rho)}\) remains maximal among the broader family of \emph{weakly} additive monotones, i.e., those satisfying \(\monotone(\rho^{\otimes n}) = n \monotone(\rho)\) for all \(\rho\in \states\) and \(n \in \mathbb{N}\).
 However, the minimal weakly additive monotone is not \(\dmin\divergence{\rho}{\free}\) but the regularized monotone \(h_\infty^0(\rho)\).
 Here, for a monotone \(\monotone\), the regularized version is
 \( \monotone_\infty(\rho) \coloneqq \lim_{n \to \infty} \frac{1}{n} \monotone(\rho^{\otimes n})\) when this limit exists (which happens for \(\monotone=h^0\)). 
In that case, \(\monotone_\infty\) is weakly additive. 
If the limit is not guaranteed to exist, we define \(\underline{\monotone}\) and \(\overline{\monotone}\), using lower and upper limits.

\paragraph*{Asymptotic regime.\!\!} 
The optimal \emph{reliable} distillable yield and dilution cost rates are traditionally defined via these lower and upper limits as  
\(\lim_{\epsilon \to 0^+} \underline {\yield[\epsilon]}\) and \(\lim_{\epsilon \to 0^+} \overline{\cost[\epsilon]}\)~\cite{Gour2025}.
A resource theory is \emph{asymptotically reversible} when these rates coincide, yielding a single measure that governs all asymptotic resource transitions.

Asymptotically reversible resource theories are uncommon: entanglement is irreversible~\cite{HH1998,BPR+2000,VC2001,LR2023} (beyond special classes such as pure states~\cite{BBPS1996,PR1997}).
More broadly, quantum resource theories typically admit only converse bounds on conversion rates~\cite{DH1999,SH2006,HOH2002} and full reversibility requires enhancing the operational model~\cite{BP2008,BP2009,BP2010a,BG2015,RTNA2025}.

Remarkably, several forms of instability were shown to be reversible (fully or partially, see Table~\ref{tab:vulnerable-resources}). 
Our framework yields a stronger, fully general result.
\begin{theoremNS}
  \label{thm:second-law}
  Let \(\epsilon\in (0,1)\) and \(\rho\in \states\). Then 
 \begin{align} \label{eq:second-law}
 \yield[\epsilon]_\infty(\rho) = 
   \cost[\epsilon]_\infty(\rho) = D \divergence{\rho}{\Delta(\rho)}\;.\!\!
 \end{align}
\end{theoremNS}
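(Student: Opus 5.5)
Lemma~\ref{thm:one-shot} gives exact one-shot formulas, so the plan is to regularize them via quantum Stein / asymptotic equipartition results. We show
\[
\lim_{n\to\infty}\tfrac1n h^\epsilon(\rho^{\otimes n}) = D\divergence{\rho}{\Delta(\rho)}
\]
and
\[
\lim_{n\to\infty}\tfrac1n \min_{\tau\approx_\epsilon\rho^{\otimes n}}\dmax\divergence{\tau}{\Delta(\tau)} = D\divergence{\rho}{\Delta(\rho)}
\]
for every fixed \(\epsilon\in(0,1)\). Each limit splits into an achievability bound and a converse bound.

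\textbf{Cost.} For the upper bound, choose \(\tau\) as the smoothing of \(\rho^{\otimes n}\) used in the standard smooth-max AEP relative to the fixed reference \(\Delta(\rho)^{\otimes n}\), i.e.\ \(\dmax\divergence{\tau}{\Delta(\rho)^{\otimes n}}\le n D\divergence{\rho}{\Delta(\rho)}+O(\sqrt n)\). The difficulty is that the reference must be \(\Delta(\tau)\), not \(\Delta(\rho^{\otimes n})\). I would handle this by twirling: \(\rho^{\otimes n}\) and \(\Delta(\rho)^{\otimes n}\) are permutation invariant, so the smoothing can be taken permutation invariant. Combining this with a pinching in the eigenbasis of \(\Delta(\rho)^{\otimes n}\) (which costs only \(O(\log n)\) and commutes with \(\Delta\) where needed), choose \(\tau\le 2^{m}\Delta(\rho)^{\otimes n}\) and then mix \(\tau\) with a small multiple of \(\Delta(\rho)^{\otimes n}\) so that \(\Delta(\tau)\) dominates a constant fraction of \(\Delta(\rho)^{\otimes n}\). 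This yields \(\dmax\divergence{\tau}{\Delta(\tau)}\le m+O(1)\).

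For the lower bound I would use monotonicity rather than a direct estimate. Asymptotic cost is at least asymptotic yield, since a cost protocol followed by a yield protocol converts \(\currency[m]\) into \(\currency[m']\) with error below \(1\) and \(m'\le m+o(n)\), using the additivity \(\currency[m]\otimes\currency[t]\sim\currency[m+t]\). It therefore suffices to bound the yield from below by \(D\) and from above by \(D\), and then also to bound the cost from above by \(D\).

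\textbf{Yield upper bound.} From Eq.~\eqref{eq:inequality-of-yield} and Lemma~\ref{lemma-catalytic},
\[
h^\epsilon(\rho^{\otimes n})\le D_{\!H}^\epsilon\divergence{\rho^{\otimes n}}{\free}\le D_{\!H}^\epsilon\divergence{\rho^{\otimes n}}{\Delta(\rho)^{\otimes n}}.
\]
Quantum Stein's lemma, including its strong converse for fixed \(\epsilon<1\), then gives the limit \(D\divergence{\rho}{\Delta(\rho)}\). A route that is less delicate in \(\epsilon\) uses \(D_{\!H}^\epsilon\le \frac{\alpha}{\alpha-1}\ldots\), i.e.\ the standard bound of \(D_H^\epsilon\) by sandwiched R\'enyi divergences with \(\alpha>1\), applied to \(\tilde D_\alpha\divergence{\cdot}{\free}\), which is additive by Theorem~\ref{thm:3-renyi}. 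Taking \(\alpha\to1\) and using the Umegaki chain rule \(D\divergence{\rho}{\free}=D\divergence{\rho}{\Delta(\rho)}\) finishes this step.

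\textbf{Yield lower bound (main obstacle).} The test must lie in \(\scalarTerminal\), i.e.\ satisfy \(\Delta^{*}(\Gamma)\propto I\). This is much more restrictive than the optimal Stein test. I would start from the Stein-optimal test \(P_n\) for \(\rho^{\otimes n}\) versus \(\Delta(\rho)^{\otimes n}\), taken as a spectral projector of the pinched operator, so that \(\tr[\Delta(\rho)^{\otimes n}P_n]\le 2^{-n(D-\delta)}\). Then form
\[
\Gamma=P_n-\Delta^{*}(P_n)+c I,
\]
suitably rescaled, or more concretely
\[
\Gamma = \bigl(\Delta^{*}(P_n)\bigr)^{-1/2}\text{-free}\ \text{combination } I-\tfrac{1}{\lambda}\Delta^*(I-P_n),
\]
and check that it stays in \([0,I]\) while \(\tr[\rho^{\otimes n}\Gamma]\) remains near \(1-\epsilon\). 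The key estimate needed is that \(\|\Delta^{*}(P_n)\|_\infty\) is exponentially small, i.e.\ \(P_n\) is small in every fixed state, not just in \(\Delta(\rho)^{\otimes n}\).

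I expect this to fail for general fixed states. As a fallback, I would use the achievability of the battery-assisted yield, which by Lemma~\ref{lemma-catalytic} equals \(D_H^\epsilon\divergence{\rho^{\otimes n}}{\free}\). Its asymptotic rate is \(D\divergence{\rho}{\free}=D\divergence{\rho}{\Delta(\rho)}\) by the generalized Stein lemma for this \(\Delta\)-structured free set, which is tractable because the optimizer is \(\Delta(\rho)^{\otimes n}\) via the chain rule. I would then argue that the sublinear battery \(\currency[t]\) can be borrowed: sacrifice \(o(n)\) copies of \(\rho\) to distill \(\currency[t]\) at positive rate, and use Eq.~\eqref{eq:catalytic} with \(\yield[\epsilon](\rho\otimes\currency[1])-1\) to absorb the battery. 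This converts battery-assisted rates into unassisted rates and closes the chain \(D\le\yield[\epsilon]_\infty\le\cost[\epsilon]_\infty\le D\).
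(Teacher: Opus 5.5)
The upper bounds are sound, and in one respect cleaner than the paper's. Comparing with the fixed i.i.d.\ free state $\Delta(\rho)^{\otimes n}$ needs only the ordinary Stein lemma with strong converse (for $h^\epsilon(\rho^{\otimes n})\le D_{\!H}^\epsilon\divergence{\rho^{\otimes n}}{\Delta(\rho)^{\otimes n}}$) and the ordinary smooth-max AEP. Your mixing step $(1-\delta)\tau+\delta\,\Delta(\rho)^{\otimes n}$ is the same mechanism as the paper's Claim~\ref{claim:initial-cost-higher-bound}; the twirling and pinching are not needed. The gap is in the yield lower bound, and your cost lower bound depends on it.

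\textbf{The construction you abandoned works with a different test.} Your first attempt fails only because you took $P_n$ to be the i.i.d.\ Stein test against $\Delta(\rho)^{\otimes n}$. Instead take the minimizer in Eq.~\eqref{eq:hypothesis-heisenberg} for $D_{\!H}^{\epsilon-\delta}\divergence{\rho^{\otimes n}}{\free}$. It satisfies $\tr[\rho^{\otimes n}P_n]\ge 1-\epsilon+\delta$ and $p\coloneqq\norm{\Delta^{\!*}(P_n)}_\infty=2^{-D_{\!H}^{\epsilon-\delta}(\rho^{\otimes n}\|\free)}$. This is exactly the ``key estimate'' you say is needed. The generalized quantum Stein lemma, which you invoke anyway in your fallback, makes this exponent at least $nD\divergence{\rho}{\Delta(\rho)}-o(n)$. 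Your own combination $\Gamma=(P_n-\Delta^{\!*}(P_n)+pI)/(1+p)$ then satisfies $0\le\Gamma\le I$, $\Delta^{\!*}(\Gamma)=\tfrac{p}{1+p}I$ and $\Gamma\ge P_n/(1+p)$. Hence $\yield[\epsilon](\rho^{\otimes n})=h^\epsilon(\rho^{\otimes n})\ge D_{\!H}^{\epsilon-\delta}\divergence{\rho^{\otimes n}}{\free}$ once $p\le\delta$. This is the paper's argument: Claim~\ref{effects-claim}, item~1, with $\Gamma'=(1-p)\Gamma+pI-(1-p)\Delta^{\!*}(\Gamma)$.

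\textbf{The fallback you commit to is circular.} To ``borrow'' $\currency[1]$ you must first distill it without assistance from $o(n)$ copies. That means you need $\yield[\epsilon_1](\rho^{\otimes k})\ge 1$ for some finite $k$, which is itself an unassisted yield lower bound, the very thing you are trying to prove. Borrowing would only work with a vanishing battery. The state $\currency[t]$ is $(1-2^{-t})$-close to the free state $2^{-t}\ket0\!\bra0+(1-2^{-t})\ket1\!\bra1$, and item~2 of the same Claim then applies. Lemma~\ref{lemma-catalytic} at $t=1$ does not give this.

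\textbf{A minor point on the cost lower bound.} Your composition argument should use $\cost[\epsilon]_\infty\ge\yield[\epsilon']_\infty$ with $\epsilon+\epsilon'<1$, not with $\epsilon'=\epsilon$. This is harmless once the yield lower bound holds for every $\epsilon'\in(0,1)$.
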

\noindent Both \(\epsilon\)-one-shot yield and cost regularize to \(D \divergence{\rho}{\Delta(\rho)}\) even before taking the limit \(\epsilon \to 0\) and without using lower or upper limits.
This immediately establishes full asymptotic reversibility with \(D \divergence{\rho}{\Delta(\rho)}\) as the governing measure.
Appendices~\ref{appendix:distillable-fragility} and~\ref{appendix:cost-bounds} establish the two equalities of Eq.~(\ref{eq:second-law}).

\paragraph*{Conclusion.\!\!}
We have consolidated the framework of destruction channels into a single quantum resource theory of instability.
We recognize coherence, athermality, and nonuniformity, along with their conditional extensions, as forms of a common underlying resource.
This clean, unified perspective removes mechanism-specific complexity and makes strong general results accessible, including full asymptotic reversibility.

\paragraph*{Open directions.\!\!}
Much of our one-shot analysis extends to affine resources~\cite{Gour2017} for which the destruction map is not completely positive (e.g., imaginarity~\cite{HG2018}).
Integrating such resources into the instability framework remains an open challenge.

A complementary question concerns other restrictions on free channels in instability theory beyond the rather permissive law of destruction-covariance.
Between thermodynamic systems, destruction-covariance gives rise to GPO, whereas standard athermality theory considers \emph{thermal} operations as the free channels (see~\cite{BHOR2013}).
GPO not only strictly contain the class of thermal operations but also outperforms it in conversion tasks~\cite{FOR2015,TT2025}.
A parallel gap arises in the resource theory of asymmetry induced by a unitary group \(G\), where destruction is the ``twirl'': the averaged action of \(G\) \cite{GS2008}. 
Standard asymmetry theory diverges from the destruction-covariant approach:
a channel is considered free if it commutes not only with the twirl but also with the action of each group element~\cite{GS2008}.
Alternative free channels in coherence are the \emph{strictly incoherent operations}~\cite{YMG+2016,LHL2017}.
Understanding how these fit within instability theory remains an open challenge. See also~\cite{BRLW2017,NG2017,RLS2018,MLA2019}).

Lastly, instability extends naturally to the dynamical domain~\cite{GW2019,LY2020,GS2020} in which channels themselves are considered resources and \emph{superchannels}~\cite{CDP2008,CDP2009,Gour2019} play the role of free operations.
For example, let \(A\) and \(B\) be \emph{static} admissible systems and define a \emph{destruction superchannel}:
\begin{equation}
  \Delta^{\!A \to B}[\mathcal{N}^{A \to B}]  \coloneqq \Delta^{\! B} \circ \mathcal{N}^{A \to B} \circ \Delta^{\!A}\;.
\end{equation}
If both \(\Delta^{\!A}\) and \(\Delta^{\!B}\) are dephasers, the instability-free channels are `classical' and the destruction-covariant superchannels are called DISC~\cite{SCG2020}.
Alternatively, if \(\Delta^{\!A}\) is the identity channel and \(\Delta^{\!B}\) is the depolarizer, the only instability-free channel is the uniform channel \(\rho^A \mapsto \tr[\rho^A]\u^B\), and the free operations are the \emph{completely uniformity-preserving superchannels}~\cite{Gour2019,GKN+2025}.

The definitions of instability yield and instability cost, as well as several one-shot results, carry over to dynamical and generalized probabilistic (GPT)~\cite{TR2019,Jencova2021} regimes once the distance measure is adjusted.
Identifying further properties that persist in these regimes (e.g., reversibility) is an open direction.
Moreover, a key open question is whether destruction-covariant channels themselves arise as fixed points of an idempotent superchannel.
We expect instability to provide a unifying lens across static, dynamical, and GPT settings.
\paragraph*{Acknowledgments.}
We thank Kaiyuan Ji for insightful correspondence that helped clarify key concepts. We are also grateful to Raz Firanko and Yosi Avron for reviewing the manuscript and providing valuable feedback.

\bibliography{references}

@article{GKN+2025,
  title     = {Inevitable negativity: Additivity commands negative quantum channel entropy},
  author    = {Gour, Gilad and Kim, Doyeong and Nateeboon, Takla and Shemesh, Guy and Yoeli, Goni},
  journal   = {Phys. Rev. A},
  volume    = {111},
  issue     = {5},
  pages     = {052424},
  numpages  = {12},
  year      = {2025},
  month     = {May},
  publisher = {American Physical Society},
  doi       = {10.1103/PhysRevA.111.052424},
  url       = {https://link.aps.org/doi/10.1103/PhysRevA.111.052424}
}

@book{Gour2025,
  place     = {Cambridge},
  title     = {Quantum Resource Theories},
  publisher = {Cambridge University Press},
  author    = {Gour, Gilad},
  year      = {2025}
}

@misc{HY2025,
  title         = {Generalized Quantum Stein's Lemma and Second Law of Quantum Resource Theories},
  author        = {Masahito Hayashi and Hayata Yamasaki},
  year          = {2025},
  eprint        = {2408.02722},
  archiveprefix = {arXiv},
  primaryclass  = {quant-ph},
  url           = {https://arxiv.org/abs/2408.02722}
}

@misc{lami2024,
  author   = {Lami, Ludovico},
  journal  = {IEEE Transactions on Information Theory},
  title    = {A Solution of the Generalized Quantum Stein’s Lemma},
  year     = {2025},
  volume   = {71},
  number   = {6},
  pages    = {4454-4484},
  doi      = {10.1109/TIT.2025.3543610}
}

@misc{JGW2025,
  title         = {Fundamental work costs of preparation and erasure in the presence of quantum side information},
  author        = {Kaiyuan Ji and Gilad Gour and Mark M. Wilde},
  year          = {2025},
  eprint        = {2503.09012},
  archiveprefix = {arXiv},
  primaryclass  = {quant-ph}
}

@article{RFXA2018,
  title     = {One-Shot Coherence Distillation},
  author    = {Regula, Bartosz and Fang, Kun and Wang, Xin and Adesso, Gerardo},
  journal   = {Phys. Rev. Lett.},
  volume    = {121},
  issue     = {1},
  pages     = {010401},
  numpages  = {6},
  year      = {2018},
  month     = {7},
  publisher = {American Physical Society},
  doi       = {10.1103/PhysRevLett.121.010401},
  url       = {https://link.aps.org/doi/10.1103/PhysRevLett.121.010401}
}

@article{CG2016,
  title     = {Critical Examination of Incoherent Operations and a Physically Consistent Resource Theory of Quantum Coherence},
  author    = {Chitambar, Eric and Gour, Gilad},
  journal   = {Phys. Rev. Lett.},
  volume    = {117},
  issue     = {3},
  pages     = {030401},
  numpages  = {5},
  year      = {2016},
  month     = {7},
  publisher = {American Physical Society},
  doi       = {10.1103/PhysRevLett.117.030401},
  url       = {https://link.aps.org/doi/10.1103/PhysRevLett.117.030401}
}

@article{ZS2025,
  title     = {Choi-defined resource theories},
  author    = {Zanoni, Elia and Scandolo, Carlo Maria},
  journal   = {Phys. Rev. A},
  volume    = {111},
  issue     = {6},
  pages     = {062407},
  numpages  = {23},
  year      = {2025},
  month     = {Jun},
  publisher = {American Physical Society},
  doi       = {10.1103/PhysRevA.111.062407},
  url       = {https://link.aps.org/doi/10.1103/PhysRevA.111.062407}
}

@article{LR2023,
  author  = {Ludovico Lami and Bartosz Regula},
  title   = {No second law of entanglement manipulation after all},
  journal = {Nature Physics},
  year    = {2023},
  volume  = {19},
  number  = {2},
  pages   = {184--189},
  doi     = {10.1038/s41567-022-01873-9},
  url     = {https://www.nature.com/articles/s41567-022-01873-9}
}

@article{Chitambar2018,
  title     = {Dephasing-covariant operations enable asymptotic reversibility of quantum resources},
  author    = {Chitambar, Eric},
  journal   = {Phys. Rev. A},
  volume    = {97},
  issue     = {5},
  pages     = {050301},
  numpages  = {5},
  year      = {2018},
  month     = {May},
  publisher = {American Physical Society},
  doi       = {10.1103/PhysRevA.97.050301},
  url       = {https://link.aps.org/doi/10.1103/PhysRevA.97.050301}
}

@misc{GR2024,
  title         = {Generalized Stein's lemma and asymptotic equipartition property for subalgebra entropies},
  author        = {Li Gao and Mizanur Rahaman},
  year          = {2024},
  eprint        = {2401.03090},
  archiveprefix = {arXiv},
  primaryclass  = {quant-ph},
  url           = {https://arxiv.org/abs/2401.03090}
}

@article{RTNA2025,
  title     = {Second Law of Entanglement Manipulation with an Entanglement Battery},
  author    = {Ganardi, Ray and Kondra, Tulja Varun and Ng, Nelly H. Y. and Streltsov, Alexander},
  journal   = {Phys. Rev. Lett.},
  volume    = {135},
  issue     = {1},
  pages     = {010202},
  numpages  = {8},
  year      = {2025},
  month     = {Jul},
  publisher = {American Physical Society},
  doi       = {10.1103/kl56-p2vb},
  url       = {https://link.aps.org/doi/10.1103/kl56-p2vb}
}

@misc{Wolf2012,
  author = {Michael M. Wolf},
  title  = {Quantum Channels and Operations - Guided Tour},
  year   = {2012},
  url    = {https://citeseerx.ist.psu.edu/document?repid=rep1&type=pdf&doi=afa58291b0b8bd47504acb1ab8f553f0b37685cf}
}

@article{AGG2002,
  author  = {Arias, A. and Gheondea, A. and Gudder, S.},
  title   = {Fixed points of quantum operations},
  journal = {Journal of Mathematical Physics},
  volume  = {43},
  number  = {12},
  pages   = {5872-5881},
  year    = {2002},
  month   = {12},
  issn    = {0022-2488},
  doi     = {10.1063/1.1519669},
  url     = {https://doi.org/10.1063/1.1519669},
  eprint  = {https://pubs.aip.org/aip/jmp/article-pdf/43/12/5872/19111372/5872\_1\_online.pdf}
}

@article{Tomiyama1957,
  author    = {Jun Tomiyama},
  title     = {{On the projection of norm one in $W^ *$-algebras}},
  volume    = {33},
  journal   = {Proceedings of the Japan Academy},
  number    = {10},
  publisher = {The Japan Academy},
  pages     = {608 -- 612},
  year      = {1957},
  doi       = {10.3792/pja/1195524885},
  url       = {https://doi.org/10.3792/pja/1195524885}
}

@article{JLR2023,
  title     = {Stability of logarithmic Sobolev inequalities under a noncommutative change of measure},
  author    = {Junge, Marius and Laracuente, Nicholas and Rouz{\'e}, Cambyse},
  journal   = {Journal of Statistical Physics},
  volume    = {190},
  number    = {2},
  pages     = {30},
  year      = {2023},
  doi       = {10.1007/s10955-022-03026-x},
  publisher = {Springer}
}

@article{Jencova2021,
  author   = {Jencova, Anna},
  journal  = {IEEE Transactions on Information Theory},
  title    = {A General Theory of Comparison of Quantum Channels (and Beyond)},
  year     = {2021},
  volume   = {67},
  number   = {6},
  pages    = {3945-3964},
  doi      = {10.1109/TIT.2021.3070120}
}

@article{LHL2017,
  title     = {Resource Destroying Maps},
  author    = {Liu, Zi-Wen and Hu, Xueyuan and Lloyd, Seth},
  journal   = {Phys. Rev. Lett.},
  volume    = {118},
  issue     = {6},
  pages     = {060502},
  numpages  = {6},
  year      = {2017},
  month     = {Feb},
  publisher = {American Physical Society},
  doi       = {10.1103/PhysRevLett.118.060502},
  url       = {https://link.aps.org/doi/10.1103/PhysRevLett.118.060502}
}

@article{PR1997,
  title     = {Thermodynamics and the measure of entanglement},
  author    = {Popescu, Sandu and Rohrlich, Daniel},
  journal   = {Phys. Rev. A},
  volume    = {56},
  issue     = {5},
  pages     = {R3319--R3321},
  numpages  = {0},
  year      = {1997},
  month     = {Nov},
  publisher = {American Physical Society},
  doi       = {10.1103/PhysRevA.56.R3319},
  url       = {https://link.aps.org/doi/10.1103/PhysRevA.56.R3319}
}

@article{BBPS1996,
  title     = {Concentrating partial entanglement by local operations},
  author    = {Bennett, Charles H. and Bernstein, Herbert J. and Popescu, Sandu and Schumacher, Benjamin},
  journal   = {Phys. Rev. A},
  volume    = {53},
  issue     = {4},
  pages     = {2046--2052},
  numpages  = {0},
  year      = {1996},
  month     = {Apr},
  publisher = {American Physical Society},
  doi       = {10.1103/PhysRevA.53.2046},
  url       = {https://link.aps.org/doi/10.1103/PhysRevA.53.2046}
}

@article{VC2001,
  title     = {Irreversibility in Asymptotic Manipulations of Entanglement},
  author    = {Vidal, G. and Cirac, J. I.},
  journal   = {Phys. Rev. Lett.},
  volume    = {86},
  issue     = {25},
  pages     = {5803--5806},
  numpages  = {0},
  year      = {2001},
  month     = {Jun},
  publisher = {American Physical Society},
  doi       = {10.1103/PhysRevLett.86.5803},
  url       = {https://link.aps.org/doi/10.1103/PhysRevLett.86.5803}
}

@article{BPR+2000,
  title     = {Exact and asymptotic measures of multipartite pure-state entanglement},
  author    = {Bennett, Charles H. and Popescu, Sandu and Rohrlich, Daniel and Smolin, John A. and Thapliyal, Ashish V.},
  journal   = {Phys. Rev. A},
  volume    = {63},
  issue     = {1},
  pages     = {012307},
  numpages  = {12},
  year      = {2000},
  month     = {Dec},
  publisher = {American Physical Society},
  doi       = {10.1103/PhysRevA.63.012307},
  url       = {https://link.aps.org/doi/10.1103/PhysRevA.63.012307}
}

@article{BP2008,
  author  = {Brand{\~a}o, F. G. S. L. and Plenio, M. B.},
  title   = {Entanglement theory and the second law of thermodynamics},
  journal = {Nature Physics},
  year    = {2008},
  volume  = {4},
  pages   = {873--877},
  doi     = {10.1038/nphys1100}
}

@article{BP2009,
  doi       = {10.1088/1742-6596/143/1/012009},
  url       = {https://dx.doi.org/10.1088/1742-6596/143/1/012009},
  year      = {2009},
  month     = {jan},
  publisher = {},
  volume    = {143},
  number    = {1},
  pages     = {012009},
  author    = {Fernando G S L Brandao and  Martin B Plenio},
  title     = {Entanglement manipulation under non-entangling operations},
  journal   = {Journal of Physics: Conference Series}
}

@article{BP2010a,
  author  = {Brand{\~a}o, Fernando G. S. L. and Plenio, Martin B.},
  title   = {{A Reversible Theory of Entanglement and its Relation to the Second Law}},
  journal = {Communications in Mathematical Physics},
  year    = {2010},
  volume  = {295},
  pages   = {829--851},
  doi     = {10.1007/s00220-010-1003-1}
}

@article{BG2015,
  title     = {Reversible Framework for Quantum Resource Theories},
  author    = {Brand\~ao, Fernando G. S. L. and Gour, Gilad},
  journal   = {Phys. Rev. Lett.},
  volume    = {115},
  issue     = {7},
  pages     = {070503},
  numpages  = {5},
  year      = {2015},
  month     = {Aug},
  publisher = {American Physical Society},
  doi       = {10.1103/PhysRevLett.115.070503},
  url       = {https://link.aps.org/doi/10.1103/PhysRevLett.115.070503}
}

@article{HH1998,
  title     = {Mixed-State Entanglement and Distillation: Is there a ``Bound'' Entanglement in Nature?},
  author    = {Horodecki, Micha\l{} and Horodecki, Pawe\l{} and Horodecki, Ryszard},
  journal   = {Phys. Rev. Lett.},
  volume    = {80},
  issue     = {24},
  pages     = {5239--5242},
  numpages  = {0},
  year      = {1998},
  month     = {Jun},
  publisher = {American Physical Society},
  doi       = {10.1103/PhysRevLett.80.5239},
  url       = {https://link.aps.org/doi/10.1103/PhysRevLett.80.5239}
}

@article{DH1999,
  title   = {Continuity of relative entropy of entanglement},
  journal = {Physics Letters A},
  volume  = {264},
  number  = {4},
  pages   = {257-260},
  year    = {1999},
  issn    = {0375-9601},
  doi     = {https://doi.org/10.1016/S0375-9601(99)00813-0},
  url     = {https://www.sciencedirect.com/science/article/pii/S0375960199008130},
  author  = {Matthew J. Donald and Michał Horodecki}
}

@article{HOH2002,
  title     = {Are the Laws of Entanglement Theory Thermodynamical?},
  author    = {Horodecki, Micha\l{} and Oppenheim, Jonathan and Horodecki, Ryszard},
  journal   = {Phys. Rev. Lett.},
  volume    = {89},
  issue     = {24},
  pages     = {240403},
  numpages  = {4},
  year      = {2002},
  month     = {Nov},
  publisher = {American Physical Society},
  doi       = {10.1103/PhysRevLett.89.240403},
  url       = {https://link.aps.org/doi/10.1103/PhysRevLett.89.240403}
}

@article{SH2006,
  doi       = {10.1088/0305-4470/39/26/L02},
  url       = {https://dx.doi.org/10.1088/0305-4470/39/26/L02},
  year      = {2006},
  month     = {jun},
  publisher = {},
  volume    = {39},
  number    = {26},
  pages     = {L423},
  author    = {Synak-Radtke, Barbara and Horodecki, Michał},
  title     = {On asymptotic continuity of functions of quantum states},
  journal   = {Journal of Physics A: Mathematical and General}
}

@article{MLA2019,
  title     = {Assisted Work Distillation},
  author    = {Morris, Benjamin and Lami, Ludovico and Adesso, Gerardo},
  journal   = {Phys. Rev. Lett.},
  volume    = {122},
  issue     = {13},
  pages     = {130601},
  numpages  = {6},
  year      = {2019},
  month     = {Apr},
  publisher = {American Physical Society},
  doi       = {10.1103/PhysRevLett.122.130601},
  url       = {https://link.aps.org/doi/10.1103/PhysRevLett.122.130601}
}

@article{RLS2018,
  title     = {Nonasymptotic assisted distillation of quantum coherence},
  author    = {Regula, Bartosz and Lami, Ludovico and Streltsov, Alexander},
  journal   = {Phys. Rev. A},
  volume    = {98},
  issue     = {5},
  pages     = {052329},
  numpages  = {8},
  year      = {2018},
  month     = {Nov},
  publisher = {American Physical Society},
  doi       = {10.1103/PhysRevA.98.052329},
  url       = {https://link.aps.org/doi/10.1103/PhysRevA.98.052329}
}

@article{NG2017,
  title     = {Resource theory under conditioned thermal operations},
  author    = {Narasimhachar, Varun and Gour, Gilad},
  journal   = {Phys. Rev. A},
  volume    = {95},
  issue     = {1},
  pages     = {012313},
  numpages  = {11},
  year      = {2017},
  month     = {Jan},
  publisher = {American Physical Society},
  doi       = {10.1103/PhysRevA.95.012313},
  url       = {https://link.aps.org/doi/10.1103/PhysRevA.95.012313}
}

@article{BRLW2017,
  author  = {Bera, Manabendra N. and Riera, Arnau and Lewenstein, Maciej and Winter, Andreas},
  title   = {Generalized laws of thermodynamics in the presence of correlations},
  journal = {Nature Communications},
  year    = {2017},
  volume  = {8},
  pages   = {2180},
  doi     = {10.1038/s41467-017-02370-x}
}

@article{CFS2016,
  title   = {A mathematical theory of resources},
  journal = {Information and Computation},
  volume  = {250},
  pages   = {59-86},
  year    = {2016},
  note    = {Quantum Physics and Logic},
  issn    = {0890-5401},
  doi     = {https://doi.org/10.1016/j.ic.2016.02.008},
  author  = {Bob Coecke and Tobias Fritz and Robert W. Spekkens}
}

@article{GWBG2024,
  doi       = {10.22331/q-2024-11-20-1529},
  url       = {https://doi.org/10.22331/q-2024-11-20-1529},
  title     = {Inevitability of knowing less than nothing},
  author    = {Gour, Gilad and Wilde, Mark M. and Brandsen, S. and Geng, Isabelle Jianing},
  journal   = {{Quantum}},
  issn      = {2521-327X},
  publisher = {{Verein zur F{\"{o}}rderung des Open Access Publizierens in den Quantenwissenschaften}},
  volume    = {8},
  pages     = {1529},
  month     = nov,
  year      = {2024}
}

@article{CG2019,
  title     = {Quantum resource theories},
  author    = {Chitambar, Eric and Gour, Gilad},
  journal   = {Rev. Mod. Phys.},
  volume    = {91},
  issue     = {2},
  pages     = {025001},
  numpages  = {48},
  year      = {2019},
  month     = {Apr},
  publisher = {American Physical Society},
  doi       = {10.1103/RevModPhys.91.025001},
  url       = {https://link.aps.org/doi/10.1103/RevModPhys.91.025001}
}

@article{BCP2014,
  title     = {Quantifying Coherence},
  author    = {Baumgratz, T. and Cramer, M. and Plenio, M. B.},
  journal   = {Phys. Rev. Lett.},
  volume    = {113},
  issue     = {14},
  pages     = {140401},
  numpages  = {5},
  year      = {2014},
  month     = {Sep},
  publisher = {American Physical Society},
  doi       = {10.1103/PhysRevLett.113.140401},
  url       = {https://link.aps.org/doi/10.1103/PhysRevLett.113.140401}
}

@article{WY2016,
  title     = {Operational Resource Theory of Coherence},
  author    = {Winter, Andreas and Yang, Dong},
  journal   = {Phys. Rev. Lett.},
  volume    = {116},
  issue     = {12},
  pages     = {120404},
  numpages  = {6},
  year      = {2016},
  month     = {Mar},
  publisher = {American Physical Society},
  doi       = {10.1103/PhysRevLett.116.120404},
  url       = {https://link.aps.org/doi/10.1103/PhysRevLett.116.120404}
}

@article{BHOR2013,
  title     = {Resource Theory of Quantum States Out of Thermal Equilibrium},
  author    = {Brand\~ao, Fernando G. S. L. and Horodecki, Micha\l{} and Oppenheim, Jonathan and Renes, Joseph M. and Spekkens, Robert W.},
  journal   = {Phys. Rev. Lett.},
  volume    = {111},
  issue     = {25},
  pages     = {250404},
  numpages  = {5},
  year      = {2013},
  month     = {Dec},
  publisher = {American Physical Society},
  doi       = {10.1103/PhysRevLett.111.250404},
  url       = {https://link.aps.org/doi/10.1103/PhysRevLett.111.250404}
}

@article{GMN+2015,
  title   = {The resource theory of informational nonequilibrium in thermodynamics},
  journal = {Physics Reports},
  volume  = {583},
  pages   = {1-58},
  year    = {2015},
  note    = {The resource theory of informational nonequilibrium in thermodynamics},
  issn    = {0370-1573},
  doi     = {https://doi.org/10.1016/j.physrep.2015.04.003},
  author  = {Gilad Gour and Markus P. Müller and Varun Narasimhachar and Robert W. Spekkens and Nicole {Yunger Halpern}}
}

@article{HHO2003,
  title     = {Reversible transformations from pure to mixed states and the unique measure of information},
  author    = {Horodecki, Micha\l{} and Horodecki, Pawe\l{} and Oppenheim, Jonathan},
  journal   = {Phys. Rev. A},
  volume    = {67},
  issue     = {6},
  pages     = {062104},
  numpages  = {9},
  year      = {2003},
  month     = {Jun},
  publisher = {American Physical Society},
  doi       = {10.1103/PhysRevA.67.062104},
  url       = {https://link.aps.org/doi/10.1103/PhysRevA.67.062104}
}

@article{GR2022,
  title     = {Complete Entropic Inequalities for Quantum Markov Chains},
  author    = {Gao, Li and Rouz{\'e}, Cambyse},
  journal   = {Archive for Rational Mechanics and Analysis},
  year      = {2022},
  volume    = {245},
  pages     = {183--238},
  doi       = {10.1007/s00205-022-01785-1},
  url       = {https://doi.org/10.1007/s00205-022-01785-1},
  publisher = {Springer}
}

@article{HT2016,
  author  = {Hayashi, Masahito and Tomamichel, Marco},
  title   = {Correlation detection and an operational interpretation of the Rényi mutual information},
  journal = {Journal of Mathematical Physics},
  volume  = {57},
  number  = {10},
  pages   = {102201},
  year    = {2016},
  month   = {10},
  doi     = {10.1063/1.4964755},
  url     = {https://doi.org/10.1063/1.4964755},
  eprint  = {https://pubs.aip.org/aip/jmp/article-pdf/doi/10.1063/1.4964755/13492327/102201\_1\_online.pdf}
}

@article{RNBG2020,
  title     = {Coherence manipulation with dephasing-covariant operations},
  author    = {Regula, Bartosz and Narasimhachar, Varun and Buscemi, Francesco and Gu, Mile},
  journal   = {Phys. Rev. Res.},
  volume    = {2},
  issue     = {1},
  pages     = {013109},
  numpages  = {10},
  year      = {2020},
  month     = {Jan},
  publisher = {American Physical Society},
  doi       = {10.1103/PhysRevResearch.2.013109},
  url       = {https://link.aps.org/doi/10.1103/PhysRevResearch.2.013109}
}

@article{ZLY+2019,
  author   = {Zhao, Qi and Liu, Yunchao and Yuan, Xiao and Chitambar, Eric and Winter, Andreas},
  journal  = {IEEE Transactions on Information Theory},
  title    = {One-Shot Coherence Distillation: Towards Completing the Picture},
  year     = {2019},
  volume   = {65},
  number   = {10},
  pages    = {6441-6453},
  doi      = {10.1109/TIT.2019.2911102}
}

@article{MS2016,
  title     = {How to quantify coherence: Distinguishing speakable and unspeakable notions},
  author    = {Marvian, Iman and Spekkens, Robert W.},
  journal   = {Phys. Rev. A},
  volume    = {94},
  issue     = {5},
  pages     = {052324},
  numpages  = {23},
  year      = {2016},
  month     = {Nov},
  publisher = {American Physical Society},
  doi       = {10.1103/PhysRevA.94.052324},
  url       = {https://link.aps.org/doi/10.1103/PhysRevA.94.052324}
}

@article{Gour2022,
  title     = {Role of Quantum Coherence in Thermodynamics},
  author    = {Gour, Gilad},
  journal   = {PRX Quantum},
  volume    = {3},
  issue     = {4},
  pages     = {040323},
  numpages  = {23},
  year      = {2022},
  month     = {Nov},
  publisher = {American Physical Society},
  doi       = {10.1103/PRXQuantum.3.040323},
  url       = {https://link.aps.org/doi/10.1103/PRXQuantum.3.040323}
}

@article{WW2019,
  title     = {Resource theory of asymmetric distinguishability},
  author    = {Wang, Xin and Wilde, Mark M.},
  journal   = {Phys. Rev. Res.},
  volume    = {1},
  issue     = {3},
  pages     = {033170},
  numpages  = {25},
  year      = {2019},
  month     = {Dec},
  publisher = {American Physical Society},
  doi       = {10.1103/PhysRevResearch.1.033170},
  url       = {https://link.aps.org/doi/10.1103/PhysRevResearch.1.033170}
}

@article{Datta2009,
  author  = {Datta, Nilanjana},
  journal = {IEEE Transactions on Information Theory},
  title   = {Min- and Max-Relative Entropies and a New Entanglement Monotone},
  year    = {2009},
  volume  = {55},
  number  = {6},
  pages   = {2816-2826},
  doi     = {10.1109/TIT.2009.2018325}
}

@article{WR2012,
  title     = {One-Shot Classical-Quantum Capacity and Hypothesis Testing},
  author    = {Wang, Ligong and Renner, Renato},
  journal   = {Phys. Rev. Lett.},
  volume    = {108},
  issue     = {20},
  pages     = {200501},
  numpages  = {5},
  year      = {2012},
  month     = {May},
  publisher = {American Physical Society},
  doi       = {10.1103/PhysRevLett.108.200501},
  url       = {https://link.aps.org/doi/10.1103/PhysRevLett.108.200501}
}

@article{VP1998,
  title     = {Entanglement measures and purification procedures},
  author    = {Vedral, V. and Plenio, M. B.},
  journal   = {Phys. Rev. A},
  volume    = {57},
  issue     = {3},
  pages     = {1619--1633},
  numpages  = {0},
  year      = {1998},
  month     = {Mar},
  publisher = {American Physical Society},
  doi       = {10.1103/PhysRevA.57.1619},
  url       = {https://link.aps.org/doi/10.1103/PhysRevA.57.1619}
}

@article{MDS+2013,
  author  = {Müller-Lennert, Martin and Dupuis, Frédéric and Szehr, Oleg and Fehr, Serge and Tomamichel, Marco},
  title   = {On quantum Rényi entropies: A new generalization and some properties},
  journal = {Journal of Mathematical Physics},
  volume  = {54},
  number  = {12},
  pages   = {122203},
  year    = {2013},
  month   = {12},
  issn    = {0022-2488},
  doi     = {10.1063/1.4838856},
  url     = {https://doi.org/10.1063/1.4838856},
  eprint  = {https://pubs.aip.org/aip/jmp/article-pdf/doi/10.1063/1.4838856/15705273/122203\_1\_online.pdf}
}

@article{GT2020,
  title     = {Optimal extensions of resource measures and their applications},
  author    = {Gour, Gilad and Tomamichel, Marco},
  journal   = {Phys. Rev. A},
  volume    = {102},
  issue     = {6},
  pages     = {062401},
  numpages  = {13},
  year      = {2020},
  month     = {Dec},
  publisher = {American Physical Society},
  doi       = {10.1103/PhysRevA.102.062401},
  url       = {https://link.aps.org/doi/10.1103/PhysRevA.102.062401}
}

@article{Petz1985,
  author  = {Petz, D{\'e}nes},
  title   = {Quasi-entropies for States of a von Neumann Algebra},
  journal = {Publications of the Research Institute for Mathematical Sciences},
  volume  = {21},
  number  = {4},
  pages   = {787--800},
  year    = {1985},
  month   = aug,
  doi     = {10.2977/PRIMS/1195178929},
  url     = {https://ems.press/journals/prims/articles/3242}
}

@article{AD2015,
  author  = {Audenaert, Koenraad M. R. and Datta, Nilanjana},
  title   = {{\(\alpha\)}-z-Rényi relative entropies},
  journal = {Journal of Mathematical Physics},
  volume  = {56},
  number  = {2},
  pages   = {022202},
  year    = {2015},
  month   = {02},
  issn    = {0022-2488},
  doi     = {10.1063/1.4906367},
  url     = {https://doi.org/10.1063/1.4906367},
  eprint  = {https://pubs.aip.org/aip/jmp/article-pdf/doi/10.1063/1.4906367/15997344/022202\_1\_online.pdf}
}

@article{HG2018,
  author  = {Hickey, Alexander and Gour, Gilad},
  title   = {Quantifying the Imaginarity of Quantum Mechanics},
  journal = {Journal of Physics A: Mathematical and Theoretical},
  year    = {2018},
  volume  = {51},
  number  = {41},
  pages   = {414009},
  doi     = {10.1088/1751-8121/aabe9c}
}

@article{CDP2008,
  author  = {{Chiribella, G.} and {D'Ariano, G. M.} and {Perinotti, P.}},
  title   = {Transforming quantum operations: Quantum supermaps},
  doi     = {10.1209/0295-5075/83/30004},
  url     = {https://doi.org/10.1209/0295-5075/83/30004},
  journal = {EPL},
  year    = 2008,
  volume  = 83,
  number  = 3,
  pages   = {30004}
}

@article{CDP2009,
  title     = {Theoretical framework for quantum networks},
  author    = {Chiribella, Giulio and D'Ariano, Giacomo Mauro and Perinotti, Paolo},
  journal   = {Phys. Rev. A},
  volume    = {80},
  issue     = {2},
  pages     = {022339},
  numpages  = {20},
  year      = {2009},
  month     = {Aug},
  publisher = {American Physical Society},
  doi       = {10.1103/PhysRevA.80.022339},
  url       = {https://link.aps.org/doi/10.1103/PhysRevA.80.022339}
}

@article{GW2019,
  title     = {How to Quantify a Dynamical Quantum Resource},
  author    = {Gour, Gilad and Winter, Andreas},
  journal   = {Phys. Rev. Lett.},
  volume    = {123},
  issue     = {15},
  pages     = {150401},
  numpages  = {5},
  year      = {2019},
  month     = {Oct},
  publisher = {American Physical Society},
  doi       = {10.1103/PhysRevLett.123.150401},
  url       = {https://link.aps.org/doi/10.1103/PhysRevLett.123.150401}
}

@article{Gour2019,
  author   = {Gour, Gilad},
  journal  = {IEEE Transactions on Information Theory},
  title    = {Comparison of Quantum Channels by Superchannels},
  year     = {2019},
  volume   = {65},
  number   = {9},
  pages    = {5880-5904},
  doi      = {10.1109/TIT.2019.2907989}
}

@article{LY2020,
  title     = {Operational resource theory of quantum channels},
  author    = {Liu, Yunchao and Yuan, Xiao},
  journal   = {Phys. Rev. Res.},
  volume    = {2},
  issue     = {1},
  pages     = {012035},
  numpages  = {6},
  year      = {2020},
  month     = {Feb},
  publisher = {American Physical Society},
  doi       = {10.1103/PhysRevResearch.2.012035},
  url       = {https://link.aps.org/doi/10.1103/PhysRevResearch.2.012035}
}

@article{TR2019,
  title     = {General Resource Theories in Quantum Mechanics and Beyond: Operational Characterization via Discrimination Tasks},
  author    = {Takagi, Ryuji and Regula, Bartosz},
  journal   = {Phys. Rev. X},
  volume    = {9},
  issue     = {3},
  pages     = {031053},
  numpages  = {28},
  year      = {2019},
  month     = {Sep},
  publisher = {American Physical Society},
  doi       = {10.1103/PhysRevX.9.031053},
  url       = {https://link.aps.org/doi/10.1103/PhysRevX.9.031053}
}

@misc{GS2020,
  title         = {Dynamical Resources},
  author        = {Gilad Gour and Carlo Maria Scandolo},
  year          = {2020},
  eprint        = {2101.01552},
  archiveprefix = {arXiv},
  primaryclass  = {quant-ph},
  url           = {https://arxiv.org/abs/2101.01552}
}

@article{SCG2020,
  title     = {Dynamical resource theory of quantum coherence},
  author    = {Saxena, Gaurav and Chitambar, Eric and Gour, Gilad},
  journal   = {Phys. Rev. Res.},
  volume    = {2},
  issue     = {2},
  pages     = {023298},
  numpages  = {27},
  year      = {2020},
  month     = {Jun},
  publisher = {American Physical Society},
  doi       = {10.1103/PhysRevResearch.2.023298},
  url       = {https://link.aps.org/doi/10.1103/PhysRevResearch.2.023298}
}

@article{Gour2017,
  title     = {Quantum resource theories in the single-shot regime},
  author    = {Gour, Gilad},
  journal   = {Phys. Rev. A},
  volume    = {95},
  issue     = {6},
  pages     = {062314},
  numpages  = {11},
  year      = {2017},
  month     = {Jun},
  publisher = {American Physical Society},
  doi       = {10.1103/PhysRevA.95.062314},
  url       = {https://link.aps.org/doi/10.1103/PhysRevA.95.062314}
}

@article{LBT2019,
  title     = {One-Shot Operational Quantum Resource Theory},
  author    = {Liu, Zi-Wen and Bu, Kaifeng and Takagi, Ryuji},
  journal   = {Phys. Rev. Lett.},
  volume    = {123},
  issue     = {2},
  pages     = {020401},
  numpages  = {8},
  year      = {2019},
  month     = {Jul},
  publisher = {American Physical Society},
  doi       = {10.1103/PhysRevLett.123.020401},
  url       = {https://link.aps.org/doi/10.1103/PhysRevLett.123.020401}
}

@article{GT2021,
  author   = {Gour, Gilad and Tomamichel, Marco},
  journal  = {IEEE Transactions on Information Theory},
  title    = {Entropy and Relative Entropy From Information-Theoretic Principles},
  year     = {2021},
  volume   = {67},
  number   = {10},
  pages    = {6313-6327},
  doi      = {10.1109/TIT.2021.3078337}
}

@article{SKW+2018,
  doi       = {10.1088/1367-2630/aac484},
  url       = {https://dx.doi.org/10.1088/1367-2630/aac484},
  year      = {2018},
  month     = {may},
  publisher = {IOP Publishing},
  volume    = {20},
  number    = {5},
  pages     = {053058},
  author    = {Streltsov, Alexander and Kampermann, Hermann and Wölk, Sabine and Gessner, Manuel and Bruß, Dagmar},
  title     = {Maximal coherence and the resource theory of purity},
  journal   = {New Journal of Physics}
}

@article{BP2010b,
  title   = {A Generalization of Quantum Stein's Lemma},
  author  = {Brandão, Fernando G. S. L. and Plenio, Martin B.},
  journal = {Communications in Mathematical Physics},
  volume  = {295},
  number  = {3},
  pages   = {791--828},
  year    = {2010},
  month   = may,
  doi     = {10.1007/s00220-010-1005-z},
  url     = {https://doi.org/10.1007/s00220-010-1005-z},
  issn    = {1432-0916}
}

@article{BBG+2023,
  doi       = {10.22331/q-2023-09-07-1103},
  url       = {https://doi.org/10.22331/q-2023-09-07-1103},
  title     = {On a gap in the proof of the generalised quantum {S}tein's lemma and its consequences for the reversibility of quantum resources},
  author    = {Berta, Mario and Brand{\~{a}}o, Fernando G. S. L. and Gour, Gilad and Lami, Ludovico and Plenio, Martin B. and Regula, Bartosz and Tomamichel, Marco},
  journal   = {{Quantum}},
  issn      = {2521-327X},
  publisher = {{Verein zur F{\"{o}}rderung des Open Access Publizierens in den Quantenwissenschaften}},
  volume    = {7},
  pages     = {1103},
  month     = sep,
  year      = {2023}
}

@article{CFL2018,
  author  = {Carlen, Eric and Frank, Rupert and Lieb, Elliott},
  year    = {2018},
  month   = {11},
  pages   = {},
  title   = {Inequalities for quantum divergences and the Audenaert--Datta conjecture},
  volume  = {51},
  journal = {Journal of Physics A: Mathematical and Theoretical},
  doi     = {10.1088/1751-8121/aae8a3}
}

@misc{rubboli-phd-2025,
  author  = {Roberto Rubboli},
  title   = {Optimization and Additivity of Quantum Relative Entropies},
  school  = {National University of Singapore},
  address = {Singapore},
  year    = {2025},
  month   = apr,
  url     = {https://scholarbank.nus.edu.sg/handle/10635/311115},
  note    = {\ PhD Thesis}
}

@misc{RGT2025,
  title         = {Quantum conditional entropies from convex trace functionals},
  author        = {Roberto Rubboli and Milad M. Goodarzi and Marco Tomamichel},
  year          = {2025},
  eprint        = {2410.21976},
  archiveprefix = {arXiv},
  primaryclass  = {quant-ph},
  url           = {https://arxiv.org/abs/2410.21976}
}

@article{BGG2022,
  title     = {What is entropy? A perspective from games of chance},
  author    = {Brandsen, Sarah and Geng, Isabelle Jianing and Gour, Gilad},
  journal   = {Phys. Rev. E},
  volume    = {105},
  issue     = {2},
  pages     = {024117},
  numpages  = {5},
  year      = {2022},
  month     = {Feb},
  publisher = {American Physical Society},
  doi       = {10.1103/PhysRevE.105.024117},
  url       = {https://link.aps.org/doi/10.1103/PhysRevE.105.024117}
}

@misc{gour2025b,
  title         = {Induced Quantum Divergence: A New Lens on Communication and Source Coding},
  author        = {Gilad Gour},
  year          = {2025},
  eprint        = {2502.13669},
  archiveprefix = {arXiv},
  primaryclass  = {quant-ph},
  url           = {https://arxiv.org/abs/2502.13669}
}

@article{KA1979,
  author  = {Kubo, Fumio and Ando, Tsuyoshi},
  title   = {Means of Positive Linear Operators},
  journal = {Mathematische Annalen},
  volume  = {246},
  pages   = {205--224},
  year    = {1979},
  url     = {https://gdz.sub.uni-goettingen.de/download/pdf/PPN235181684_0246/PPN235181684_0246.pdf}
}

@article{YMG+2016,
  title     = {Quantum Processes Which Do Not Use Coherence},
  author    = {Yadin, Benjamin and Ma, Jiajun and Girolami, Davide and Gu, Mile and Vedral, Vlatko},
  journal   = {Phys. Rev. X},
  volume    = {6},
  issue     = {4},
  pages     = {041028},
  numpages  = {16},
  year      = {2016},
  month     = {Nov},
  publisher = {American Physical Society},
  doi       = {10.1103/PhysRevX.6.041028},
  url       = {https://link.aps.org/doi/10.1103/PhysRevX.6.041028}
}

@article{ZLY+2018,
  title     = {One-Shot Coherence Dilution},
  author    = {Zhao, Qi and Liu, Yunchao and Yuan, Xiao and Chitambar, Eric and Ma, Xiongfeng},
  journal   = {Phys. Rev. Lett.},
  volume    = {120},
  issue     = {7},
  pages     = {070403},
  numpages  = {6},
  year      = {2018},
  month     = {Feb},
  publisher = {American Physical Society},
  doi       = {10.1103/PhysRevLett.120.070403},
  url       = {https://link.aps.org/doi/10.1103/PhysRevLett.120.070403}
}

@article{GGH+2018,
  title     = {Conditional uncertainty principle},
  author    = {Gour, Gilad and Grudka, Andrzej and Horodecki, Micha\l{} and K\l{}obus, Waldemar and \L{}odyga, Justyna and Narasimhachar, Varun},
  journal   = {Phys. Rev. A},
  volume    = {97},
  issue     = {4},
  pages     = {042130},
  numpages  = {14},
  year      = {2018},
  month     = {Apr},
  publisher = {American Physical Society},
  doi       = {10.1103/PhysRevA.97.042130},
  url       = {https://link.aps.org/doi/10.1103/PhysRevA.97.042130}
}

@article{RBTL2020,
  title     = {Benchmarking one-shot distillation in general quantum resource theories},
  author    = {Regula, Bartosz and Bu, Kaifeng and Takagi, Ryuji and Liu, Zi-Wen},
  journal   = {Phys. Rev. A},
  volume    = {101},
  issue     = {6},
  pages     = {062315},
  numpages  = {12},
  year      = {2020},
  month     = {Jun},
  publisher = {American Physical Society},
  doi       = {10.1103/PhysRevA.101.062315},
  url       = {https://link.aps.org/doi/10.1103/PhysRevA.101.062315}
}

@article{FOR2015,
  title     = {Gibbs-preserving maps outperform thermal operations in the quantum regime},
  author    = {Faist, Philippe and Oppenheim, Jonathan and Renner, Renato},
  journal   = {New Journal of Physics},
  volume    = {17},
  number    = {4},
  pages     = {043003},
  year      = {2015},
  doi       = {10.1088/1367-2630/17/4/043003},
  publisher = {IOP Publishing}
}

@article{TEZP2019,
  title     = {Quantifying Operations with an Application to Coherence},
  author    = {Theurer, Thomas and Egloff, Dario and Zhang, Lijian and Plenio, Martin B.},
  journal   = {Phys. Rev. Lett.},
  volume    = {122},
  issue     = {19},
  pages     = {190405},
  numpages  = {6},
  year      = {2019},
  month     = {May},
  publisher = {American Physical Society},
  doi       = {10.1103/PhysRevLett.122.190405},
  url       = {https://link.aps.org/doi/10.1103/PhysRevLett.122.190405}
}

@article{TT2025,
  title     = {Gibbs-Preserving Operations Requiring Infinite Amount of Quantum Coherence},
  author    = {Tajima, Hiroyasu and Takagi, Ryuji},
  journal   = {Phys. Rev. Lett.},
  volume    = {134},
  issue     = {17},
  pages     = {170201},
  numpages  = {7},
  year      = {2025},
  month     = {Apr},
  publisher = {American Physical Society},
  doi       = {10.1103/PhysRevLett.134.170201},
  url       = {https://link.aps.org/doi/10.1103/PhysRevLett.134.170201}
}

@article{GS2008,
  doi       = {10.1088/1367-2630/10/3/033023},
  url       = {https://doi.org/10.1088/1367-2630/10/3/033023},
  year      = {2008},
  month     = {mar},
  publisher = {},
  volume    = {10},
  number    = {3},
  pages     = {033023},
  author    = {Gour, Gilad and Spekkens, Robert W},
  title     = {The resource theory of quantum reference frames: manipulations and monotones},
  journal   = {New Journal of Physics}
}

@article{FGG2013,
  title     = {Universal Uncertainty Relations},
  author    = {Friedland, Shmuel and Gheorghiu, Vlad and Gour, Gilad},
  journal   = {Phys. Rev. Lett.},
  volume    = {111},
  issue     = {23},
  pages     = {230401},
  numpages  = {5},
  year      = {2013},
  month     = {Dec},
  publisher = {American Physical Society},
  doi       = {10.1103/PhysRevLett.111.230401},
  url       = {https://link.aps.org/doi/10.1103/PhysRevLett.111.230401}
}

@article{LY1999,
  title    = {The physics and mathematics of the second law of thermodynamics},
  journal  = {Physics Reports},
  volume   = {310},
  number   = {1},
  pages    = {1-96},
  year     = {1999},
  issn     = {0370-1573},
  doi      = {https://doi.org/10.1016/S0370-1573(98)00082-9},
  url      = {https://www.sciencedirect.com/science/article/pii/S0370157398000829},
  author   = {Elliott H. Lieb and Jakob Yngvason}
}

@article{Partovi2011,
  title     = {Majorization formulation of uncertainty in quantum mechanics},
  author    = {Partovi, M. Hossein},
  journal   = {Phys. Rev. A},
  volume    = {84},
  issue     = {5},
  pages     = {052117},
  numpages  = {10},
  year      = {2011},
  month     = {Nov},
  publisher = {American Physical Society},
  doi       = {10.1103/PhysRevA.84.052117},
  url       = {https://link.aps.org/doi/10.1103/PhysRevA.84.052117}
}

@article{HLP1929,
  author  = {Hardy, G. H. and Littlewood, J. E. and P{\'o}lya, G.},
  title   = {Some simple inequalities satisfied by convex functions},
  journal = {Messenger of Mathematics},
  volume  = {58},
  pages   = {145--152},
  year    = {1929}
}

@article{Blackwell1953,
  author  = {Blackwell, David},
  title   = {Equivalent comparisons of experiments},
  journal = {The Annals of Mathematical Statistics},
  volume  = {24},
  number  = {2},
  pages   = {265--272},
  year    = {1953},
  doi     = {10.1214/aoms/1177729032}
}

@article{BST2019,
  doi       = {10.22331/q-2019-12-09-209},
  url       = {https://doi.org/10.22331/q-2019-12-09-209},
  title     = {An information-theoretic treatment of quantum dichotomies},
  author    = {Buscemi, Francesco and Sutter, David and Tomamichel, Marco},
  journal   = {{Quantum}},
  issn      = {2521-327X},
  publisher = {{Verein zur F{\"{o}}rderung des Open Access Publizierens in den Quantenwissenschaften}},
  volume    = {3},
  pages     = {209},
  month     = dec,
  year      = {2019}
}

@article{MS2013,
  doi       = {10.1088/1367-2630/15/3/033001},
  url       = {https://doi.org/10.1088/1367-2630/15/3/033001},
  year      = {2013},
  month     = {mar},
  publisher = {IOP Publishing},
  volume    = {15},
  number    = {3},
  pages     = {033001},
  author    = {Marvian, Iman and Spekkens, Robert W},
  title     = {The theory of manipulations of pure state asymmetry: I. Basic tools, equivalence classes and single copy transformations},
  journal   = {New Journal of Physics}
}

@article{MS2014,
  title     = {Modes of asymmetry: The application of harmonic analysis to symmetric quantum dynamics and quantum reference frames},
  author    = {Marvian, Iman and Spekkens, Robert W.},
  journal   = {Phys. Rev. A},
  volume    = {90},
  issue     = {6},
  pages     = {062110},
  numpages  = {20},
  year      = {2014},
  month     = {Dec},
  publisher = {American Physical Society},
  doi       = {10.1103/PhysRevA.90.062110},
  url       = {https://link.aps.org/doi/10.1103/PhysRevA.90.062110}
}

@article{Zhang2020,
  title   = {From Wigner-Yanase-Dyson conjecture to Carlen-Frank-Lieb conjecture},
  journal = {Advances in Mathematics},
  volume  = {365},
  pages   = {107053},
  year    = {2020},
  issn    = {0001-8708},
  doi     = {https://doi.org/10.1016/j.aim.2020.107053},
  url     = {https://www.sciencedirect.com/science/article/pii/S0001870820300785},
  author  = {Haonan Zhang}
}

@article{HFW2021,
  author   = {Hayashi, Masahito and Fang, Kun and Wang, Kun},
  journal  = {IEEE Transactions on Information Theory},
  title    = {Finite Block Length Analysis on Quantum Coherence Distillation and Incoherent Randomness Extraction},
  year     = {2021},
  volume   = {67},
  number   = {6},
  pages    = {3926-3944},
  doi      = {10.1109/TIT.2021.3064009}
}

@inproceedings{Petz1988,
  author    = {Petz, D{\`e}nes},
  editor    = {Accardi, Luigi
               and von Waldenfels, Wilhelm},
  title     = {Conditional expectation in quantum probability},
  booktitle = {Quantum Probability and Applications III},
  year      = {1988},
  publisher = {Springer Berlin Heidelberg},
  address   = {Berlin, Heidelberg},
  pages     = {251--260},
  isbn      = {978-3-540-38846-3}
}

@article{Takesaki1972,
  title   = {Conditional expectations in von Neumann algebras},
  journal = {Journal of Functional Analysis},
  volume  = {9},
  number  = {3},
  pages   = {306-321},
  year    = {1972},
  issn    = {0022-1236},
  doi     = {https://doi.org/10.1016/0022-1236(72)90004-3},
  url     = {https://www.sciencedirect.com/science/article/pii/0022123672900043},
  author  = {Masamichi Takesaki}
}

@incollection{Kadison2004,
  author     = {Kadison, Richard V.},
  title      = {Non-commutative conditional expectations and their
                applications},
  booktitle  = {Operator algebras, quantization, and noncommutative geometry},
  series     = {Contemp. Math.},
  volume     = {365},
  pages      = {143--179},
  publisher  = {Amer. Math. Soc., Providence, RI},
  year       = {2004},
  isbn       = {0-8218-3402-9},
  mrclass    = {46L53},
  mrnumber   = {2106820},
  mrreviewer = {Fyodor\ A.\ Sukochev},
  doi        = {10.1090/conm/365/06703},
  url        = {https://doi.org/10.1090/conm/365/06703}
}

@article{BHA+2020,
  author   = {Brandão, Fernando G. S. L. and Harrow, Aram W. and Lee, James R. and Peres, Yuval},
  journal  = {IEEE Transactions on Information Theory},
  title    = {Adversarial Hypothesis Testing and a Quantum Stein’s Lemma for Restricted Measurements},
  year     = {2020},
  volume   = {66},
  number   = {8},
  pages    = {5037-5054},
  doi      = {10.1109/TIT.2020.2979704}
}

@article{LR2024,
  title   = {Distillable entanglement under dually non-entangling operations},
  author  = {Lami, Ludovico and Regula, Bartosz},
  journal = {Nature Communications},
  volume  = {15},
  pages   = {10120},
  year    = {2024},
  doi     = {10.1038/s41467-024-54201-5}
}

@article{HO2013,
  author  = {Horodecki, Micha{\l} and Oppenheim, Jonathan},
  title   = {Fundamental limitations for quantum and nanoscale thermodynamics},
  journal = {Nature Communications},
  volume  = {4},
  pages   = {2059},
  year    = {2013},
  doi     = {10.1038/ncomms3059}
}

\onecolumngrid

\newpage

\appendix

\begin{center}
  {\large\bfseries APPENDIX}
\end{center}\vspace{-20pt}
\tableofcontents

\setlength{\parindent}{0pt}               
\setlength{\parskip}{6pt plus 2pt minus 1pt} 

\renewcommand{\terminal}[1]{\Delta\pa*{#1}}
\newcommand{\condExpectation}{\Delta^{\!*}}
\newcommand{\expectation}[1]{\condExpectation\pa*{#1}}
\newcommand{\optimizer}{\sigma_*}
\newcommand{\tpExpectation}{\bar{\Delta}}
\newcommand{\tpTerminal}[1]{\tpExpectation\pa*{#1}}
\newcommand{\algebra}{\mathscr{A}}
\newcommand{\linearOperators}{\mathfrak{L}}
\newcommand{\positiveOperators}{\mathsf{Pos}}
\newcommand{\creator}[1]{\mathcal{C}_{#1}}
\section{Conditional Expectations}\label{appendix:conditional-expectation}
This appendix records standard facts used in the main text: an idempotent quantum channel
$\Delta$ that admits a full-rank fixed state induces a \emph{faithful conditional expectation}
in the Heisenberg picture, and such maps admit an explicit block structure.
We include these facts for completeness; for general treatments, see \cite{Wolf2012,JLR2023,GR2022,Tomiyama1957,AGG2002} and references therein.

Throughout, $H$ is finite-dimensional and $\linearOperators(H)$ denotes the algebra of linear operators on $H$.
A \emph{channel} is a completely positive trace-preserving (CPTP) map.
For a linear map $\Phi:\linearOperators(H)\to\linearOperators(H')$ we write $\Phi^*$ for its Hilbert--Schmidt adjoint,
defined by $\tr[\Phi(X)Y]=\tr[X\Phi^*(Y)]$ for \(X\in \linearOperators(H)\) and \(Y\in \linearOperators(H')\).

\subsection{Characterizing Conditional Expectations}

A completely positive unital idempotent \(\condExpectation:\linearOperators(H)\to\linearOperators(H)\) with \(\ima \condExpectation = \algebra\subset \linearOperators(H)\) is a \emph{conditional expectation onto \(\algebra\)} if it is \(\algebra\)-bimodular:
\begin{equation}\label{eq:bimodularity}
  \condExpectation(X_1YX_2) = X_1\condExpectation(Y) X_2\;, \qquad \forall X_1,X_2\in \algebra\;,\;Y\in \linearOperators(H)\;.
\end{equation}
 In that case, \(\algebra\) is a *-algebra.
 \begin{remark}
  The image of a unital Hermitian-preserving map is generally an \emph{operator system}: a subspace closed under adjoints that contains the unit \(I\).
 \end{remark}

\begin{claim} \label{claim:conditional}
  Let \(\Delta: \linearOperators(H) \to \linearOperators(H)\) be a positive unital idempotent with \(\ima\condExpectation=\algebra\). The following are equivalent:
  \begin{enumerate}
    \item \(\condExpectation\) is a conditional expectation onto \(\algebra\). 
    \item \(\algebra\) is a *-algebra.
    \item If \(X \in \algebra\), then \(X^*X \in \algebra\).
    \item \(\condExpectation\) is completely positive and \(\algebra\) is the commutant of all the Kraus operators \ \(\pc{K_x,K_x^* \smallgiven x\le n}'\) (for which \(\condExpectation(\cdot) = \sum_{x\le n} K_x^* (\cdot) K_x \)).
  \end{enumerate}
\end{claim}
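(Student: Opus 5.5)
The plan is to prove the cycle $1\Rightarrow 2\Rightarrow 3\Rightarrow 1$ first, and then attach item 4 through $1\Rightarrow 4\Rightarrow 2$. One preliminary fact is used throughout. Since $\condExpectation$ is positive, it is Hermitian-preserving, so $\condExpectation(Y)^*=\condExpectation(Y^*)$ and hence $\algebra$ is closed under adjoints. Since $\condExpectation$ is idempotent, $\algebra=\ima\condExpectation$ coincides with the fixed-point set $\{X\mid\condExpectation(X)=X\}$.

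For $1\Rightarrow 2$, take $X_1,X_2\in\algebra$ and apply bimodularity \eqref{eq:bimodularity} with $Y=I$, using unitality:
\[
\condExpectation(X_1X_2)=X_1\condExpectation(I)X_2=X_1X_2 .
\]
So $X_1X_2\in\algebra$, and together with closure under adjoints $\algebra$ is a $*$-algebra. The step $2\Rightarrow 3$ is immediate.

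For $3\Rightarrow 2$, polarize:
\[
X^*Y=\tfrac14\sum_{k=0}^{3} i^{k}\,(Y+i^{k}X)^*(Y+i^{k}X).
\]
With adjoint-closure this gives closure under products. The substantive direction is then $2\Rightarrow 1$, which is Tomiyama's theorem: a positive unital idempotent onto a $*$-subalgebra is bimodular. I would argue it directly in finite dimensions.

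Since $\algebra$ is a finite-dimensional $*$-algebra, it is spanned by its projections. It therefore suffices to show $\condExpectation(pY)=p\,\condExpectation(Y)$ and $\condExpectation(Yp)=\condExpectation(Y)p$ for every projection $p\in\algebra$. For $Y\ge 0$ we have $pYp\le\|Y\|\,p$, and positivity gives
\[
0\le \condExpectation(pYp)\le \|Y\|\,\condExpectation(p)=\|Y\|\,p .
\]
Hence $\condExpectation(pYp)$ is supported in $p$, and likewise $\condExpectation\big((I-p)Y(I-p)\big)$ is supported in $I-p$. The remaining task is to show that the off-diagonal pieces are compressed correctly, namely $\condExpectation(pY(I-p))=p\,\condExpectation(pY(I-p))(I-p)$. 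For this I would use the positivity of $\condExpectation$ applied to operators of the form $(p+t\,e^{i\theta}(I-p))\,Y\,(p+t\,e^{-i\theta}(I-p))\ge0$ for $Y\ge0$. Varying $t>0$ and $\theta$ and comparing orders in $t$ should force the unwanted blocks to vanish.

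I expect this block-compression step to be the main obstacle. If the direct argument becomes messy, I would fall back on citing \cite{Tomiyama1957}. That reference also gives complete positivity of $\condExpectation$, which is needed in the next step.

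For $1\Rightarrow 4$, complete positivity comes from Tomiyama, so $\condExpectation(\cdot)=\sum_x K_x^*(\cdot)K_x$ for some Kraus operators $K_x$. Since $\Delta$ fixes a full-rank state, the unital CP map $\condExpectation$ has a faithful invariant state. By \cite{AGG2002}, its fixed-point set is then exactly the commutant $\{K_x,K_x^*\}'$, and this fixed-point set equals $\algebra$ by idempotence. Finally, $4\Rightarrow 2$ holds because any commutant of a self-adjoint set of operators is a $*$-algebra, which closes the chain of equivalences.
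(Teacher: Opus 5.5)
\textbf{The gap is in your step $1\Rightarrow 4$.} You use that $\Delta$ fixes a full-rank state in order to invoke the fixed-point theorem of Arias--Gheondea--Gudder. The claim has no such hypothesis: $\Delta^{\!*}$ is only assumed to be a positive unital idempotent. The full-rank assumption is added separately in the paper's next claim, which relies on this one.

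Item 1 does not supply a faithful state either. $\Delta^{\!*}(X)=\bra{\psi}X\ket{\psi}\,I$ is a conditional expectation onto $\C I$, and its predual $\rho\mapsto\tr[\rho]\ket{\psi}\!\bra{\psi}$ has no full-rank fixed state. Yet item 4 holds for it. The fixed points of a general unital CP map need not equal the commutant of its Kraus operators, so your citation does not cover this case.

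The missing observation is that faithfulness is unnecessary once you know $\mathscr{A}$ is a $*$-algebra. For $X\in\mathscr{A}$,
\[
\sum_x [K_x,X]^*[K_x,X]=X^*\Delta^{\!*}(I)X+\Delta^{\!*}(X^*X)-\Delta^{\!*}(X^*)X-X^*\Delta^{\!*}(X)=0,
\]
because $\Delta^{\!*}(I)=I$ and $X,X^*,X^*X\in\mathscr{A}$ are fixed. Each summand is positive, so $[K_x,X]=0$; applying this to $X^*$ gives $[K_x^*,X]=0$. The reverse inclusion is $\Delta^{\!*}(X)=X\sum_x K_x^*K_x=X$ for $X$ in the commutant. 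This is exactly the paper's argument for $(1)+(2)+(3)\Rightarrow(4)$. Also, complete positivity is part of the paper's definition of a conditional expectation, so under item 1 it needs no separate justification.

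\textbf{On $2\Rightarrow 1$.} You fall back on Tomiyama's theorem, as the paper does, so this step is fine. Be aware, though, that your direct sketch would not close. Compressing $\Delta^{\!*}\big((p+te^{i\theta}(I-p))\,Y\,(p+te^{-i\theta}(I-p))\big)\ge 0$ and sending $t\to\infty$ or $t\to 0$ does kill $p\,\Delta^{\!*}(pY(I-p))\,p$ and $(I-p)\,\Delta^{\!*}(pY(I-p))\,(I-p)$. However, the block $(I-p)\,\Delta^{\!*}(pY(I-p))\,p$ sits in the off-diagonal corner and scales like $t$. That is exactly how the geometric mean of the diagonal corners (of orders $1$ and $t^2$) scales, so no comparison of orders removes it.

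A direct proof uses contractivity instead: $\|\Delta^{\!*}\|=\|\Delta^{\!*}(I)\|=1$ by Russo--Dye. For $y=(I-p)\Delta^{\!*}(pX)\in\mathscr{A}$ one gets
\[
(1+t)\|y\|=\|(I-p)\Delta^{\!*}(pX+ty)\|\le\|pX+ty\|\le(\|X\|^2+t^2\|y\|^2)^{1/2}
\]
for all $t>0$, which forces $y=0$.

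Your remaining steps are correct and match the paper: adjoint-closure, $1\Rightarrow 2$ with $Y=I$, polarization for $3\Rightarrow 2$, and $4\Rightarrow 2$.
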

\begin{proof}
  (1)\(\Leftrightarrow\)(2) follows directly from Tomiyama's conditional expectation theorem \cite{Tomiyama1957}.
 
  (2)\(\Rightarrow\)(3) Trivial.
  
  (3)\(\Rightarrow\)(2) Let \(X,Y^*\in \algebra\).
  From the polarization identity
  \begin{equation}
    Y^*X = \frac{1}{4}
    \pa[\big]{(X+Y)^*(X+Y)-(X-Y)^*(X-Y)+i(X+iY)^*(X+iY)-i(X-iY)^*(X-iY)}\;,
  \end{equation}
  we deduce that \(Y^*X \in \algebra\). This shows that \(\algebra\) is an algebra.

  (4)\(\Rightarrow\)(2) holds because the commutant of any set is an algebra.

 (1)+(2)+(3)\(\Rightarrow\)(4). 
 Conditional expectations are completely positive by definition; we therefore need to establish: 
 \begin{equation}
  \algebra = \pc{K_x,K_x^* \smallgiven x\le n}'\;.
 \end{equation}
 (\(\supseteq\))
If \(X\in \pc{K_x,K_x^* \smallgiven x\le n}'\) then \(\condExpectation(X)=\sum_{x \le n} K_x^*  X K_x = X \sum_x K_x^* K_x = X I = X\), so \(X \in \algebra\).
  
(\(\subseteq\)) Let \(X\in\algebra\).
   By symmetry, it suffices to check that \(X\) commutes with the operators \(K_x\) (since \(\algebra\) is an *-algebra and thus contains \(X^*\)).
   To verify this, let us show that \(\sum_x [K_x,X]^*[K_x,X] =0\). 
   Opening the commutator brackets and using \(\condExpectation(\cdot) = \sum_x K_x^* (\cdot) K_x \) yields
  \begin{align*}
    \sum_{x\le n} [K_x,X]^*[K_x,X] &
     = X^*\condExpectation(I) X 
     + \condExpectation(X^*X) 
     - \condExpectation(X^*) X 
     - X^*\condExpectation(X)\;.
  \end{align*}
  Using \(\condExpectation(X)= X\) (by assumption),
   \(\condExpectation(I)= I,~ \condExpectation(X^*)= X^*\) , and \(\condExpectation(X^*X)= X^*X\) (because \(\algebra\) is a *-algebra) then implies:
  \begin{equation}
    \sum_{x\le n} [K_x,X]^*[K_x,X] = X^*X + X^*X -X^*X -X^*X =0\;.
  \end{equation}
  Since each summand is non-negative, we conclude \([K_x,X]=0\) for all \(x\).
\end{proof}
\begin{definition}
  We say a conditional expectation \(\condExpectation\) is \emph{faithful} if \(\Delta(\sigma)=\sigma\) for some full-rank state \(\sigma\).
\end{definition}
Not every idempotent channel induces a conditional expectation in the Heisenberg picture, however:
\begin{claim}
  Let \(\Delta\) be an idempotent channel \textbf{that admits a full-rank fixed state} \(\sigma\).
  Then \(\condExpectation\) is a (faithful) conditional expectation.
\end{claim}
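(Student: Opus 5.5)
The plan is to reduce the claim to condition (3) of Claim~\ref{claim:conditional}, applied to the map \(\condExpectation\) with \(\algebra\coloneqq\ima\condExpectation\). First I will record the standing properties of \(\condExpectation\).
Since \(\Delta\) is CPTP, its Hilbert--Schmidt adjoint \(\condExpectation\) is completely positive and unital. Taking adjoints in \(\Delta\circ\Delta=\Delta\) gives \(\condExpectation\circ\condExpectation=\condExpectation\). Hence \(\condExpectation\) is a positive unital idempotent, so Claim~\ref{claim:conditional} applies. It then suffices to show that \(X\in\algebra\) implies \(X^*X\in\algebra\).

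For the key step, fix \(X\in\algebra\), so that \(\condExpectation(X)=X\). Because \(\condExpectation\) is unital and completely positive (in particular \(2\)-positive), the Kadison--Schwarz inequality gives
\begin{equation*}
  \condExpectation(X^*X)\;\ge\;\condExpectation(X)^*\condExpectation(X)\;=\;X^*X\;.
\end{equation*}
Hence \(P\coloneqq\condExpectation(X^*X)-X^*X\) is a positive operator. Next I pair \(P\) with the full-rank fixed state \(\sigma\):
\begin{equation*}
  \tr[\sigma P]\;=\;\tr[\Delta(\sigma)X^*X]-\tr[\sigma X^*X]\;=\;0\;,
\end{equation*}
using duality and \(\Delta(\sigma)=\sigma\). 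A positive operator whose trace against a full-rank (strictly positive) \(\sigma\) vanishes must be zero. Therefore \(X^*X=\condExpectation(X^*X)\in\algebra\).

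By the equivalence (3)\(\Leftrightarrow\)(1) in Claim~\ref{claim:conditional}, \(\condExpectation\) is then a conditional expectation onto \(\algebra\). It is faithful by the definition above, since \(\Delta\) fixes the full-rank state \(\sigma\).

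The only genuine obstacle is the Kadison--Schwarz step. I expect no difficulty there, since it is a standard inequality for unital \(2\)-positive maps. The full-rank hypothesis is used precisely to turn the one-sided inequality into an equality, and this explains the remark that idempotent channels without a faithful fixed state can fail the conclusion.
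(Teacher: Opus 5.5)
Your proposal is correct and follows the paper's proof essentially step for step. You reduce via Claim~\ref{claim:conditional} to showing that \(\ima\condExpectation\) is closed under \(X\mapsto X^*X\), use the Kadison--Schwarz inequality to get \(\condExpectation(X^*X)-X^*X\ge 0\), and then force this positive operator to vanish by pairing it with the full-rank fixed state \(\sigma\).
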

\begin{proof} 
    By Claim~\ref{claim:conditional}, 
    it is enough to prove that if \(X\in \ima \condExpectation\) then \(X^*X \in \ima \condExpectation\).
    As a unital 2-positive map,
    \(\condExpectation\) satisfies the Kadison-Schwartz inequality:
    \begin{equation}\label{eq:kadison-inequality}
       \condExpectation( X^* X) \ge \condExpectation( X^*)\condExpectation( X) = X^*X\;,
    \end{equation}
    or equivalently, 
         \begin{equation}\label{eq:kadison-inequality2}
      K \coloneqq \condExpectation( X^* X) - X^* X \ge 0\;.
     \end{equation} 
Since \(\sigma\) is a fixed state, \( \tr [\sigma  X^* X] = \tr[\Delta(\sigma)X^*X] = \tr [\sigma \condExpectation( X^* X)]\), or, equivalently
     \( \tr [\sigma K ]= 0\).
   Since \(K \ge 0\) and \(\sigma\) is a full-rank state, \(K\)  must be \(0\), or equivalently, \(\condExpectation(X^*X) = X^*X\).
    \end{proof}
\subsection{The Structure of a Faithful Conditional Expectation}
This section provides a constructive characterization of general destruction channels (idempotent channels with full-rank fixed state).
We follow Sec.~2.1 in Ref.~\cite{GR2022}.
Any \(*\)-subalgebra of \(\linearOperators(H)\) has the following structure: 
\begin{equation}
  \algebra = \bigoplus_{i} \mathbb{C}I ^{A_i} \otimes \linearOperators(B_i)\;,
\end{equation}
where \(H = \bigoplus_{i} H_i \) and each \(H_i\) decomposes into \(H_i = A_i \otimes B_i\) for all \(i\).
Any faithful conditional expectation \(\condExpectation\) onto \(\algebra\) is completely determined by full-rank density operators \(\tau_i\) on \(A_i\); in the Schrödinger-picture:
\begin{equation}\label{eq:terminal-decomposition}
  \Delta(\rho) = \bigoplus_{i} \tau_i \otimes \rho_i 
\end{equation}
where \(\rho_i \coloneqq \tr_{A_i} \Pi_i \rho \Pi_i\) for \(\Pi_i \) the projection onto the subspace \(H_i \subset H\).
Note that
\begin{equation}\label{eq:destruction-defining-state}
  \terminal{I} = \bigoplus_i d_{A_i} \tau_i \otimes I^{B_i}\;.
\end{equation}
\(\terminal{I}\) commutes with all fixed states \(\terminal{\rho}\) and all \(\Delta^*\)-fixed observables \(\expectation{X}\).
By Eq.~(\ref{eq:destruction-defining-state}), \(\Delta\) is self adjoint if and only if it is unital (\(\tau_i = \u^{A_i}\) for all \(i\)). In that case, it is called the \emph{trace-preserving conditional expectation} onto \(\algebra\) and has the form: 
\begin{equation}
  \Delta(\rho) \coloneqq \bigoplus_{i} \u_i \otimes \rho_i \;.
\end{equation}
For a general faithful conditional expectation \(\condExpectation\) onto the algebra \(\algebra\), we denote the trace-preserving conditional expectation onto \(\algebra\) by \(\tpExpectation\).
In that case, 
\begin{align}
  \terminal{\rho} &=\tpTerminal{\rho} \terminal{I}=\terminal{I} \tpTerminal{\rho} = \terminal{I}^{\frac 12 }  \tpTerminal{\rho} \terminal{I}^{\frac 12 }\;,\\
  \expectation{X} &= \tpExpectation\pa[\big]{\terminal{I}^{1/2} X \terminal{I}^{1/2}} \;.
\end{align}
Let us denote
\newcommand{\twisting}{\mathcal{T}}
\begin{equation}\label{def:twisting}
  \twisting(X) = \Delta(I)^{1/2} X \Delta(I)^{1/2}
\end{equation}
 so that \(\twisting^r(X) = \Delta(I)^{r/2} X \Delta(I)^{r/2}\) for all \(r\in \R\) and
 \begin{subequations}
  \begin{align} \label{eq:twisting-properties-1}
   \twisting\circ \tpExpectation = \Delta \quad \text{and} \quad \condExpectation = \tpExpectation \circ \twisting\;,&& \text{or equivalently,}\\
    \twisting^{-1}\circ\Delta = \tpExpectation = \condExpectation\circ\twisting^{-1}\;,&& \text{or equivalently,} \\
    \Delta \circ \twisting = \twisting \circ \tpExpectation \circ \twisting = \twisting \circ \condExpectation\;.
  \end{align}
 \end{subequations}
 In particular \( \tr [\twisting(X)] = \tr [X]\) for all \(X \in \algebra\) and \(\tr[ \twisting^{-1} (Y)] = \tr [Y]\) for all \(Y \in \ima \Delta\).

For \(X,Y \in \linearOperators(H)\), if \(X\) is fixed by \(\Delta\) or \(\condExpectation\) then
 \begin{equation}\label{eq:twisting-properties-2}
  \twisting^r(X) Y \twisting^r(X) =X  \twisting^{2r}(Y) X\;,\qquad \forall r\in\R \;,
 \end{equation}
and, if \(Y\) is also fixed by \(\Delta\) or \(\condExpectation\) then
 \begin{align}\label{eq:twisting-properties-final}
  \twisting^r(X)\twisting^s(Y) = \twisting^{r+s}(XY) \qquad \qquad \forall r,s\in\R\;.
 \end{align}
 If \(X\) is a positive operator fixed by \(\Delta\) or \(\condExpectation\) then 
  \begin{equation}\label{eq:twisting-properties-final2}
  \twisting(X)^r = \twisting^r(X^r) \qquad \forall r\in \R\;.
 \end{equation}

\section{The Quantum Resource Theory of Instability}
\label{appendix:fragility-theory}
We adopt the definition of a quantum resource theory from \cite{ZS2025}.
A resource theory \(\free\) specifies a collection of admissible physical systems \(A,B,\dots\) and a set of physical operations \(\free(A \to B)\subset \cptp(A \to B)\) between any two admissible systems \(A\) and \(B\), termed ``free operations''.
Here, \(\cptp(A \to B)\) denotes the set of all quantum channels from system \(A\) to system \(B\). 
We identify \(\free\) with the overall class of free operations.
The specification has to satisfy certain axioms (see below).

In \emph{instability}, the admissible physical systems are each equipped with a destruction channel, and the free operations are the destruction-covariant channels (Eq.~(\ref{eq:dcov})).
We now verify that this definition satisfies the axioms of a quantum resource theory.
\begin{enumerate}
  \item \textbf{The identity channel is free.} \(\id\) commutes with every channel; in particular, \(\id\circ\Delta = \Delta \circ \id\).
  \item \textbf{Swap is free.} To see this, apply the swap channel \(\mathsf{swap}^{AB\to BA} \) on an arbitrary product state \begin{equation}
    \mathsf{swap}^{AB\to BA} \circ \Delta(\rho^A \otimes \sigma^B)
    =  \Delta(\sigma^B)\otimes \Delta(\rho^A)
    = \Delta \circ \mathsf{swap}^{AB\to BA} (\rho^A \otimes \sigma^B)\;,
  \end{equation}
   which holds by locality of destruction. 
  Because separable states span \(\linearOperators(AB)\), we conclude by linearity that \(\mathsf{swap}\) is destruction-covariant.
  \item  \textbf{Discarding information is free.} Discarding a system \(A\) is the trace map into the trivial system \(\C\). The only channel on this system is \(\id^\C\) and therefore we declare \(\Delta^{\! \C} = \id ^\C\).
  
  Since \(\Delta^{\!A}\) is trace preserving, \(\tr_A \circ\Delta^{\! A} = \tr_A  = \id^\C \circ \tr_A = \Delta^{\! \C} \circ \tr _A\).
  \item \textbf{Sequential composition is free.} If \(\mathcal{N, M}\) are free (and sequentially composable), then 
 \begin{equation}
   \Delta\circ \mathcal{M\circ N} = \mathcal{M}\circ\Delta\circ \mathcal{N} = \mathcal{M \circ N}\circ\Delta \;.
 \end{equation}
  \item \textbf{Parallel composition is free.} By \AssumptionLink{locality}{locality of destruction}, 
  \begin{equation}
    \Delta\circ(\mathcal{M\otimes N}) = (\Delta \circ \mathcal{M})\otimes(\Delta \circ \mathcal{N}) =(\mathcal{M}\circ \Delta )\otimes(\mathcal{N}\circ \Delta) = (\mathcal{M\otimes N})\circ \Delta\;.
  \end{equation}
\end{enumerate}
The destruction-covariance (Eq.~(\ref{eq:dcov})) is a linear condition. Therefore, \(\free(A \to B)\) is a convex set, and more specifically, an affine subspace of the set of all quantum channels. This makes instability an \emph{affine} resource theory \cite{Gour2017,CG2019}.

The free states in a quantum resource theory are determined by the free channels as follows: we identify each state \(\sigma\in \states(A)\) with its `creator' channel \(\creator{\sigma} \in \cptp(\C \to A)\) (that satisfy \(\creator{\sigma}(\lambda) = \lambda \sigma\)). Here
Then the free states are those \(\sigma\in \states(A)\) for which \(\creator{\sigma}\) is free.

In instability theory, because \(\Delta^{\!\C}= \id^\C\), \(\creator{\sigma}\) is destruction-covariant iff \(\sigma\) is fixed by \(\Delta^{\!A}\).
 Thus, the free states are exactly the \(\Delta\)-fixed states.
 We henceforth denote the set of free states on the system \(A\) by \(\free(A)\).

\section{Monotones}
This appendix establishes the structural and mathematical properties of instability monotones.
We begin in Sec.~\ref{appendix:interpolation-optimized-direct} by constructing a broad class of intermediate monotones.
In Sec.~\ref{appendix:optimization} we analyze an optimization problem that generalizes the minimization in computing \(D_{\alpha,z} \divergence{\rho}{\free}\).
We apply this machinery in Sec.~\ref{appendix:renyi} to prove the additivity of the Rényi instability monotones (Theorem~\ref{thm:3-renyi}).
Finally, in Sec.~\ref{appendix:extremals}, we establish the extremality of \(\dmin \divergence{\rho}{\free}\) and \(\dmax \divergence{ \rho}{\Delta(\rho)}\) axiomatically among the family of (normalized) additive monotones.

Throughout this appendix \(\positiveOperators(A) \subset \linearOperators(A)\) denotes the cone of positive semidefinite operators on a Hilbert space \(A\), \(\states(A) \subset\positiveOperators(A) \) the subset of quantum states and \(\free(A)\) the \(\Delta\)-fixed states.
\subsection{Intermediate Monotone \(\D\)-Extensions}
\label{appendix:interpolation-optimized-direct}
\newcommand{\mean}{\star}
Let \(\mean\) be a binary action on all positive operators across all systems, decomposing to individual systems as:
\begin{equation}
  \begin{aligned} \label{def:mean}
    \mean:\positiveOperators(A)^2 & \to \positiveOperators(A)\\
     (X,Y) & \mapsto X \mean Y\;.
  \end{aligned}
\end{equation}
Additionally, assume that \(\mathcal{N}(X\mean Y) \le \mathcal{N}(X) \mean \mathcal N (Y)\) for any \(\mathcal{N} \in \cptp(A \to B)\), and for all \(X,Y \in \positiveOperators(A)\).

Specific examples of such binary actions can be arithmetic means 
  \((X , Y) \mapsto \lambda X +(1-\lambda) Y\) 
 and geometric means 
 \begin{equation}\label{eq:def-geometric-mean}
  X \#_\lambda Y=Y ^{1/2} (Y^{-1/2} X Y^{-1/2})^\lambda Y^{1/2}\;, 
 \end{equation}
 where \(\lambda\in [0,1]\).
More generally, \(\mean\) can represent any Kubo-Ando \emph{connection} \cite{KA1979}. 

Recall the definition of quantum divergence \cite{GT2020}:
\begin{definition}[\cite{GT2020}]
 A quantum divergence is a function \(\D: \bigcup_A \states(A)^2 \to (-\infty,\infty]\) acting on pairs of quantum states across all finite-dimensional quantum systems and satisfies the DPI:
  \begin{equation}
    \D \divergence{\mathcal{N}({\rho})}{\mathcal{N}(\sigma)} \le \D \divergence{\rho}{\sigma} \qquad\qquad \forall \rho,\sigma \in \states(A)\;,\; \mathcal{N} \in \cptp(A \to B)
  \end{equation}
\end{definition}
Following \cite{gour2025b}, a divergence \(\D \divergence{\cdot}{\cdot}\) defined on pairs of quantum states may be naturally extended so that the second argument can be any positive semidefinite operator, so that \(\D: \bigcup_A \states(A) \times \positiveOperators(A) \to (-\infty,\infty]\). This is usually done by setting
\begin{equation}\label{eq:Lowner-normalization}
  \D \divergence{\rho}{t\sigma} \coloneqq \D \divergence{\rho}{\sigma} - \log t\;, \qquad \forall t \in (0,\infty)\;.
\end{equation}
In fact, all the quantum divergences introduced in the main paper satisfy Eq.~(\ref{eq:Lowner-normalization}) inherently.
\begin{observation}[\cite{gour2025b}]
A quantum divergence  \(\D: \bigcup_A \states(A) \times \positiveOperators(A) \to (-\infty,\infty]\) satisfies:
\begin{enumerate}
  \item Extended DPI: ~~~~\(\displaystyle \D \divergence{\mathcal{N}({\rho})}{\mathcal{N}(X)} \le \D \divergence{\rho}{X}\;,\qquad \forall \rho \in \states(A)\;,\; X \in \positiveOperators(A)\;,\; \mathcal{N} \in \cptp(A \to B)\).
  \item \label{Löwner-monotonicity} Löwner monotonicity: ~~~~\(X \ge Y \quad \Rightarrow \quad \displaystyle \D \divergence{\rho}{X} \le \D \divergence{\rho}{Y}\;.\)
\end{enumerate}
\end{observation}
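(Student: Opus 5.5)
The first item reduces to ordinary DPI via the normalization convention~\eqref{eq:Lowner-normalization}; the second I would reduce to the first by a flagged-extension argument.

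\textbf{Extended DPI.} Let $X\in\positiveOperators(A)$ be nonzero and write $X=t\sigma$ with $t=\tr X>0$ and $\sigma\in\states(A)$. Since $\mathcal{N}$ is linear and trace preserving, $\mathcal{N}(X)=t\,\mathcal{N}(\sigma)$ with $\mathcal{N}(\sigma)\in\states(B)$. Applying~\eqref{eq:Lowner-normalization} on both sides and then the ordinary DPI on states gives
\begin{equation*}
\D\divergence{\mathcal N(\rho)}{\mathcal N(X)}=\D\divergence{\mathcal N(\rho)}{\mathcal N(\sigma)}-\log t\le \D\divergence{\rho}{\sigma}-\log t=\D\divergence{\rho}{X}.
\end{equation*}
The degenerate case $X=0$ is handled by the convention $\D\divergence{\rho}{0}=+\infty$, which I would state explicitly.

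\textbf{Löwner monotonicity.} Let $X\ge Y\ge0$, put $t=\tr X$ and $s=\tr Y$, and assume $0<s\le t$ (if $s=0$ the right-hand side is $+\infty$). I would reduce to the extended DPI on the flagged system $A\otimes\C^2$. Consider the positive operator
\begin{equation*}
W=Y\otimes|0\rangle\!\langle0|+(X-Y)\otimes|1\rangle\!\langle1|,
\end{equation*}
which is positive because $X\ge Y$. Tracing out the flag sends $\rho\otimes|0\rangle\!\langle0|\mapsto\rho$ and $W\mapsto X$. The extended DPI then gives
\begin{equation*}
\D\divergence{\rho}{X}\le\D\divergence{\rho\otimes|0\rangle\!\langle0|}{W}.
\end{equation*}
It remains to show that $\D\divergence{\rho\otimes|0\rangle\!\langle0|}{W}\le \D\divergence{\rho}{Y}$. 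In other words, adding positive weight in a block orthogonal to the support of the first argument does not increase the divergence.

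For this last step I would use the embedding channel $\mathcal E:Z\mapsto Z\otimes|0\rangle\!\langle0|+\tr[Z]\,\omega\otimes|1\rangle\!\langle1|$ restricted appropriately. Cleaner is to find a channel that sends $\rho\mapsto\rho\otimes|0\rangle\!\langle0|$ and $Y\mapsto W$ (up to the scalar bookkeeping of~\eqref{eq:Lowner-normalization}). Such a channel exists only because the first argument is fixed. I would therefore build it on the two-dimensional span of the relevant inputs, or, failing that, invoke the construction of~\cite{gour2025b}, where the extension to positive second arguments is defined so that this block property holds.

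\textbf{Main obstacle.} The main obstacle is exactly this last inequality. The pure state-level DPI does not obviously give the $-\log p$ bound $\D\divergence{\rho}{p\sigma+(1-p)\tau}\le\D\divergence{\rho}{\sigma}-\log p$ without an extra structural input. I would try first to realize the required map as a measure-and-prepare channel adapted to $\rho$, and only then fall back on the defining properties from~\cite{gour2025b}.
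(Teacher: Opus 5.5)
Your extended-DPI argument is correct, and so is the first half of your Löwner reduction: tracing out the flag gives $\D\divergence{\rho}{X}\le\D\divergence{\rho\otimes\ket{0}\!\bra{0}}{W}$. The gap is exactly the step you flagged, and it cannot be closed from the stated hypotheses. DPI on states together with the rescaling rule~\eqref{eq:Lowner-normalization} does not imply Löwner monotonicity. For a counterexample, take $\D=2D$ (twice the Umegaki relative entropy), which satisfies DPI, and extend it by~\eqref{eq:Lowner-normalization}. On a qubit let $\rho=Y=\ket{0}\!\bra{0}$ and $X=I$. Then $X\ge Y$, but $\D\divergence{\rho}{X}=2D\divergence{\ket{0}\!\bra{0}}{I/2}-\log 2=\log 2>0=\D\divergence{\rho}{Y}$.

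Consistently with this, the channel you hope to build never exists when $X\neq Y$ and $D\divergence{\rho}{Y/s}<\infty$. A channel sending $\rho\mapsto\rho\otimes\ket{0}\!\bra{0}$ and $Y/s\mapsto W/t$ would raise the Umegaki relative entropy, since $D\divergence{\rho\otimes\ket{0}\!\bra{0}}{W/t}=D\divergence{\rho}{Y/s}+\log(t/s)$. Your map $\mathcal{E}$ is also not trace preserving. Deferring to \cite{gour2025b} does not help either, because the paper gives no argument beyond that citation.

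What is missing is a hypothesis tying the scale of $\D$ to the $-\log t$ in~\eqref{eq:Lowner-normalization}. Write $p=s/t$, $\sigma=Y/s$, $\tau=(X-Y)/(t-s)$ and $\gamma_p=p\ket{0}\!\bra{0}+(1-p)\ket{1}\!\bra{1}$. Then Löwner monotonicity is equivalent to $\D\divergence{\rho\otimes\ket{0}\!\bra{0}}{\sigma\otimes\gamma_p}\le\D\divergence{\rho}{\sigma}-\log p$ for all $\rho,\sigma,p$. Necessity follows by applying Löwner monotonicity to $\sigma\otimes\gamma_p\ge p\,\sigma\otimes\ket{0}\!\bra{0}$. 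The condition holds, for instance, when $\D$ is additive and normalized. Additivity splits the left side into $\D\divergence{\rho}{\sigma}+\D\divergence{\ket{0}\!\bra{0}}{\gamma_p}$. The identity $\D\divergence{\ket{0}\!\bra{0}}{\gamma_p}=-\log p$ is Claim~\ref{claim:weakly-additive-monotones-normalization} applied to the additive monotone $\D\divergence{\cdot}{\Delta(\cdot)}$ on the currency (cf.~\cite{GT2021}).

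Given that condition, the right channel is not one taking $Y$ to $W$. It is the conditional replacer $\mathcal{R}(Z)=(I\otimes\bra{0})Z(I\otimes\ket{0})\otimes\ket{0}\!\bra{0}+\tr[(I\otimes\bra{1})Z(I\otimes\ket{1})]\,\tau\otimes\ket{1}\!\bra{1}$. This channel fixes $\rho\otimes\ket{0}\!\bra{0}$ and maps $\sigma\otimes\gamma_p$ to $W/t$. DPI under $\mathcal{R}$ plus the condition above then give $\D\divergence{\rho\otimes\ket{0}\!\bra{0}}{W}=\D\divergence{\rho\otimes\ket{0}\!\bra{0}}{W/t}-\log t\le\D\divergence{\rho}{\sigma}-\log p-\log t=\D\divergence{\rho}{Y}$. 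That is precisely your missing inequality.

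The concrete divergences the paper actually uses ($D_{\alpha,z}$ on $\region$, $\dmax$, $\dmin$, and $D_{\!H}^\epsilon$) are either additive and normalized or manifestly antimonotone in the second argument. So the downstream applications are unaffected. For an arbitrary DPI divergence, however, the Löwner item needs this extra assumption.
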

With this in mind, we can define many more instability monotones that extend \(\D\).
\begin{claim}
  Let \(\D: \bigcup_A \states(A) \times \positiveOperators(A) \to (-\infty,\infty]\) be a divergence and \(\mean\) as defined around Eq.~(\ref{def:mean}), then
  \begin{enumerate}
    \item \(\D \divergence{\rho}{\omega \mean \sigma} \ge \D \divergence{\mathcal{N}(\rho)}{\mathcal{N}(\omega) \mean \mathcal{N}(\sigma)}\) for all \(\rho,\omega,\sigma\in \states(A)\) and \(\mathcal{N}\in \cptp(A \to B)\).
    \item The function \(M_\D^\mean:\states \to (-\infty,\infty]\) defined by:
 \begin{equation}
  M_\D^\mean(\rho) \coloneqq \inf_{\sigma\in \free}\D \divergence{\rho}{\Delta(\rho)\mean \sigma}\;,
\end{equation}
 is an instability monotone.
  \end{enumerate}
\end{claim}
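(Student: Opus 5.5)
Part 1 follows from the two properties already available: the extended DPI of \(\D\), and the hypothesis on \(\mean\) that \(\mathcal{N}(X\mean Y)\le\mathcal{N}(X)\mean\mathcal{N}(Y)\) for every channel \(\mathcal{N}\). We first apply the extended DPI with second argument \(X=\omega\mean\sigma\in\positiveOperators(A)\):
\[
\D\divergence{\rho}{\omega\mean\sigma}\ge\D\divergence{\mathcal{N}(\rho)}{\mathcal{N}(\omega\mean\sigma)}.
\]
The hypothesis gives the Löwner inequality \(\mathcal{N}(\omega\mean\sigma)\le\mathcal{N}(\omega)\mean\mathcal{N}(\sigma)\). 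Löwner monotonicity (item~\ref{Löwner-monotonicity} of the Observation) then yields
\[
\D\divergence{\mathcal{N}(\rho)}{\mathcal{N}(\omega\mean\sigma)}\ge\D\divergence{\mathcal{N}(\rho)}{\mathcal{N}(\omega)\mean\mathcal{N}(\sigma)}.
\]
Chaining the two inequalities proves part 1.

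For part 2, fix a free channel \(\mathcal{N}\in\free(A\to B)\) and a state \(\rho\in\states(A)\). Take any \(\sigma\in\free(A)\) and apply part 1 with \(\omega=\Delta(\rho)\). Destruction-covariance (Eq.~(\ref{eq:dcov})) gives \(\mathcal{N}(\Delta(\rho))=\Delta(\mathcal{N}(\rho))\). It also gives \(\mathcal{N}(\sigma)=\mathcal{N}(\Delta(\sigma))=\Delta(\mathcal{N}(\sigma))\), so \(\mathcal{N}(\sigma)\in\free(B)\). Hence
\[
\D\divergence{\rho}{\Delta(\rho)\mean\sigma}\ge\D\divergence{\mathcal{N}(\rho)}{\Delta(\mathcal{N}(\rho))\mean\mathcal{N}(\sigma)}\ge M_\D^\mean(\mathcal{N}(\rho)).
\]
The last step holds because \(\mathcal{N}(\sigma)\) is an admissible point in the infimum defining \(M_\D^\mean(\mathcal{N}(\rho))\). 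Taking the infimum over \(\sigma\in\free(A)\) on the left gives \(M_\D^\mean(\rho)\ge M_\D^\mean(\mathcal{N}(\rho))\), which is the monotonicity claim.

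The only delicate point is technical: the extended divergence has to be well defined when its second argument is an operator such as \(\omega\mean\sigma\). That operator need not be normalized, and it could have support issues. The extension via Eq.~(\ref{eq:Lowner-normalization}) together with the Observation from \cite{gour2025b} covers this, so I would simply cite those facts. No other obstacle is expected, since both parts are direct consequences of the assumed inequalities.
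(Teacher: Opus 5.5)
Your proof is correct and follows the paper's argument essentially step for step. Part 1 chains the extended DPI with L\"owner monotonicity applied to \(\mathcal{N}(\omega\mean\sigma)\le\mathcal{N}(\omega)\mean\mathcal{N}(\sigma)\); part 2 applies part 1 with \(\omega=\Delta(\rho)\), uses destruction-covariance and \(\mathcal{N}(\sigma)\in\free(B)\), and then takes the infimum over \(\sigma\). Your explicit check \(\mathcal{N}(\sigma)=\mathcal{N}(\Delta(\sigma))=\Delta(\mathcal{N}(\sigma))\) justifies a step the paper only asserts.
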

\begin{remark}
  If \(\mean\) is normalized so that \(1 \mean 1 = 1\), then by monotonicity under channels \(\sigma \mean \sigma =\sigma\) for all states \(\sigma\).
  To observe this, let \(\creator{\sigma}\in \cptp(\C \to A)\) be the creator of \(\sigma\) and note that by the monotonicity of \(\mean\) under quantum channels, \(\sigma \mean \sigma = \creator{\sigma}(1) \mean \creator{\sigma}(1) \ge \creator{\sigma}(1 \mean 1) = \creator{\sigma}(1) = \sigma\). On the other hand, \(\tr [\sigma\mean \sigma] \le \tr [\sigma] \mean \tr [\sigma] = 1 \mean 1 = 1\).

  Under this normalization, \(M_\D^\mean\) restricts back to the divergence \(\D \divergence{\cdot}{\gamma}\) on thermodynamic systems with Gibbs state \(\gamma\). 
\end{remark}
\begin{proof}
  1.
    Monotonicity of \(\mean\) under CPTP maps means that \(\mathcal{N}\pa*{\omega}\mean \mathcal{N}\pa*{\sigma} \ge \mathcal{N}\pa[\big]{\omega\mean \sigma}\);
  Thus, by Löwner monotonicity of \(\D\),
  \begin{align}
     \D \divergence{\mathcal{N}(\rho)}{\mathcal{N}\pa*{\omega}\mean \mathcal{N}\pa*{\sigma}}
       & \le 
        \D \divergence{\mathcal{N}(\rho)}{\mathcal{N}\pa[\big]{\omega\mean \sigma}}
        \\ 
        \explain{DPI}& \le
         \D \divergence{\rho}{\omega\mean \sigma}\;.
  \end{align}

  2. Let \(\mathcal{N} \in \free(A \to B)\) and \(\rho\in \states(A)\).
  For fixed \(\sigma\in \free(A)\), by the previous item,
  \begin{align}
    \D \divergence{\rho}{\Delta(\rho)\mean \sigma} & \ge \D \divergence{\mathcal{N}(\rho)}{\mathcal{N} \pa[\big]{\Delta(\rho)} \mean \mathcal{N}(\sigma)} \\
    \explain{Destruction Covariance}& \ge 
    \D \divergence{\mathcal{N}(\rho)}{\Delta \pa[\big]{\mathcal{N}(\rho)} \mean \mathcal{N}(\sigma)} \\
    \explain{\(\mathcal{N}(\sigma) \in \free(B)\)} &\ge \inf _{\omega \in \free(B)} \D \divergence{\mathcal{N}(\rho)}
    {(\Delta\circ\mathcal{N}\pa*{\rho})\mean \omega} 
    \\ & \defeq M_\D^\mean(\mathcal{N}(\rho))\;. 
  \end{align}
However, \(\sigma\in \free(A)\) was arbitrary, so
\( \displaystyle  M_{\D}^{\mean}(\rho) \defeq\!\! \inf_{\sigma \in \free(A)} \!\!
      \D \divergence{\rho}{\Delta\pa*{\rho}\mean  \sigma} 
      \ge M_\D^\mean(\mathcal{N}(\rho))\;.
\)
\end{proof}

\subsection{Optimizing Trace Functionals}
\label{appendix:optimization}
In this section all operators act on system \(A\).

Recall the \(\alpha\)-\(z\) Rényi divergence for \(\alpha,z \in (0,\infty)\)
\begin{align}
  D_{\alpha,z} \divergence{\rho}{\sigma} \coloneqq \frac 1 {\alpha-1} \log Q_{\alpha,z} \divergence{\rho}{\sigma} \;,
\end{align}
where
\begin{align}\label{eq:Q-alpha_z}
    Q_{\alpha,z} \divergence{\rho}{\sigma} &= \tr \pb[\big]{\pa[\big]{\rho^{\frac \alpha {2z}}  \sigma ^{\frac{1-\alpha}{z}}\rho^{\frac \alpha {2z}}}^{z}}\;.
\end{align}
We restrict the parameters \(\alpha\) and \(z\) to the following DPI region \cite{Zhang2020}
\begin{equation}\label{dpi-domain}
  \region \coloneqq\pc*{(\alpha, z) \in (0,\infty)^2 \smallgiven \begin{matrix}
    \alpha \in (0,1),z \ge \max \pc{\alpha, 1 - \alpha}\;, & \text{or}  \\
    \alpha = 1, z >0 \;, & \text{or} \\
    \alpha >1 , \max\pc{\alpha/2, \alpha - 1} \le z \le \alpha
  \end{matrix}}
\end{equation} 

Our goal is to find the minimizing state \(\optimizer \in \free(A)\) such that 
\(
  D_{\alpha,z} \divergence{\rho}{\optimizer} = D_{\alpha,z} \divergence{\rho}{\free}
 \).
Equivalently, \(\optimizer\) optimizes \(\sigma\mapsto Q_{\alpha,z} \divergence{\rho}{\sigma}\) over \(\free(A)\) (maximizes for \(\alpha<1\), minimizes for \(\alpha>1\), and at \(\alpha=1\) corresponds to the limiting Umegaki case).
In what follows, we generalize this optimization problem to more general trace functionals.

Throughout this section, fix \(X\in \positiveOperators(A)\), \(r\in [-1,1]\), \(z\in(0,\infty)\) and define a trace functional over \(\sigma\in \positiveOperators(A)\):
\begin{align}\label{eq:functional}
  F(\sigma) \coloneqq  \tr \pb[\big]{(\sigma^{\frac r 2} X \sigma^{\frac r 2})^z}=\tr\pb[\big]{ (Z \sigma^r Z^*)^z}
\end{align}
where \(Z^*Z = X\) and the rightmost equality holds because \(Y^*Y\) and \(YY^*\) have the same eigenvalues.
Further, suppose that the parameters \(r,z\) lie within the domain:
\begin{align}
  \region_F \coloneqq
\pc*{(r, z) \in [-1,1] \times (0,\infty) \smallgiven \begin{matrix}
    0<r<1 \Rightarrow z\le r^{-1}\;, \\ r=1 \Rightarrow z<1  
  \end{matrix}} \;.
\end{align}
By \cite{CFL2018} (Proposition 5), \(F\) is \emph{concave} if \(r \in [0,1]\) and \(z\in(0,1/r)\), and is \emph{convex} if \(r\in [-1,0]\) and \(z>0\) (as well as within the non-covered domain \(r \in [1,2]\) and \(z\ge 1/r\)).
Note that if \(r=\frac {1-\alpha}z\), then (1) \((\alpha,z)\in \region\) implies \((r,z) \in \region_F\) and (2) if \(X=\rho^{\alpha/z}\), then \(Q_{\alpha,z}\divergence{\rho}{\sigma} = F(\sigma)\).

The next theorem characterizes the optimizer free state of \(F\) as a solution of an implicit equation.
\begin{theorem}[A generalization of Eq.~(\ref{eq:implicit-form})]
  \label{thm:implicit-form-for-optimizer}
  Let \(X\in \positiveOperators(A)\) and \((r,z)\in \region_F\).
The optimizer of \(F(\sigma) = \tr \pb[\big]{(\sigma^{\frac r 2} X \sigma^{\frac r 2})^z}\) over \(\free(A)\) satisfies
\begin{equation} \label{eq:fixed-point-condition}
  \optimizer \propto \Delta\pa*{\pa*{\optimizer^{\frac r 2}X \optimizer^{\frac r 2}}^z}
\end{equation}
\end{theorem}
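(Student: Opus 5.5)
The plan is to treat this as a constrained optimization over the convex set $\free(A)$ and derive the fixed-point condition from first-order (KKT/variational) conditions. The key structural input is that $\free(A)=\Delta(\states(A))$ consists of states of the form $\bigoplus_i \tau_i\otimes\omega_i$ (Eq.~(\ref{eq:terminal-decomposition})). Equivalently, $\sigma=\twisting(\bar\sigma)$ with $\bar\sigma$ ranging over $\algebra\cap\states$, up to the normalization $\tr[\twisting(Y)]=\tr[Y]$ on $\algebra$. First I will argue that an optimizer exists and can be taken full rank on the support relevant to $X$. Existence follows from compactness and continuity of $F$ (for $r<0$, $F$ may blow up on the boundary, which only helps minimization). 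Concavity or convexity of $F$, by \cite{CFL2018} Prop.~5 on $\region_F$, guarantees that any stationary point is a global optimizer, and conversely that the optimizer is characterized by the first-order condition. I will first restrict to faithful $\sigma$ and handle boundary points by a perturbation $\sigma_\delta=(1-\delta)\sigma_*+\delta\,\Delta(I)/d$.

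Next I compute the directional derivative of $F$ at a full-rank $\sigma$ in a direction $H$ tangent to $\free(A)$, that is, $H=\Delta(H)$ Hermitian with $\tr H=0$. I will use $F(\sigma)=\tr[(Z\sigma^rZ^*)^z]$ and the derivative of the trace of a power, which gives
\[
DF(\sigma)[H]=z\,\tr\bigl[(Z\sigma^rZ^*)^{z-1}Z\,D(\sigma^r)[H]\,Z^*\bigr].
\]
Because $H$ and $\sigma$ both lie in $\ima\Delta$, I will simplify the Fréchet derivative of $\sigma\mapsto\sigma^r$. The trick is to pass to the twisted picture: write $\sigma=\twisting(\bar\sigma)$ and use Eq.~(\ref{eq:twisting-properties-final2}) together with the fact that $\Delta(I)$ commutes with $\ima\Delta$. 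Then $\sigma^r=\twisting^r(\bar\sigma^r)$, and perturbations inside the algebra commute with $\Delta(I)$. On each block $\tau_i\otimes\omega_i$ the power factorizes as $\tau_i^r\otimes\omega_i^r$, so the derivative in the $\omega_i$ direction is the usual divided-difference expression. Pairing this with $\Delta$ via the Hilbert–Schmidt duality $\tr[\Delta(Y)H']=\tr[Y\condExpectation(H')]$ and bimodularity (Eq.~(\ref{eq:bimodularity})), I expect the stationarity condition to reduce to
\[
\tr\bigl[\Delta\bigl((\sigma^{r/2}X\sigma^{r/2})^z\bigr)\,\sigma^{-1}H\bigr]=0
\]
for all admissible $H$. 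The route is through the identity $\tr[(\sigma^{r/2}X\sigma^{r/2})^{z}\,\cdot\,]$ and a Lieb-type symmetrization, using that $\sigma$ commutes with $\Delta(Y)$ blockwise up to the $\tau_i$ factor.

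Finally, I conclude from this condition that $\sigma_*^{-1}\Delta\bigl((\sigma_*^{r/2}X\sigma_*^{r/2})^z\bigr)$ lies in the orthogonal complement of the trace-zero part of the relevant space. It is therefore a scalar multiple of the identity (after accounting for the $\tau_i$ twist), which is exactly Eq.~(\ref{eq:fixed-point-condition}). The constant is fixed by taking traces, which gives the normalization in Eq.~(\ref{eq:implicit-form}).

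The main obstacle is the derivative step. The Fréchet derivative of $\sigma^r$ does not commute with $X$, so a naive chain rule does not produce $(\sigma^{r/2}X\sigma^{r/2})^z$ sandwiched cleanly. If this fails, my first fallback is a variational (Legendre-type) representation of $F$ in the style of Rubboli et al.\ \cite{RGT2025}, namely $F(\sigma)=\sup_{\omega}\text{ or }\inf_{\omega}\{\ldots\}$ with a joint optimization over $\omega$. That representation makes the $\sigma$-dependence linear in $\sigma^r$ paired with an operator commuting appropriately; optimizing over $\sigma\in\free$ then gives, via a Hölder-equality argument, the fixed point $\sigma_*\propto\Delta(\omega_*)$ with $\omega_*=(\sigma_*^{r/2}X\sigma_*^{r/2})^z$.
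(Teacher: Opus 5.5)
\textbf{Verdict: there is a genuine gap.} Your architecture matches the paper's: first-order conditions at a full-rank optimizer, bimodularity, and twisting by $\Delta(I)^{1/2}$ to reduce to the trace-preserving conditional expectation $\bar\Delta$. Your target condition $\tr[\Delta(W)\sigma^{-1}H]=0$ is also right; it is equivalent to the conclusion. But the step that carries the content is not supplied, and the mechanism you sketch for it relies on a false claim.

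Fixed states do not in general commute with $\Delta(Y)$, blockwise or otherwise. Take $\Delta(\rho^{AB})=\frac{I^A}{d}\otimes\rho^B$ (conditional nonuniformity): $\frac{I^A}{d}\otimes\sigma^B$ and $\frac{I^A}{d}\otimes y^B$ commute only if $\sigma^B$ and $y^B$ do. That $\sigma_*$ commutes with $\Delta(W)$, where $W:=(\sigma_*^{r/2}X\sigma_*^{r/2})^z$, is a \emph{consequence} of optimality, not a structural input. Using it to collapse the divided differences of $t\mapsto t^r$ into ordinary derivatives is circular. The ``Lieb-type symmetrization'' and the Legendre/H\"older fallback are too unspecified to fill this.

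\textbf{The missing idea} is short. Work in the self-adjoint picture.
\begin{itemize}
\item Write $\sigma=\sum_x p_xP_x$ (full rank, distinct eigenvalues) and $W_1:=\sigma^{-r/2}W\sigma^{-r/2}=X^{1/2}(X^{1/2}\sigma^rX^{1/2})^{z-1}X^{1/2}$.
\item Let $\mathcal{L}_\sigma(Y):=\sum_{x,y}f^{[1]}(p_x,p_y)P_xYP_y$ be the Fr\'echet derivative of $t\mapsto t^r$, with $f^{[1]}$ its first divided difference. Then $DF(\sigma)[H]=z\,\tr[\mathcal{L}_\sigma(W_1)H]$.
\item Stationarity over traceless Hermitian $H$ in the algebra reads $\bar\Delta(\mathcal{L}_\sigma(W_1))\propto I$.
\item The $P_x$ lie in the algebra, so bimodularity gives $\bar\Delta\circ\mathcal{L}_\sigma=\mathcal{L}_\sigma\circ\bar\Delta$. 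Hence $\mathcal{L}_\sigma(\bar\Delta(W_1))=cI$.
\item Every $f^{[1]}(p_x,p_y)$ is nonzero for $r\neq 0$. So the blocks $P_x\bar\Delta(W_1)P_y$ with $x\neq y$ vanish; this is where commutation actually comes from.
\item The diagonal blocks give $\bar\Delta(W_1)=\frac{c}{r}\sigma^{1-r}$. Bimodularity again yields $\bar\Delta(W)=\sigma^{r/2}\bar\Delta(W_1)\sigma^{r/2}\propto\sigma$, and untwisting gives the theorem.
\end{itemize}
The paper does exactly this with Rubboli's integral form of the gradient. There the off-diagonal vanishing follows from strict positivity of $p_x^{-r/2}p_y^{-r/2}\int_0^\infty t^r\,dt/((t+p_x)(t+p_y))$.

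\textbf{A secondary issue: interiority.} Perturbing $\sigma_*$ toward $\Delta(I)/d$ cannot establish that the optimizer is full rank when $X$ is singular, because the optimizer can then genuinely be singular. For example, with dephasing, $X=|0\rangle\langle 0|$ and $r>0$, one gets $F=p_0^{rz}$, maximized at $|0\rangle\langle 0|$. The paper instead approximates $X$ by full-rank operators, for which the optimizer is full rank by Lemma~3.5 of Rubboli's thesis.
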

Here \(X\propto Y\) means that \(X = \lambda Y\) for some \(\lambda\in (0,\infty)\). To prove Theorem~\ref{thm:implicit-form-for-optimizer}, we need first to establish a few claims. 
  \begin{claim} \label{proposition-convex-optimizer}
    Let \(f\) be a convex/concave differentiable real-valued function on \(\positiveOperators(A)\) and \(\sigma\in \free(A)\) with full rank.
    Then, \(\sigma\) optimizes \(f\) over \(\free(A)\) if and only if \(\expectation{\nabla f [\sigma]}\) is a scalar matrix.
  \end{claim}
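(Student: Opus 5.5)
The plan is to reduce the claim to the first-order optimality condition for a convex (or concave) function over a convex set. I will then describe the tangent space of that set at a full-rank point using the Heisenberg dual $\condExpectation$.

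First I identify the feasible set. The set $\free(A)=\Delta(\states(A))$ is convex, and it is the intersection of the real affine subspace $\{Y=Y^*:\Delta(Y)=Y,\ \tr Y=1\}$ with the positive cone. Since $\sigma$ is full rank, it lies in the relative interior of $\free(A)$. Hence every Hermitian direction $H$ with $\Delta(H)=H$ and $\tr H=0$ is admissible: $\sigma+tH\in\free(A)$ for all $|t|$ small.

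By convexity or concavity of $f$, the point $\sigma$ is a global optimizer over $\free(A)$ if and only if it is a stationary point relative to this affine subspace. That is,
\begin{equation*}
\tr\pb{\nabla f[\sigma]\,H}=0\quad\text{for all } H=H^*,\ \Delta(H)=H,\ \tr H=0 .
\end{equation*}
Sufficiency follows from the gradient inequality $f(\omega)\ge f(\sigma)+\tr[\nabla f[\sigma](\omega-\sigma)]$, with the reverse inequality in the concave case, using $\omega-\sigma$ as an admissible direction. Necessity follows because a relative-interior optimizer must have vanishing directional derivative along every two-sided admissible direction.

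Next I translate this orthogonality condition into a statement about $\condExpectation$. Any admissible $H$ can be written as $H=\Delta(K)$ with $K$ Hermitian and $\tr K=0$; conversely, every such $\Delta(K)$ is admissible, because $\Delta$ is trace-preserving and idempotent. Therefore
\begin{equation*}
\tr\pb{\nabla f[\sigma]\,\Delta(K)}=\tr\pb{\condExpectation(\nabla f[\sigma])\,K}.
\end{equation*}
This vanishes for every traceless Hermitian $K$ exactly when $\condExpectation(\nabla f[\sigma])$, which is Hermitian since $f$ is real-valued, is orthogonal to all traceless Hermitian matrices. That happens exactly when $\condExpectation(\nabla f[\sigma])$ is a multiple of $I$, which proves both directions.

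The main obstacle is the necessity direction: one must confirm that full rank of $\sigma$ really puts it in the relative interior of $\free(A)$. Concretely, for small $t$ the perturbation $\sigma+t\Delta(K)$ stays positive semidefinite, and it stays fixed because $\Delta(\Delta(K))=\Delta(K)$. So two-sided perturbations are allowed and no KKT multipliers for the positivity constraint appear. A second point to handle is the meaning of $\nabla f$ when $f$ is defined only on $\positiveOperators(A)$. I will take $\nabla f[\sigma]$ to be the Hermitian operator representing the Fréchet derivative at the interior point $\sigma$, so that the directional derivative along $H$ equals $\tr[\nabla f[\sigma]H]$.
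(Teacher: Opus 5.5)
Your proposal is correct and follows essentially the same route as the paper. Both arguments use that a full-rank free state is interior, so that optimality is equivalent to a vanishing directional derivative along every $\Delta(\eta)$ with $\eta$ traceless, and then dualize via $\tr[\nabla f[\sigma]\,\Delta(\eta)]=\tr[\condExpectation(\nabla f[\sigma])\,\eta]$ to get $\condExpectation(\nabla f[\sigma])\propto I$; you additionally spell out what the paper leaves implicit, namely the gradient inequality for sufficiency and the two-sided perturbation $\sigma+t\Delta(K)\in\free(A)$ for necessity.
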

   Here \(\nabla f [\sigma]\) is the gradient of \(f\) at the point \(\sigma\). Recall that \(f (\sigma +\epsilon X) = f(\sigma) + \epsilon \tr [\nabla f [\sigma] X] +o(\epsilon)\) for all Hermitian operators \(X\).
\begin{remark}
By Eq.~(\ref{eq:terminal-decomposition}), a free state has full rank if and only if it is interior within \(\free(A)\).
\end{remark}
\begin{proof}[Proof of Claim~\ref{proposition-convex-optimizer}]
Because \(\sigma\) is interior, it is optimal if and only if the derivative of \(f\) at \(\sigma\) in the direction \(\tau-\sigma\) is 0 for any \(\tau\in \free(A)\), or, in other words, at the direction \(\terminal{\eta}\) for any \emph{traceless} operator \(\eta\). Thus:
  \begin{align}
    0  & = \tr [\nabla f [\sigma] \terminal{\eta}]  =  \tr [\condExpectation\pa[\big]{\nabla f [\sigma]} \eta]\;,
  \end{align}
  for all traceless operators \(\eta\).
  This happens if and only if \(\expectation{\nabla f [\sigma]} \propto I\).
\end{proof}

\begin{claim}[Lemma 3.4 and the proof of Theorem 3.6 in \cite{rubboli-phd-2025}]
  Under the conditions of Theorem~\ref{thm:implicit-form-for-optimizer},
  \begin{align}
  \nabla F [\sigma] &= rz (\Xi_\sigma - F(\sigma) \cdot I) \;, \label{eq:gradient-formula}
  \qquad \text{\emph{where}}
  \\\nonumber
  \\ 
\label{eq:Xi}
  \Xi_\sigma &\coloneqq \begin{cases}
    \displaystyle \frac{(\sigma^{ \frac r 2}X \sigma^{ \frac r 2})^z}{\sigma} &  r=\pm 1\;, 
    \\
   \displaystyle \frac{\sin (\pi r)}{\pi r} \!\!\int_{0}^{\infty}\!\!\!\! \frac {(\sigma^{ \frac r 2}X \sigma^{ \frac r 2})^z}{\sigma^r(\sigma + tI)^{2}} t^r dt & r \in (-1,1).
  \end{cases}
\end{align}
\end{claim}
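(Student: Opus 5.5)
We need to prove the gradient formula for $F(\sigma)=\tr[(\sigma^{r/2}X\sigma^{r/2})^z]$. Write $G(\sigma)=\sigma^{r/2}X\sigma^{r/2}$. The idea is to differentiate in two stages: first the outer power $z$, then the inner power $\sigma^{r}$ through an integral representation. The claim is cited from Rubboli, so I would reconstruct that argument.

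\textbf{Stage 1 (outer power).} Use the form $F(\sigma)=\tr[(Z\sigma^rZ^*)^z]$ from Eq.~(\ref{eq:functional}). For a Hermitian direction $H$, the trace functional $\tr[f(Y)]$ has derivative $\tr[f'(Y)\,\delta Y]$ even when $\delta Y$ does not commute with $Y$; this follows from cyclicity of the trace. Hence
\[
\frac{d}{d\epsilon}F(\sigma+\epsilon H)\Big|_{0}=z\,\tr\!\big[(Z\sigma^rZ^*)^{z-1} Z\, D\sigma^r[H]\, Z^*\big],
\]
where $D\sigma^r[H]$ denotes the Fréchet derivative of $\sigma\mapsto\sigma^r$ at $\sigma$ (full rank) in direction $H$.

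\textbf{Stage 2 (inner power).} Handle the cases separately.

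For $r=\pm1$, the derivative is explicit: $D\sigma^{r}[H]=H$ for $r=1$, and $D\sigma^{-1}[H]=-\sigma^{-1}H\sigma^{-1}$ for $r=-1$. Then a polar/intertwining identity moves $Z$ and $Z^*$ back into the argument. Concretely, $Z^*(Z\sigma^rZ^*)^{z-1}Z=\sigma^{-r/2}(\sigma^{r/2}X\sigma^{r/2})^{z}\sigma^{-r/2}$, which holds on the support. This produces the quotient notation $\frac{(\sigma^{r/2}X\sigma^{r/2})^z}{\sigma}$ of Eq.~(\ref{eq:Xi}).

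For $r\in(-1,1)\setminus\{0\}$, use the standard representation
\[
\sigma^r=\frac{\sin(\pi r)}{\pi}\int_0^\infty t^{r-1}\,\sigma(\sigma+t)^{-1}\,dt
\]
(or its analogue for negative $r$). Differentiating under the integral gives $t^{r}(\sigma+t)^{-1}H(\sigma+t)^{-1}$ terms, up to sign and factors of $r$. Substituting into Stage 1 and using cyclicity yields $\tr[H\,\Xi_\sigma]$ with the stated integral kernel. The prefactor $\frac{\sin\pi r}{\pi r}$ is then matched after extracting $rz$.

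\textbf{The $-F(\sigma)\,I$ term.} This term does not come from the raw Euclidean gradient. I expect it arises because the gradient is taken along the state manifold, i.e.\ against traceless directions. A gradient is defined only modulo multiples of $I$ when restricted to traceless directions $H$, which is exactly how it is used in Claim~\ref{proposition-convex-optimizer}. Alternatively, it arises via the Euler identity: $F$ is homogeneous of degree $rz$, so $\tr[\sigma\nabla F]=rzF$, and the normalization forces $\tr[\sigma\,\Xi_\sigma]=F(\sigma)$. The identity $\tr[\sigma\,\Xi_\sigma]=F(\sigma)$ can be checked directly, since in the direction $H=\sigma$ everything commutes with $\sigma$.

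\textbf{Main obstacle.} The delicate step is Stage 2 for $r\in(-1,1)$. It requires verifying the exact constant and sign conventions in the integral representation. It also requires justifying the noncommutative rearrangement that puts $\sigma^{-r}$ and $(\sigma+t)^{-2}$ beside $(\sigma^{r/2}X\sigma^{r/2})^z$; the meaning of the fraction notation has to be read consistently with the trace pairing. I would resolve both by checking the case where $X$ commutes with $\sigma$ against the scalar derivative of $\lambda^{rz}$.
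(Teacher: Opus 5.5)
Your derivation is correct and matches the standard argument behind the citation. The chain rule for $\tr[(Z\sigma^rZ^*)^z]$, the intertwining identity $Z^*(Z\sigma^rZ^*)^{z-1}Z=\sigma^{-r/2}(\sigma^{r/2}X\sigma^{r/2})^z\sigma^{-r/2}$, and $D\sigma^r[H]=\frac{\sin\pi r}{\pi}\int_0^\infty t^r(\sigma+t)^{-1}H(\sigma+t)^{-1}\,dt$ together give the Euclidean gradient $rz\,\Xi_\sigma$. There are no sign or constant problems: this derivative formula is the same for both signs of $r\in(-1,1)$, and since $(\sigma+t)^{-1}$ commutes with $\sigma^{-r/2}$, the quotient by $\sigma^r(\sigma+t)^2$ is exactly what appears.

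You are also right that $-F(\sigma)I$ cannot come from that gradient. Under the paper's definition of $\nabla f$ via all Hermitian directions, the displayed formula is literally off by $-rzF(\sigma)I$ (test $H=\sigma$). So commit to the Euler reading: the formula is the gradient of $\sigma\mapsto F(\sigma/\tr\sigma)=(\tr\sigma)^{-rz}F(\sigma)$ at unit-trace $\sigma$. That gradient agrees with $rz\,\Xi_\sigma$ on traceless directions, which is all that Claim~\ref{proposition-convex-optimizer} and the unitality of $\Delta^{*}$ need. The bare ``modulo $I$'' freedom would not by itself single out the coefficient $F(\sigma)$.
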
\vspace{-5pt}
Here, the quotient \(\frac Z Y\) denotes \( Y^{-\frac 1 2} Z Y^{-\frac 1 2}\).

\begin{proof}[Proof of Theorem~\ref{thm:implicit-form-for-optimizer}]~
 By continuity in \(X\) of \(\tr \pb[\big]{(\sigma^{\frac r 2} X \sigma^{\frac r 2})^z}\) and of the fixed point condition (Eq.~(\ref{eq:fixed-point-condition})),  it is enough to assume that \(X\) has full rank.
 For full-rank \(X\), Lemma 3.5 in \cite{rubboli-phd-2025} guarantees that \(\optimizer\) also has full rank.

 Therefore, by Claim~\ref{proposition-convex-optimizer}, 
 \(\condExpectation(\nabla F[\optimizer]) \propto I\).
 By Eq.~(\ref{eq:gradient-formula}) and because \(\condExpectation\) is positive and unital, this implies that 
 \begin{equation}\label{eq:condition-chi}
  \condExpectation(\Xi_{\optimizer}) \propto I
 \end{equation}

 We now prove Eq.~(\ref{eq:fixed-point-condition}) in two steps:
 first for the self-adjoint destruction (\(\Delta=\condExpectation\)), and then for general \(\Delta\).

\textbf{Step 1 (self-adjoint case)}.
Assume that \(\Delta = \condExpectation \eqqcolon\tpExpectation\) is the trace preserving conditional expectation onto an algebra \(\algebra\).

In the case \(r= \pm 1\), Eq.~(\ref{eq:condition-chi}) translates to \(\tpExpectation \pa*{ \frac{(\optimizer^{ \frac r 2}X \optimizer^{ \frac r 2})^z}{\optimizer}} \propto I\).
  However, because  \(\optimizer\in \algebra\) and \(\tpExpectation\) is bimodular (Eq.~(\ref{eq:bimodularity}))
 \begin{equation}
  \tpExpectation \pa*{ \frac{(\optimizer^{ \frac r 2}X \optimizer^{ \frac r 2})^z}{\optimizer}} =
   \frac{\tpExpectation \pa*{(\optimizer^{ \frac r 2}X \optimizer^{ \frac r 2})^z}}{\optimizer}
 \end{equation}
 and Eq.~(\ref{eq:fixed-point-condition}) follows immediately.

 In the case \(r \in (-1,1)\), after discarding irrelevant scalars and using bimodularity as above, Eq.~(\ref{eq:condition-chi}) implies
  \begin{align}
 \label{eq:condition-chi-after-massage}
    \int_{0}^{\infty}{\frac {\tpExpectation\pa*{(\optimizer^{ \frac r 2}X \optimizer^{ \frac r 2})^z}}{\optimizer^r(\optimizer + tI)^{2}}} t^r dt \propto I\;.
\end{align}

In this case, before proving that \(\optimizer\) and \(\tpExpectation\pa[\big]{(\optimizer^{\frac r 2} X \optimizer^{\frac r 2})^z}\) are proportional to each other, we first show that they commute.
To do this, we expand \(\optimizer\) in its diagonalizing basis \(\optimizer = \sum_x p_x \ket x\! \bra x\). 
By Eq.~(\ref{eq:condition-chi-after-massage}), for \(x \ne y\), 
\begin{align}
  0 =& \braket*{ x \smallgiven \int_{0}^{\infty}{\frac {\tpExpectation\pa*{(\optimizer^{ \frac r 2}X \optimizer^{ \frac r 2})^z}}{\optimizer^r(\optimizer + tI)^{2}}} t^r dt \smallgiven y} \\
  =&  \int_{0}^{\infty}  \bra x \optimizer^{\!-r/2}(tI +\optimizer)^{\!-1} \ket x   \bra x \tpExpectation\pa*{(\optimizer^{ \frac r 2}X \optimizer^{ \frac r 2})^z} \ket y \bra y \optimizer^{\!-r/2}(tI +\optimizer)^{\!-1} \ket yt^r dt \\
   =&  \bra x \tpExpectation\pa[\big]{(\optimizer^{\frac r 2} X \optimizer^{\frac r 2})^z} \ket y  p_x^{-r/2}p_y^{-r/2}\int_{0}^{\infty} \frac{t^r dt}{(t +p_x)(t +p_y)}
\end{align}
Aside from \(\bra x \tpExpectation\pa[\big]{(\optimizer^{\frac r 2} X \optimizer^{\frac r 2})^z} \ket y\), all terms in the expression above are strictly positive, hence \(\bra x \tpExpectation\pa[\big]{(\optimizer^{\frac r 2} X \optimizer^{\frac r 2})^z} \ket y\) must be zero, which means that indeed  \(\optimizer\) and \(\tpExpectation\pa[\big]{(\optimizer^{\frac r 2} X \optimizer^{\frac r 2})^z}\) commute.

We can thus pull out terms from the integral in Eq.~(\ref{eq:condition-chi-after-massage}), leading to:
\begin{align}
    \int_{0}^{\infty}{\frac {\tpExpectation\pa*{(\optimizer^{ \frac r 2}X \optimizer^{ \frac r 2})^z}}{\optimizer^r(\optimizer + tI)^{2}}} t^r dt & 
    = \tpExpectation\pa*{(\optimizer^{ \frac r 2}X \optimizer^{ \frac r 2})^z} \optimizer^{-r}\!\!\int_{0}^{\infty} \!\!\! \frac{t^r dt}{(\optimizer + tI)^2}  
    \\ 
   & \propto \label{eq:eliminate-integral}  \tpExpectation\pa*{\pa*{\optimizer^{\frac r 2}X \optimizer^{\frac r 2}}^z} \optimizer^{\!-r} \optimizer^{r-1}
   \\
    &= \frac{\tpExpectation\pa*{\pa*{\optimizer^{\frac r 2}X \optimizer^{\frac r 2}}^z}}{\optimizer}\;,
    \label{eq:if-commute}
\end{align}
where Eq.~(\ref{eq:eliminate-integral}) follows by noting that \(\int_0^\infty \frac {t^r}{(\optimizer+tI)^2} dt = \optimizer^{r-1} \int_0^\infty \frac {s^r}{(s+1)^2}ds \) and \(\int_0^\infty \frac {s^r}{(s+1)^2}ds \in (0,\infty)\). 
By Eq.~(\ref{eq:condition-chi-after-massage}), this concludes (\ref{eq:fixed-point-condition}) in the self-adjoint case.

  \textbf{Step 2: Generalization to arbitrary \(\Delta\)}.
  To extend Eq.~(\ref{eq:fixed-point-condition}) to the general case let us denote for \(Y\in \pos(A)\),
  \begin{equation}\label{eq:def-G_Y}
    G_Y(\sigma) \coloneqq\tr \pb[\big]{(\sigma^{\frac r 2} Y \sigma^{\frac r 2})^z}\;,
  \end{equation}
   so that \(G_X = F\) and observe the following:
   \begin{observation}\label{observation:reduction-to-self-adjoint}
  Let  \(\tpExpectation\) be the trace-preserving conditional expectation onto \(\algebra = \ima \condExpectation\) and let \(\twisting(Y) = \Delta(I)^{1/2}Y\Delta(I)^{1/2}\) (see Eq.~(\ref{def:twisting})). 
  \(\twisting\) is a completely positive invertible map on \(\linearOperators(A)\) that maps \(\algebra\) onto \(\free(A)\).
  \begin{itemize}
    \item \(F(\sigma) = G_{\twisting^{r}(X)}(\twisting^{-1}(\sigma))\) for all \(\sigma\in \free(A)\).
    \item \(\optimizer\) is an optimizer of \(F\) over \(\free(A)\) if and only if \(\bar{\sigma}_* \coloneqq \twisting^{-1}(\optimizer)\) is an optimizer of \(G_{\twisting^{r}(X)}\) over states in \(\algebra\).
  \end{itemize}
\end{observation}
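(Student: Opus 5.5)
The plan is to reduce everything to the algebraic identities for the twisting map $\twisting$ recorded after Eq.~(\ref{def:twisting}), namely Eqs.~(\ref{eq:twisting-properties-1})--(\ref{eq:twisting-properties-final2}).

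\textbf{Structural claims about $\twisting$.} First I will check that $\Delta(I)$ is full rank. By Eq.~(\ref{eq:destruction-defining-state}), $\Delta(I)=\bigoplus_i d_{A_i}\tau_i\otimes I^{B_i}$, and each $\tau_i$ is full rank because the conditional expectation is faithful. Hence $\twisting(Y)=\Delta(I)^{1/2}Y\Delta(I)^{1/2}$ is a conjugation by an invertible positive operator. Such a map is completely positive, and its inverse $\twisting^{-1}$ is again completely positive.

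Next I will show that $\twisting$ maps $\algebra$ onto $\ima\Delta$. From $\twisting\circ\tpExpectation=\Delta$ we get $\twisting(\algebra)\subseteq\ima\Delta$, using $\ima\tpExpectation=\algebra$. From $\twisting^{-1}\circ\Delta=\tpExpectation$ we get the reverse inclusion. Positivity is preserved in both directions. By the remark that $\tr[\twisting(Y)]=\tr[Y]$ for $Y\in\algebra$, the map $\twisting$ restricts to a bijection from the states lying in $\algebra$ onto $\free(A)$.

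\textbf{First bullet.} Fix $\sigma\in\free(A)$ and set $\bar\sigma=\twisting^{-1}(\sigma)$, which is a positive element of $\algebra$, hence fixed by $\condExpectation$.
\begin{itemize}
\item By Eq.~(\ref{eq:twisting-properties-final2}), $\sigma^{r/2}=\twisting(\bar\sigma)^{r/2}=\twisting^{r/2}(\bar\sigma^{r/2})$.
\item Since $\bar\sigma^{r/2}\in\algebra$ is $\condExpectation$-fixed, Eq.~(\ref{eq:twisting-properties-2}) with exponent $r/2$ gives
\begin{equation*}
\sigma^{r/2}X\sigma^{r/2}=\twisting^{r/2}(\bar\sigma^{r/2})\,X\,\twisting^{r/2}(\bar\sigma^{r/2})=\bar\sigma^{r/2}\,\twisting^{r}(X)\,\bar\sigma^{r/2}.
\end{equation*}
\item Raising both sides to the power $z$ and taking the trace yields $F(\sigma)=G_{\twisting^r(X)}(\twisting^{-1}(\sigma))$.
\end{itemize}

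The one point needing care is the meaning of $\sigma^{r/2}$ for $r<0$ when $\sigma$ is not full rank. There I would interpret powers on the support, or argue first for full-rank $\sigma$ and extend by continuity. The interpretation is consistent because $\Delta(I)$ commutes with every element of $\algebra$. This makes $\twisting^{r/2}(Y)=\Delta(I)^{r/2}Y$ for $Y\in\algebra$, so supports are preserved: $\sigma^0=\bar\sigma^0$.

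\textbf{Second bullet.} This is immediate from the first. The map $\sigma\mapsto\twisting^{-1}(\sigma)$ is a bijection from $\free(A)$ onto the states in $\algebra$, and it intertwines the objective $F$ with $G_{\twisting^r(X)}$. Therefore maximizers (respectively minimizers) correspond exactly, with $\bar\sigma_*=\twisting^{-1}(\optimizer)$.

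\textbf{Main obstacle.} I expect the only delicate step to be justifying Eq.~(\ref{eq:twisting-properties-final2}) and the support bookkeeping for non-integer and negative powers. The commutation of $\Delta(I)$ with $\algebra$, coming from the block decomposition, is what makes this work.
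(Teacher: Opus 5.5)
Your proof is correct and takes the paper's route. The paper states this observation without a separate proof and relies on the twisting identities \eqref{eq:twisting-properties-1}--\eqref{eq:twisting-properties-final2}; your key step, \(\sigma^{r/2}=\mathcal{T}^{r/2}(\bar\sigma^{r/2})\) via \eqref{eq:twisting-properties-final2} followed by \eqref{eq:twisting-properties-2}, is the same manipulation the paper carries out in Step~2 of the proof of Theorem~\ref{thm:implicit-form-for-optimizer}, and your handling of supports for \(r<0\) (using that \(\Delta(I)\) is invertible and commutes with \(\mathscr{A}\)) makes explicit a point the paper leaves implicit.
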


  Let \(\tpExpectation\) be the trace-preserving conditional expectation onto \(\algebra=\ima \condExpectation\).
By Observation~\ref{observation:reduction-to-self-adjoint}, \(\bar{\sigma}_* = \twisting^{-1}(\optimizer)\) is an optimizer of \(G_{\twisting^{r}(X)}\) over states in \(\algebra\).
From the self-adjoint case (Step 1) we conclude:
\begin{equation}\label{eq:proportionality-of-twisted-optimizer1}
  \bar{\sigma}_* \propto \tpExpectation\pa*{\pa*{\bar{\sigma}_*^{ r / 2}\twisting^r(X) \bar{\sigma}_*^{ r /2}}^z} \overset{(\ref{eq:twisting-properties-2})}{=} \tpExpectation\pa*{\pa*{\twisting^{r/2}(\bar{\sigma}_*^{ r / 2})X ~\twisting^{r/2}(\bar{\sigma}_*^{ r / 2})}^z}\;.
\end{equation}
By Eq.~(\ref{eq:twisting-properties-final2}), \(\twisting^{r/2}(\bar{\sigma}_*^{ r / 2}) = \twisting(\bar{\sigma}_*)^{r/2} =\optimizer^{r/2}\), thus by Eq.~(\ref{eq:proportionality-of-twisted-optimizer1}), \(\bar{\sigma}_* \propto \tpExpectation\pa*{\pa*{\optimizer^{r/2} X \optimizer^{r/2}}^z}\).
Therefore,
\begin{equation}
  \optimizer = \twisting (\bar{\sigma}_*) \propto  \twisting \circ \tpExpectation\pa*{\pa*{\optimizer^{r/2} X \optimizer^{r/2}}^z} = \Delta\pa*{\pa*{\optimizer^{r/2} X \optimizer^{r/2}}^z}\;.
\end{equation}
\end{proof}

\begin{lemma}[The case \(z=1\)]\label{functional:z=1}
The optimizer of \(F\) over free states for the case \(z=1\) satisfies
\begin{equation}\label{eq:condition-petz-renyi-optimizer}
  \optimizer \propto \twisting\pa*{\condExpectation\pa*{\twisting^{r-1}(X)}^{\frac{1}{1-r}}}\;,
\end{equation}
and the value at this state is
  \begin{equation}\label{eq:functional-petz-value}
    F(\optimizer) = \norm*{\condExpectation\circ \twisting^{r-1}(X)}_{\frac{1}{1-r}}\;.
  \end{equation}
  Moreover, for any free state \(\sigma\) we have:
  \begin{equation}\label{eq:pythagorean-identity}
    F(\sigma) = F(\optimizer)G_{\optimizer^{1/r}}(\sigma)\;.
  \end{equation}
\end{lemma}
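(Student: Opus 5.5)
We specialize Theorem~\ref{thm:implicit-form-for-optimizer} to \(z=1\) and solve the implicit equation explicitly. We work first in the self-adjoint case \(\Delta=\tpExpectation\), then transport to general \(\Delta\) through Observation~\ref{observation:reduction-to-self-adjoint}.

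\textbf{Self-adjoint case.} With \(z=1\), Eq.~(\ref{eq:fixed-point-condition}) reads \(\optimizer\propto\tpExpectation(\optimizer^{r/2}X\optimizer^{r/2})\). Since \(\optimizer\in\algebra\), bimodularity gives \(\optimizer\propto\optimizer^{r/2}\tpExpectation(X)\optimizer^{r/2}\). Step~1 of the proof of Theorem~\ref{thm:implicit-form-for-optimizer} showed that \(\optimizer\) commutes with this quantity, hence with \(\tpExpectation(X)\). We therefore obtain \(\optimizer^{1-r}\propto\tpExpectation(X)\), that is, \(\optimizer\propto\tpExpectation(X)^{1/(1-r)}\). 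For \(r=1\) the exponent degenerates; this case lies outside \(\region_F\) when \(z=1\), so it need not be handled.

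\textbf{Value and Pythagorean identity.} For any \(\sigma\in\algebra\), bimodularity and trace preservation give
\[
F(\sigma)=\tr[\sigma^r X]=\tr[\sigma^r\tpExpectation(X)].
\]
Writing \(Y=\tpExpectation(X)\) and \(\optimizer=Y^{p}/\tr[Y^{p}]\) with \(p=1/(1-r)\), we get \(F(\optimizer)=\tr[Y^{1+rp}]/\tr[Y^p]^r=\tr[Y^p]^{1-r}=\norm{Y}_p\), since \(1+rp=p\). For general \(\sigma\in\algebra\), \(Y\) and \(\optimizer\) commute, so \(Y=\norm{Y}_p\,\optimizer^{1-r}\). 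Consequently
\[
F(\sigma)=\norm{Y}_p\tr[\sigma^r\optimizer^{1-r}]=F(\optimizer)\,G_{\optimizer^{1/r}}(\sigma),
\]
where we read \(G_{\optimizer^{1/r}}(\sigma)\) as \(\tr[\sigma^{r/2}\optimizer^{(1-r)/r\cdot r}\sigma^{r/2}]\). We must check the intended exponent: \(G_{\optimizer^{(1-r)}}\) is the Petz quantity \(Q_\alpha(\optimizer\|\sigma)\) with \(\alpha=1-r\). I would verify the exponent convention in the statement against this computation; the identity is in any case the Petz chain rule \(\bar Q(\rho\|\sigma)=\bar Q(\rho\|\optimizer)\bar Q(\optimizer\|\sigma)\).

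\textbf{General \(\Delta\).} By Observation~\ref{observation:reduction-to-self-adjoint}, \(F(\sigma)=G_{\twisting^r(X)}(\twisting^{-1}\sigma)\). The self-adjoint result then gives \(\bar\sigma_*\propto\tpExpectation(\twisting^r X)^{p}\) and \(\optimizer=\twisting(\bar\sigma_*)\). We need \(\tpExpectation(\twisting^r X)=\condExpectation(\twisting^{r-1}X)\), which follows from \(\condExpectation=\tpExpectation\circ\twisting\) in Eq.~(\ref{eq:twisting-properties-1}). This yields Eq.~(\ref{eq:condition-petz-renyi-optimizer}). The value is unchanged because \(F(\optimizer)=G(\bar\sigma_*)\), which gives Eq.~(\ref{eq:functional-petz-value}). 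For the Pythagorean identity we push \(G\) back through \(\twisting\) using Eqs.~(\ref{eq:twisting-properties-2})--(\ref{eq:twisting-properties-final2}), noting that powers of fixed operators commute with the twist.

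\textbf{Main obstacle.} The main difficulty is the exponent bookkeeping in this last transport. The twist \(\twisting\) is not a \(*\)-homomorphism, so \(\twisting(\bar\sigma_*)^{s}=\twisting^{s}(\bar\sigma_*^{s})\) must be applied carefully to show that \(G_{\twisting^{r}(\bar\sigma_*^{1-r})}(\twisting^{-1}\sigma)\) equals the claimed \(G\) expression in \(\optimizer\) and \(\sigma\).
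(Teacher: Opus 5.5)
Your proposal takes the paper's route. The self-adjoint case is solved from the $z=1$ fixed-point equation of Theorem~\ref{thm:implicit-form-for-optimizer} via bimodularity. The general case is then transported through Observation~\ref{observation:reduction-to-self-adjoint} using $\Delta^{*}=\bar{\Delta}\circ\mathcal{T}$. The optimizer and the value $\|\Delta^{*}(\mathcal{T}^{r-1}(X))\|_{1/(1-r)}$ come out exactly as in the paper.

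The one difference is the direction of the self-adjoint step. You start from the optimizer and solve $\sigma_*\propto\sigma_*^{r/2}\bar{\Delta}(X)\sigma_*^{r/2}$, which uses invertibility of $\sigma_*$ (available from the full-rank reduction in the theorem's proof). The paper instead checks that $\bar{\Delta}(X)^{1/(1-r)}$ satisfies the equation and then appeals to the theorem, which relies on the sufficiency direction. Yours uses the theorem exactly as stated.

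Your doubt about the exponent is justified. For $z=1$ one has $G_Y(\sigma)=\tr[\sigma^rY]$. What both you and the paper's proof actually establish is $F(\sigma)=F(\sigma_*)\,\tr[\sigma^r\sigma_*^{1-r}]=F(\sigma_*)\,G_{\sigma_*^{1-r}}(\sigma)$. The subscript $\sigma_*^{1/r}$ in the statement, and in the last line of the paper's proof, is a typo for $\sigma_*^{1-r}=\sigma_*^{\alpha}$. This is consistent with the Petz chain rule in Corollary~\ref{corollary:petz-renyi}.

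The step you left open, the identity for non-self-adjoint $\Delta$, is short but needs an ingredient your plan does not list. Applying the self-adjoint identity to $G_{\mathcal{T}^r(X)}$ produces the factor $G_{\bar{\sigma}_*^{1-r}}(\mathcal{T}^{-1}(\sigma))=\tr[\mathcal{T}^{-1}(\sigma)^r\,\mathcal{T}^{-1}(\sigma_*)^{1-r}]$. It does not produce $G_{\mathcal{T}^r(\bar{\sigma}_*^{1-r})}(\mathcal{T}^{-1}(\sigma))$, as your last paragraph has it. Since $\Delta(I)$ commutes with all free states, this factor equals $\tr[\Delta(I)^{-1}\sigma^r\sigma_*^{1-r}]$. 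Moving powers through the twist cannot remove the $\Delta(I)^{-1}$.

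What removes it is that $\mathcal{T}^{-1}$ is trace preserving on $\mathrm{Im}\,\Delta$, as stated right after Eq.~\eqref{eq:twisting-properties-1}. You also need to know that $\sigma^r\sigma_*^{1-r}$ lies in $\mathrm{Im}\,\Delta$. Write $\sigma=\bigoplus_i\tau_i\otimes s_i$ and $\sigma_*=\bigoplus_i\tau_i\otimes s_{*,i}$ as in Eq.~\eqref{eq:terminal-decomposition}. Then $\sigma^r\sigma_*^{1-r}=\bigoplus_i\tau_i\otimes s_i^r s_{*,i}^{1-r}\in\mathrm{Im}\,\Delta$, precisely because the exponents $r$ and $1-r$ sum to one. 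This gives $\tr[\sigma^r\sigma_*^{1-r}]=G_{\sigma_*^{1-r}}(\sigma)$ and completes the proof. It also confirms that $1-r$ is the right exponent: with $1/r$ the factors $\tau_i^{r}\tau_i^{1/r}$ would not recombine into $\tau_i$.
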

Here, \(\twisting(Y) = \Delta(I)^{1/2} Y \Delta(I)^{1/2}\) (see Eq.~(\ref{def:twisting})) and \(G_Y=\tr \pb[\big]{(\sigma^{\frac r 2} Y \sigma^{\frac r 2})^z}\) (Eq.~(\ref{eq:def-G_Y})).
\begin{proof}~
\\
  \textbf{Step 1: self-adjoint case}.
  Consider the case in which \(\tpExpectation=\Delta=\condExpectation\) 
  is the trace-preserving conditional expectation onto an algebra \(\algebra\) and \(\twisting=\id\).
  In this case, \(\twisting=\id\) and Eqs.~(\ref{eq:condition-petz-renyi-optimizer})-(\ref{eq:functional-petz-value}) simplify and we need to prove:
   \begin{align}\label{eq:petz-optimizer-self-adjoint}
    \optimizer & \propto \tpExpectation(X)^{\frac{1}{1-r}}\;, \\
    \label{eq:petz-form-self-adjoint}
    F(\optimizer) &= \norm*{\tpExpectation(X)}_{\frac{1}{1-r}}\;. 
   \end{align}
   Let \(Y \coloneqq \tpExpectation(X)^{\frac{1}{1-r}} \in \algebra \).
   Because \(\tpExpectation\) is bimodular (Eq.~(\ref{eq:bimodularity})), we have
  \begin{equation}
    \tpExpectation(Y^{r/2} X Y^{r/2}) = Y^{r/2} \tpExpectation(X) Y^{r/2} = Y^{r/2} Y^{1-r} Y^{r/2}=Y\;.
  \end{equation}
  By Theorem~\ref{thm:implicit-form-for-optimizer}, we conclude that \(\sigma_* \propto Y\), i.e., obtain Eq.~(\ref{eq:petz-optimizer-self-adjoint}).
  Moreover, for a state \(\sigma\in\algebra\) 
  \begin{equation}\label{eq:with-X}
    F(\sigma) = \tr [\sigma^r X] =  \tr [\sigma^r \tpTerminal{X}] = \tr [\sigma^r Y^{1-r}]\;.
  \end{equation}
   By Eq.~(\ref{eq:petz-optimizer-self-adjoint}), \(\optimizer = \frac Y {\tr Y}\) thus
  \begin{equation}\label{eq:petz-self-adjoint-optimizer-with-Y}
    F(\optimizer) = 
    \tr \pb*{\pa*{\frac{Y}{\tr [Y]}}^r Y^{1-r}} = \tr [Y]^{-r} \tr \pb*{Y^r Y^{1-r}} = \tr [Y]^{1-r}\;. 
  \end{equation}
From this, we get both Eqs.~(\ref{eq:pythagorean-identity}) and (\ref{eq:petz-form-self-adjoint}):
\begin{enumerate}
  \item Eq.~(\ref{eq:pythagorean-identity}) is obtained by continuing Eq.~(\ref{eq:with-X})  with general free state \(\sigma\) using \(Y = \tr[Y] \optimizer\):
  \begin{equation}\label{eq:pythagorean-self-adjoint}
  F(\sigma) =  \tr \pb*{\sigma^r \pa*{\tr [Y] \optimizer}^{1-r} } = \tr[Y]^{1-r} \tr \pb*{\sigma^r \optimizer^{1-r}} = F(\optimizer) G_{\optimizer^{1/r}}(\sigma)\;,
\end{equation}
\item Eq.~(\ref{eq:petz-form-self-adjoint}) is obtained by continuing Eq.~(\ref{eq:petz-self-adjoint-optimizer-with-Y}), substituting \(Y =\tpExpectation(X)^{\frac{1}{1-r}}\): \(F(\optimizer) = \tr [\tpExpectation(X)^{\frac{1}{1-r}}]^{1-r} \).
\end{enumerate}
 \textbf{Step 2: Generalization to arbitrary \(\Delta\)}.
We denote \(\bar{\sigma}_*\) as the state in \(\algebra = \ima \condExpectation\) satisfying \(\bar{\sigma}_* \propto \tpExpectation (\twisting^{r}(X))^{\frac 1{1-r}}\); by what we have seen, it is the optimizer of \(G_{\twisting^{r}(X)}\) over states in \(\algebra\).
Using Observation~\ref{observation:reduction-to-self-adjoint} and Eqs.~(\ref{eq:petz-optimizer-self-adjoint}), (\ref{eq:pythagorean-self-adjoint}), and (\ref{eq:petz-form-self-adjoint}), we conclude:
\begin{enumerate}
  \item \(\optimizer = \twisting(\bar{\sigma}_*)\) is the optimizer of \(F\) over the free states.
  However, \(\tpExpectation =\condExpectation \circ \twisting^{-1}\) thus
  \begin{equation}
    \twisting(\bar{\sigma}_*) \propto \twisting\circ\tpExpectation (\twisting^{r}(X))^{\frac 1{1-r}} = \twisting\pa*{ \pa*{\condExpectation \circ \twisting^{-1} \circ \twisting^{r}(X)}^\frac 1{1-r}} = \twisting\pa*{ \pa*{\condExpectation \circ \twisting^{r-1} (X)}^\frac 1{1-r}}
  \end{equation}
  \item \(\displaystyle F(\optimizer) = G_{\twisting^r (X)}(\bar{\sigma}_*) = \norm{\tpExpectation \circ\twisting^r(X)}_\frac 1 {1-r} = \norm{\condExpectation \circ\twisting^{r-1}(X)}_\frac 1 {1-r}\).
  \item \(F(\sigma) = G_{\twisting^r(X)}(\twisting^{-1}(\sigma)) = G_{\twisting^r(X)}(\bar{\sigma}_*) G_{\bar{\sigma}_*^{1-r}}(\twisting^{-1}(\sigma))=F(\optimizer)G_{(\twisting^{-1}(\optimizer))^{1-r}}(\twisting^{-1}(\sigma))\).

  However,
\begin{align}
  G_{(\twisting^{-1}(\optimizer))^{1-r}}(\twisting^{-1}(\sigma)) &= \tr[\twisting^{-1}(\sigma)^r \twisting^{-1}(\optimizer)^{1-r}] 
  \\
  \explain{Eq.~(\ref{eq:twisting-properties-final})}&= \tr[\twisting^{-1}(\sigma^r\optimizer^{1-r})] \\
  \explain{\(\twisting|_\algebra\) is trace preserving (Eq.~(\ref{eq:twisting-properties-1}))}&= \tr [\sigma^r\optimizer^{1-r}] = G_{\optimizer^{1/r}}(\sigma)\;,
\end{align}
concluding Eq.~(\ref{eq:pythagorean-identity}). \qedhere
\end{enumerate}
\end{proof}

\subsection{Rényi Monotones}\label{appendix:renyi}
Substituting \(r=1-\alpha\) and \(X = \rho^\alpha\) in Lemma~\ref{functional:z=1} yields:
\begin{corollary}\label{corollary:petz-renyi}
  Let \(\rho \in \states\), \(\sigma\) a free state and \(\alpha \in [0,2]\), then
  \begin{equation}
    D_\alpha \divergence{\rho}{\sigma} = D_\alpha \divergence{\rho}{\optimizer} + D_\alpha \divergence{\optimizer}{\sigma}\;,
  \end{equation}
  where \(\optimizer\) is the state proportional to \(\displaystyle \Delta(I)\condExpectation\pa*{\frac{\rho^\alpha}{\Delta(I)^\alpha}}^{\frac 1 \alpha}\)
and \(\displaystyle
    D_\alpha \divergence{\rho}{\optimizer} = \frac{1}{\alpha -1} \log \norm*{\condExpectation\pa*{\frac{\rho^\alpha}{\Delta(I)^\alpha}}}
  _{\frac{1}{\alpha}}
  \).
\end{corollary}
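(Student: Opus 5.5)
The plan is to specialize Lemma~\ref{functional:z=1} to $z=1$, $r=1-\alpha$ and $X=\rho^\alpha$, so that
\[
F(\sigma)=\tr[\sigma^{\frac{1-\alpha}{2}}\rho^\alpha\sigma^{\frac{1-\alpha}{2}}]=\tr[\rho^\alpha\sigma^{1-\alpha}]=Q_{\alpha,1}\divergence{\rho}{\sigma}.
\]
Hence $D_\alpha\divergence{\rho}{\sigma}=\frac{1}{\alpha-1}\log F(\sigma)$. The statement then comes from three translations: the optimizer formula, the optimal value, and the identity \eqref{eq:pythagorean-identity}.

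\emph{Parameter check.} For $\alpha\in(0,2]$ with $\alpha\neq1$ we have $r=1-\alpha\in[-1,1)\setminus\{0\}$. Then $(r,1)\in\region_F$: for $0<r<1$ the condition $z=1\le 1/r$ holds, and for $r\le0$ there is no constraint. So the lemma applies directly.

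\emph{Optimizer.} Since $r-1=-\alpha$, we get
\[
\twisting^{r-1}(\rho^\alpha)=\Delta(I)^{-\alpha/2}\rho^\alpha\Delta(I)^{-\alpha/2}=\frac{\rho^\alpha}{\Delta(I)^\alpha},
\]
using the paper's quotient notation. The exponent is $\frac{1}{1-r}=\frac1\alpha$. Eq.~\eqref{eq:condition-petz-renyi-optimizer} then gives
\[
\optimizer\propto\twisting\pa*{\condExpectation\pa*{\rho^\alpha/\Delta(I)^\alpha}^{1/\alpha}}.
\]
The operator $Y:=\condExpectation(\cdot)^{1/\alpha}$ lies in the $*$-algebra $\algebra$, and $\Delta(I)$ commutes with $\algebra$ (see the remark after Eq.~\eqref{eq:destruction-defining-state}). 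Therefore $\twisting(Y)=\Delta(I)^{1/2}Y\Delta(I)^{1/2}=\Delta(I)Y$, which is the stated form.

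\emph{Value.} Eq.~\eqref{eq:functional-petz-value} gives $F(\optimizer)=\norm{\condExpectation(\rho^\alpha/\Delta(I)^\alpha)}_{1/\alpha}$. Applying $\frac{1}{\alpha-1}\log$ yields the formula for $D_\alpha\divergence{\rho}{\optimizer}$.

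\emph{Chain rule.} Eq.~\eqref{eq:pythagorean-identity} reads $F(\sigma)=F(\optimizer)\,G_{\optimizer^{1/r}}(\sigma)$, with
\[
G_{\optimizer^{1/r}}(\sigma)=\tr[\sigma^{r}\optimizer^{1-r}]=\tr[\optimizer^\alpha\sigma^{1-\alpha}]=Q_{\alpha,1}\divergence{\optimizer}{\sigma}.
\]
For $r<0$ the power $\optimizer^{1/r}$ needs $\optimizer$ full rank; this holds after the same full-rank reduction used in Theorem~\ref{thm:implicit-form-for-optimizer}, and the general case follows by continuity. Taking $\frac{1}{\alpha-1}\log$ of the product gives $D_\alpha\divergence{\rho}{\sigma}=D_\alpha\divergence{\rho}{\optimizer}+D_\alpha\divergence{\optimizer}{\sigma}$. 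Since $D_\alpha\divergence{\optimizer}{\sigma}\ge0$ for states, this also confirms that $\optimizer$ attains $\bar D_\alpha\divergence{\rho}{\free}$, regardless of whether $F$ is being maximized ($\alpha<1$) or minimized ($\alpha>1$).

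\emph{Main obstacle: the endpoints $\alpha=0$ and $\alpha=1$.} Lemma~\ref{functional:z=1} does not cover them. At $\alpha=0$ we have $(r,z)=(1,1)\notin\region_F$. At $\alpha=1$ we have $r=0$, so $F$ is constant and both the gradient characterization and the exponent $1/\alpha$ become degenerate. I would handle both by taking limits of the established identity. Every term is continuous in $\alpha$ on the relevant ranges, at least for full-rank $\rho$ and $\sigma$, and $\rho^\alpha\to\rho^0$ as $\alpha\to0$ recovers the $\dmin$ expression \eqref{eq:dmin-dual-start}--\eqref{eq:dmin-dual-final}. As $\alpha\to1$ the optimizer tends to $\Delta(I)\condExpectation(\rho/\Delta(I))=\Delta(\rho)$, using Eq.~\eqref{eq:twisting-properties-1}. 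This reproduces the Umegaki chain rule of \cite{JLR2023}, which could alternatively be cited directly. The care needed is in justifying that the limits commute with the infimum and with rank changes. I would first prove the identity for full-rank $\rho$ and $\sigma$, then extend by lower semicontinuity of the divergences together with approximation of $\rho$ by $(1-\delta)\rho+\delta\,\u$.
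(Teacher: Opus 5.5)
For $\alpha\in(0,1)\cup(1,2]$ your argument is the paper's own proof: substitute $r=1-\alpha$, $z=1$, $X=\rho^\alpha$ into Lemma~\ref{functional:z=1}. Your parameter check, the simplification $\Delta(I)^{1/2}Y\Delta(I)^{1/2}=\Delta(I)Y$ for $Y\in\mathscr{A}$, and your translations of the value and of Eq.~\eqref{eq:pythagorean-identity} are correct. Reading $\alpha=1$ as a limit with $\sigma_*\to\Delta(\rho)$ is also fine: for every $\alpha>0$ the support of $\sigma_*$ is the smallest projection of $\mathscr{A}$ dominating the support of $\rho$, so no rank change occurs as $\alpha\to1$.

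Two small points. First, the index in Eq.~\eqref{eq:pythagorean-identity} should be $\sigma_*^{1-r}$, which is what the lemma's proof actually computes. So no power $\sigma_*^{1/r}$ appears, and $\sigma_*$ need not be full rank. Second, for $\alpha>1$ and a rank-deficient free $\sigma$, continuity in $\rho$ is the wrong tool: for full-rank approximants of $\rho$ the left side is $+\infty$ even when $\rho\ll\sigma$. The support fact above handles this case. The support projection of a free $\sigma$ lies in $\mathscr{A}$, hence $\rho\ll\sigma$ iff $\sigma_*\ll\sigma$, and both sides are $+\infty$ together.

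The genuine gap is the endpoint $\alpha=0$, which you propose to reach by continuity. This cannot work, because the chain rule is false there. Take coherence on a qubit ($\Delta$ the dephaser, so $\Delta(I)=I$ and $\Delta^{*}=\Delta$) and $\rho=\ket\psi\!\bra\psi$ with $\ket\psi=\sqrt{a}\ket0+\sqrt{1-a}\ket1$, $a\in(\tfrac12,1)$. Then $\Delta^{*}(\rho^0)=\mathrm{diag}(a,1-a)$. For $\alpha>0$ one gets $\sigma_*(\alpha)\propto\mathrm{diag}(a^{1/\alpha},(1-a)^{1/\alpha})\to\ket0\!\bra0$. This limit is also the unique free minimizer of $\dmin(\rho\|\cdot)$, with value $-\log a$, as the value formula predicts. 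But with $\sigma=\u$ we have $D_0(\rho\|\u)=1$ and $D_0(\ket0\!\bra0\|\u)=1$, so the chain rule would demand $1=1-\log a$. In fact no free state can play the role of $\sigma_*$ at $\alpha=0$ for this $\rho$.

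Only the limit of the sum survives: $D_\alpha(\sigma_*(\alpha)\|\u)\to1+\log a\neq1$. So the middle term is discontinuous at $\alpha=0$, because $\sigma_*(\alpha)$ collapses onto the top eigenprojection of $\Delta^{*}(\rho^0)$. Your claim that every term is continuous in $\alpha$ is exactly what breaks. The fallback fails too:
\begin{itemize}
\item for full-rank $\rho$ one has $\rho^0=I$, so all three terms vanish at $\alpha=0$ and the full-rank case carries no information;
\item $D_0$ is discontinuous along $(1-\delta)\rho+\delta\u\to\rho$;
\item lower semicontinuity would only give one-sided inequalities.
\end{itemize}

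The paper's one-line substitution has the same blind spot, since $(1,1)\notin\region$ in the lemma's domain, as you observed. The correct conclusion is that the chain rule holds for $\alpha\in(0,2]$, with $\alpha=1$ read as a limit. At $\alpha=0$ only the value formula $\dmin(\rho\|\free)=-\log\norm{\Delta^{*}(\rho^0)}_\infty$ holds, and it follows directly from Eqs.~\eqref{eq:dmin-dual-start}--\eqref{eq:dmin-dual-final} rather than from the lemma.
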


\begin{lemma}\label{thm:optimized-renyi}
  The optimized extension of the \(\alpha\)--\(z\) Rényi divergence, \(D_{\alpha,z} \divergence{\rho}{\free}\), is additive.
\end{lemma}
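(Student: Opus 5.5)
The plan is to prove the two inequalities separately. One direction is elementary. The other combines the implicit characterization of the optimizer (Theorem~\ref{thm:implicit-form-for-optimizer}) with the first-order optimality criterion (Claim~\ref{proposition-convex-optimizer}). Write \(r=\frac{1-\alpha}{z}\) and \(X=\rho^{\alpha/z}\), so that \(Q_{\alpha,z}\divergence{\rho}{\sigma}=F(\sigma)\). The functional \(F\) is multiplicative on tensor products:
\begin{equation*}
\tr\pb[\big]{\pa[\big]{(\sigma_1\otimes\sigma_2)^{r/2}(X_1\otimes X_2)(\sigma_1\otimes\sigma_2)^{r/2}}^z}=F_1(\sigma_1)F_2(\sigma_2).
\end{equation*}
By locality (Eq.~(\ref{eq:locality})), \(\sigma_1\otimes\sigma_2\in\free(AB)\) whenever \(\sigma_i\in\free(A_i)\). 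Optimizing over product free states therefore immediately gives \(D_{\alpha,z}\divergence{\rho_1\otimes\rho_2}{\free}\le D_{\alpha,z}\divergence{\rho_1}{\free}+D_{\alpha,z}\divergence{\rho_2}{\free}\). This holds uniformly in both cases \(\alpha<1\) (maximize \(Q\)) and \(\alpha>1\) (minimize \(Q\)).

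For the reverse inequality, I would first reduce to full-rank \(\rho_1,\rho_2\) by continuity, as in the proof of Theorem~\ref{thm:implicit-form-for-optimizer}; the case \(\alpha=1\) is then handled as a limit, or directly via the chain rule. For full-rank inputs the optimizers \(\sigma_{*,i}\) are full rank (Lemma~3.5 of~\cite{rubboli-phd-2025}) and satisfy \(\sigma_{*,i}\propto\Delta^{\!A_i}\pa[\big]{(\sigma_{*,i}^{r/2}X_i\sigma_{*,i}^{r/2})^z}\). Since operator powers and \(\Delta^{\!AB}=\Delta^{\!A_1}\otimes\Delta^{\!A_2}\) factorize on tensor products, \(\sigma_{*,1}\otimes\sigma_{*,2}\) is a full-rank free state satisfying the same fixed-point equation (\ref{eq:fixed-point-condition}) for \(X_1\otimes X_2\).

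The key step is then a \emph{converse} of Theorem~\ref{thm:implicit-form-for-optimizer}: every full-rank free solution of (\ref{eq:fixed-point-condition}) is a global optimizer. By Claim~\ref{proposition-convex-optimizer}, and using the convexity/concavity of \(F\) on \(\region_F\) from~\cite{CFL2018}, it suffices to show \(\condExpectation(\Xi_\sigma)\propto I\).
\begin{itemize}
\item \emph{Self-adjoint case} (\(\Delta=\tpExpectation\)). Set \(W=(\sigma^{r/2}X\sigma^{r/2})^z\). By bimodularity, \(\tpExpectation(\Xi_\sigma)\) equals the expression (\ref{eq:Xi}) with \(W\) replaced by \(\tpExpectation(W)\). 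The hypothesis gives \(\tpExpectation(W)=c\sigma\). This commutes with \(\sigma\), so the integral collapses exactly as in Eqs.~(\ref{eq:eliminate-integral})--(\ref{eq:if-commute}) and yields \(\tpExpectation(\Xi_\sigma)\propto \sigma/\sigma=I\).
\item \emph{General case.} Transfer via the twisting map \(\twisting\), using Observation~\ref{observation:reduction-to-self-adjoint} and Eqs.~(\ref{eq:twisting-properties-2})--(\ref{eq:twisting-properties-final2}). Running Step~2 of the proof of Theorem~\ref{thm:implicit-form-for-optimizer} backwards shows that \(\twisting^{-1}(\sigma)\) satisfies the self-adjoint fixed-point equation for \(G_{\twisting^r(X)}\), hence is optimal there, hence \(\sigma\) is optimal for \(F\).
\end{itemize}
It follows that \(\sigma_{*,1}\otimes\sigma_{*,2}\) attains the optimum for \(\rho_1\otimes\rho_2\). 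Multiplicativity of \(F\) then gives equality, i.e.\ additivity.

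I expect the converse step to be the main obstacle. Two points need care. First, in the general (non-self-adjoint) case one must verify that the twisted problem is again of the form \(G_Y\) with \((r,z)\in\region_F\), so that the convexity/concavity needed for Claim~\ref{proposition-convex-optimizer} is retained. Second, the continuity passage must be justified: \(D_{\alpha,z}\divergence{\rho}{\free}\) has to be continuous in \(\rho\) for full-rank approximations, including at \(\alpha=1\) and on the boundary of \(\region\). For this I would use compactness of \(\free\) together with joint continuity of \(Q_{\alpha,z}\) near full-rank \(\sigma\).
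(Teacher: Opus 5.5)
Your proposal is correct and follows the paper's route: for full-rank inputs, the tensor product of the two factor optimizers satisfies the fixed-point equation (\ref{eq:fixed-point-condition}), by locality of $\Delta$ and because operator powers factorize on tensor products. You also prove explicitly the converse the paper's one-line proof uses only implicitly, namely that a full-rank free solution of that equation is a global optimizer (via Claim~\ref{proposition-convex-optimizer} together with convexity or concavity of $F$); once you have it, your separate subadditivity argument is redundant.
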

\begin{proof}
   \(\alpha,z\) lies within the DPI region \(\region\) (see definition in Eq.~(\ref{dpi-domain})), thus, setting \(r = (1-\alpha)/z\) yields \((r,z)\in \region_F\).
   We can therefore apply Theorem~\ref{thm:implicit-form-for-optimizer} substituting \(X=\rho^{\alpha/z}\) which yields \(Q_{\alpha,z} \divergence{\rho}{\sigma} = F(\sigma)\) (definitions in Eqs.~(\ref{eq:Q-alpha_z}) and (\ref{eq:functional})).
  
  Our goal is to prove that \(\sigma\otimes\sigma'\) is the optimizer \(Q_{\alpha,z}\divergence{\rho\otimes\rho'}{\cdot}\) over free states assuming \(\sigma\) optimizes \(Q_{\alpha,z}\divergence{\rho}{\cdot}\) and \(\sigma'\) optimizes \(Q_{\alpha,z}\divergence{\rho'}{\cdot}\).
With Theorem~\ref{thm:implicit-form-for-optimizer}, the assumptions mean that \(\sigma \propto \Delta\pa[\big]{\pa[\big]{\sigma^{\frac r 2} \rho^{\frac \alpha z}\sigma^{\frac r 2}}^z}\)
   and \(\Delta\pa*{\sigma'} \propto \Delta\pa[\big]{\pa[\big]{{\sigma'}^{\frac r 2} {\rho'}^{\frac \alpha{z}}{\sigma'}^{\frac r 2}}^z}\),
   and we only need to prove that
    \begin{equation}
      \!\sigma\otimes\sigma' \propto \Delta\pa*{\pa*{(\sigma\otimes\sigma')^{\frac r 2} (\rho\otimes\rho')^{\frac\alpha{z}}(\sigma\otimes\sigma')^{\frac r 2}}^z}\,.
    \end{equation}
However, this follows by \AssumptionLink{locality}{locality of destruction} and since operator powers factorize over tensor products (\((X\otimes Y)^p = X^p\otimes Y^p\)).
\end{proof}
We are ready to prove the general case of Theorem~\ref{thm:3-renyi}.
\ThmRenyi*
  A similar analysis as in the proof of Lemma~\ref{thm:optimized-renyi} shows that \(\monotone_{\alpha,z}^\lambda\) is additive.
  Thus, we only need to establish monotonicity under destruction-covariant channels.
  To do this, we establish an asymptotic form for \(\monotone_{\alpha,z}^\lambda\).
\begin{lemma}[Asymptotic form]\label{appendix:asymptotic-form}
  For \(\rho\in \states(A)\), we have:
\begin{align}
  \monotone_{\alpha,z}^\lambda(\rho) 
  = \lim_{n\to \infty} \frac 1 n \inf_{ \sigma\in \free \cap \sym}\!\!\!\! D_{\alpha,z} \divergence{\rho^{\otimes n}}{\sigma\#_\lambda \Delta\pa*{\rho}^{\otimes n}} .
\end{align}
\end{lemma}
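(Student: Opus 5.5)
The plan is to prove the two inequalities separately. Both directions use the observation that the expression inside the infimum in Eq.~(\ref{eq:3-parameter-entropy}) is exactly an \(\alpha\)--\(z\) Rényi divergence with a geometric-mean second argument, provided \(\sigma\) commutes with \(\Delta(\rho)\). Indeed, if \([\sigma,\Delta(\rho)]=0\), then the geometric mean of \(\sigma\) and \(\Delta(\rho)\) (with weight \(\lambda\) on \(\Delta(\rho)\), matching the exponents in \(X_\rho\)) reduces to \(\sigma^{1-\lambda}\Delta(\rho)^{\lambda}\). Substituting into Eq.~(\ref{eq:Q-alpha_z}) then gives
\begin{equation*}
Q_{\alpha,z}\divergence{\rho}{\sigma^{1-\lambda}\Delta(\rho)^{\lambda}}=\tr\pb[\big]{\pa[\big]{X_\rho\,\sigma^{(1-\lambda)\frac{1-\alpha}{z}}X_\rho^*}^z}.
\end{equation*}
So the whole argument reduces to controlling the non-commutativity between symmetric free states on \(A^{n}\) and \(\Delta(\rho)^{\otimes n}\). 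Throughout, \(n\) is the number of copies.

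\emph{Achievability (the limit is at most \(\monotone_{\alpha,z}^\lambda(\rho)\)).} First I would mimic the proof of Lemma~\ref{thm:optimized-renyi}. I apply Theorem~\ref{thm:implicit-form-for-optimizer} with \(r=(1-\lambda)(1-\alpha)/z\) and \(X=X_\rho^*X_\rho\). This shows that the optimizer of the infimum in Eq.~(\ref{eq:3-parameter-entropy}) for \(\rho^{\otimes n}\) can be taken to be \(\sigma_*^{\otimes n}\), which gives additivity \(\monotone_{\alpha,z}^\lambda(\rho^{\otimes n})=n\monotone_{\alpha,z}^\lambda(\rho)\). The tensor power \(\sigma_*^{\otimes n}\) is free and symmetric.

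Next I would show that \(\sigma_*\) can be chosen to commute with \(\Delta(\rho)\). The tool is the pinching \(\mathcal{P}\) onto the eigenspaces of \(\Delta(\rho)\). Its spectral projections are functions of \(\Delta(I)\) and \(\tpTerminal{\rho}\), which commute (Appendix~\ref{appendix:conditional-expectation}). Using Eq.~(\ref{eq:twisting-properties-2}), I would check that \(\mathcal{P}\) maps \(\free(A)\) into itself. Since \(X_\rho\) contains \(\Delta(\rho)\) on its inner side, replacing \(\sigma\) by \(\mathcal{P}(\sigma)\) should not worsen the trace functional. Here I would use the concavity or convexity of \(F\) from \cite{CFL2018} together with the fact that \(\mathcal{P}\) is a mixture of unitaries commuting with \(\Delta(\rho)\). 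If that argument fails, I would skip it and argue only asymptotically, as in the converse below. Granting commutation, plugging \(\sigma_*^{\otimes n}\) into the right-hand side gives exactly \(n\monotone_{\alpha,z}^\lambda(\rho)\) by the displayed identity.

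\emph{Converse (the limit is at least \(\monotone_{\alpha,z}^\lambda(\rho)\)).} For an arbitrary \(\sigma\in\free\cap\sym\) on \(A^{n}\), let \(\mathcal{P}_n\) be the pinching onto the eigenspaces of \(\Delta(\rho)^{\otimes n}\). The operator \(\Delta(\rho)^{\otimes n}\) has at most \(v_n=\mathrm{poly}(n)\) distinct eigenvalues. The pinching inequality then gives \(\sigma\le v_n\mathcal{P}_n(\sigma)\). The geometric mean is monotone in each argument, so
\begin{equation*}
\sigma\#_\lambda\Delta(\rho)^{\otimes n}\le v_n^{c}\,\mathcal{P}_n(\sigma)\#_\lambda\Delta(\rho)^{\otimes n}
\end{equation*}
for a constant \(c\in[0,1]\) determined by \(\lambda\). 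Löwner monotonicity of \(D_{\alpha,z}\) together with Eq.~(\ref{eq:Lowner-normalization}) then lowers the divergence by at most \(c\log v_n=O(\log n)\). Since \(\mathcal{P}_n(\sigma)\) is free (by the same argument as in the single-copy case) and commutes with \(\Delta(\rho)^{\otimes n}\), the resulting divergence equals the expression inside Eq.~(\ref{eq:3-parameter-entropy}) for \(\rho^{\otimes n}\) evaluated at \(\mathcal{P}_n(\sigma)\). That value is at least \(\monotone_{\alpha,z}^\lambda(\rho^{\otimes n})=n\monotone_{\alpha,z}^\lambda(\rho)\). Dividing by \(n\) and letting \(n\to\infty\) closes the argument and shows the limit exists.

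\emph{Main obstacle.} I expect the main difficulty to be verifying that pinching with respect to \(\Delta(\rho)^{\otimes n}\) preserves \(\free\), i.e., commutes suitably with the twisted structure \(\Delta=\twisting\circ\tpExpectation\) when \(\Delta\) is not self-adjoint. A secondary difficulty is keeping track of the sign of \(1/(\alpha-1)\) so that the Löwner-monotonicity step goes in the right direction in both regimes \(\alpha<1\) and \(\alpha>1\). This should work out because \(D_{\alpha,z}\) itself, not just \(Q_{\alpha,z}\), is antimonotone in its second argument.
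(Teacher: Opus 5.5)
\textbf{Where the proposal breaks.} Your converse is sound, given the freeness lemma below. Your first achievability step, however, is false in general: the optimizer $\sigma_*$ of Eq.~\eqref{eq:3-parameter-entropy} cannot always be chosen to commute with $\Delta(\rho)$. The averaging argument fails because the pinching unitaries $U$ commute with $\Delta(\rho)$ but not with $\rho$, which sits on the outer side of $X_\rho$. So the trace functional is not invariant under $\sigma\mapsto U\sigma U^{*}$, and concavity or convexity gives you nothing.

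Here is a concrete counterexample. Take the self-adjoint destruction $\Delta(\rho)=\u^{A_1}\otimes\rho^{B_1}$ and $z=1$. By Lemma~\ref{functional:z=1}, $\sigma_*$ is proportional to a positive power of $\Delta(X)$, where $X=\Delta(\rho)^{a}\rho^{\alpha}\Delta(\rho)^{a}$ and $a\neq 0$ whenever $\lambda\in(0,1)$ and $\alpha\neq 1$. By bimodularity, $\Delta(X)=\Delta(\rho)^{a}\Delta(\rho^{\alpha})\Delta(\rho)^{a}$. For invertible $\Delta(\rho)$ this commutes with $\Delta(\rho)$ iff $[\tr_{A_1}\rho^{\alpha},\rho^{B_1}]=0$. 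That fails, e.g.\ for $\rho=p\ket{0}\!\bra{0}\otimes\ket{0}\!\bra{0}+(1-p)\ket{1}\!\bra{1}\otimes\ket{+}\!\bra{+}$ with $p\neq\frac12$, where the commutator equals $(p^{\alpha}(1-p)-p(1-p)^{\alpha})[\ket{0}\!\bra{0},\ket{+}\!\bra{+}]$. Inserting $\sigma_*^{\otimes n}$ on the right-hand side therefore gives $n\,D_{\alpha,z}(\rho\,\|\,\sigma_*\#_\lambda\Delta(\rho))$, which is a different number from $n\monotone^{\lambda}_{\alpha,z}(\rho)$.

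\textbf{What is needed instead.} Your fallback is the right fix, but it has to run in the direction opposite to your converse. Take as candidate $\mathcal{P}_n(\sigma_*^{\otimes n})$, where $\mathcal{P}_n$ is the pinching onto the eigenspaces of $\Delta(\rho)^{\otimes n}$. This candidate has three properties:
\begin{itemize}
\item it is permutation invariant, because the eigenprojections of $\Delta(\rho)^{\otimes n}$ commute with permutations;
\item it commutes with $\Delta(\rho)^{\otimes n}$;
\item it satisfies $\sigma_*^{\otimes n}\le v_n\mathcal{P}_n(\sigma_*^{\otimes n})$.
\end{itemize}
On $\region$ the $\sigma$-exponent $s=(1-\lambda)(1-\alpha)/z$ lies in $[-1,1]$, so $t\mapsto t^{s}$ is operator monotone or antimonotone. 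Hence the objective in Eq.~\eqref{eq:3-parameter-entropy} is L\"owner-antimonotone in $\sigma$ and drops by $(1-\lambda)\log t$ under $\sigma\mapsto t\sigma$. Its value at the candidate is therefore at most $n\monotone^{\lambda}_{\alpha,z}(\rho)+(1-\lambda)\log v_n$.

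The obstacle you flag is resolvable, but it needs more than the fact that $\Delta(I)$ commutes with $\bar\Delta(\rho)$. The relevant fact, stated after Eq.~(\ref{eq:destruction-defining-state}), is that $\Delta(I)$ is invertible and lies in the commutant $\mathscr{A}'$ of $\mathscr{A}=\ima\Delta^{*}$. Write any free $\omega$ as $\omega=\Delta(I)\bar\Delta(\omega)$. Its eigenprojections are $P_\mu=\sum_{ab=\mu}E_aF_b$, with $E_a\in\mathscr{A}'$ the spectral projections of $\Delta(I)$ and $F_b\in\mathscr{A}$ those of $\bar\Delta(\omega)$. Take a free $\sigma=\mathcal{T}(\bar\sigma)$, where $\mathcal{T}(Y)=\Delta(I)^{1/2}Y\Delta(I)^{1/2}$ and $\bar\sigma\in\mathscr{A}$. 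Then $E_aF_b\,\sigma\,F_{b'}E_{a'}=\mathcal{T}(F_b\bar\sigma F_{b'})E_aE_{a'}$. This vanishes unless $a=a'$, and then $ab=ab'$ with $a>0$ forces $b=b'$. Thus $\mathcal{P}_\omega(\sigma)=\mathcal{T}\bigl(\sum_bF_b\bar\sigma F_b\bigr)$, which is free. Apply this on $A^{n}$ with destruction $\Delta^{\otimes n}$ and $\omega=\Delta(\rho)^{\otimes n}$.

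\textbf{Comparison with the paper.} With these repairs your proof is correct and shares the paper's skeleton: replace $\sigma$ by a free state that dominates it up to a $\mathrm{poly}(n)$ factor and commutes with $\Delta(\rho)^{\otimes n}$.
\begin{itemize}
\item The paper uses a single $\rho$-independent state $\Delta(\hat\omega_n)$, built from the Hayashi--Tomamichel universal symmetric state. It commutes with every symmetric free state by a short bimodularity computation (Claim~\ref{claim:universal-state}) and serves both directions at once. However, it only dominates symmetric states, which is why $\sym$ appears in the statement.
\item Your pinching depends on $\rho$ and $\sigma$ and costs the freeness lemma above. In return, your converse never uses permutation invariance, so you obtain the same limit with the infimum taken over all of $\free(A^n)$.
\end{itemize}
Your weight convention for $\#_\lambda$ is the reverse of Eq.~(\ref{eq:def-geometric-mean}). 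This is immaterial, since the paper itself swaps $\lambda$ and $1-\lambda$ between $X_\rho$ and $f_\rho$.
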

Here, \(\#_\lambda\) is the operator geometric mean (see Eq.~(\ref{eq:def-geometric-mean})) and \(\sym\) denotes the family of states invariant under permutations of the copies of \(A\) in the composite system \(A^n\).
These states are those fixed by the twirling channel \(\mathcal{G}_n\), that averages the standard action of the symmetric group on \(\linearOperators(A^n)\) (we minimize over \(\sigma\in \free(A^n) \cap \sym(A^n)\)).
Note that \(\mathcal{G}_n \circ\Delta = \Delta \circ \mathcal{G}_n\), by locality of \(\Delta\).

Before we prove Lemma~\ref{appendix:asymptotic-form}, let us see how it establishes the monotonicity of \(\monotone_{\alpha,z}^\lambda\).
\begin{proof}[Proof of Theorem~\ref{thm:3-renyi}]
  Additivity has already been established. 
  To prove monotonicity, let \(\mathcal{N}\) be \(\Delta\)-covariant. Note that \(\mathcal{N}^{\otimes n}\) is \(\Delta \circ \mathcal{G}_n\)-covariant.
Hence, we may apply Lemma~\ref{appendix:interpolation-optimized-direct} with the effective destruction \(\Delta \circ \mathcal{G}_n\) and \(\mean = \#_\lambda\) (see definition in Eq.~(\ref{eq:def-geometric-mean})), obtaining
\begin{align}
  \inf_{ \sigma\in \free \cap \sym}\!\!\! D_{\alpha,z} \divergence{\rho^{\otimes n}}{\sigma\#_\lambda \Delta\pa*{\rho}^{\otimes n}} & \ge \inf_{ \sigma\in \free \cap \sym}\!\!\! D_{\alpha,z} \divergence{\mathcal{N}^{\otimes n}(\rho^{\otimes n})}{\sigma\#_\lambda \mathcal{N}^{\otimes n}\pa[\big]{\Delta\pa*{\rho}^{\otimes n}}} \\
  & = \inf_{ \sigma\in \free \cap \sym}\!\!\! D_{\alpha,z} \divergence{\mathcal{N}(\rho)^{\otimes n}}{\sigma\#_\lambda \mathcal{N}(\Delta\pa*{\rho})^{\otimes n}}\;.
\end{align}
Dividing by \(n\) and letting \(n\to\infty\) while using Lemma~\ref{appendix:asymptotic-form} establishes monotonicity of \(\monotone_{\alpha,z}^\lambda\).
\end{proof}
\begin{proof}[Proof of Lemma~\ref{appendix:asymptotic-form}]
Let us denote
\begin{equation}\label{eq:3-relative}
  f_\rho(\sigma) \coloneqq 
  \!\frac{1}{\alpha-1} \log \tr \pa[\Big]{\!\rho^{\frac{\alpha}{2z}} \Delta\pa*{\rho}^{\frac {(1-\lambda)(1-\alpha)} {2z}} \!{\sigma}^{\!\frac {\lambda(1-\alpha)} z}\!\!
  \Delta\pa*{\rho}^{\!\!\frac {(1-\lambda)(1-\alpha)} {2z}}\!\!\!
   \rho^{\frac{\alpha}{2z}}\!}^{\!z}\;,
\end{equation}
so that \(
  \monotone_{\alpha,z}^\lambda(\rho) = \inf_{\sigma\in \free} 
  f_\rho(\sigma) = f_\rho(\optimizer)
  \),
where \(\optimizer\) is the optimizing free state.
Invoke the universal permutation-invariant state \(\hat \omega_n\in \states(A^n)\) from Ref.~\cite{HT2016}, which
\begin{itemize}
  \item Commutes with every permutation-invariant operator.
  \item Satisfies \(g_n \hat \omega_n \ge \sigma\) for every permutation-invariant state \(\sigma\). Here, \(g_n \coloneqq \rank \mathcal{G}_n\) satisfies \( \log g_n = o(n)\).
\end{itemize}
The asymptotic form follows from the chain:
\begin{equation}
   n \monotone_{\alpha,z}^\lambda(\rho) + o(n)  \overset{(\ref{item:asymptotic-chain-1})}{=} f_{\rho^{\otimes n}}(\Delta\pa*{\hat \omega_n}) \overset{(\ref{item:asymptotic-chain-2})}{=} D_{\alpha,z} \divergence{\rho^{\otimes n}}{\Delta\pa*{\hat \omega_n} \#_\lambda \Delta\pa*{\rho}^{\otimes n}}  \overset{(\ref{item:asymptotic-chain-3})}{=}\!\!\!\!\inf_{ \sigma\in \free \cap \sym}\!\!\! D_{\alpha,z} \divergence{\rho^{\otimes n}}{\sigma\#_\lambda \Delta\pa*{\rho}^{\otimes n}} +o(n)\;.  
\end{equation}

\begin{enumerate}
  \item \label{item:asymptotic-chain-1} (\(\le\)) By additivity:  \(\optimizer ^{\otimes n}\)
   is the optimizing free state of \(f_{\rho^{\otimes n}}\) with value \(n \monotone_{\alpha,z}^\lambda (\rho)\), and \(\Delta\pa*{\hat \omega_n}\) is a free state: \(n \monotone_{\alpha,z}^\lambda(\rho)=f_{\rho^{\otimes n}}(\optimizer^{\otimes n}) \le f_{\rho^{\otimes n}}(\Delta\pa*{\hat \omega_n})\).

  (\(\ge\)) since \(\optimizer^{\otimes n}\le g_n \hat \omega_n\) and \(\Delta\) is positive, \(\optimizer^{\otimes n} =\Delta(\optimizer^{\otimes n}) \le g_n \Delta(\omega_n) \).
   Thus, by the definition of \(f_\rho\) in Eq.~(\ref{eq:3-relative}), \(f_{\rho^{\otimes n}}(\Delta\pa*{\hat \omega_n}) \le f_{\rho^{\otimes n}}(\optimizer ^{\otimes n}) + \lambda \log (g_n)\), which concludes this direction since \(\lambda \log (g_n) = o(n)\).

  \item \label{item:asymptotic-chain-2} 
  Follows by the definitions of \(f\), \(D_{\alpha,z}\), and \(\#_\lambda\) (Eqs. (\ref{eq:3-relative}), (\ref{eq:alpha-z-renyi}) and (\ref{eq:def-geometric-mean}), respectively), and from the fact that \(\Delta\pa*{\hat \omega_n}\) commutes with \(\Delta\pa*{\rho}^{\otimes n}\) (see Claim~\ref{claim:universal-state} below).
  \item \label{item:asymptotic-chain-3} Analogous to the proof of item~\ref{item:asymptotic-chain-1}.
\end{enumerate}
\end{proof}
\begin{claim}\label{claim:universal-state}
  Any permutation invariant free state \(\sigma\in \free(A^n)\) commutes with \(\Delta(\hat \omega_n)\).
\end{claim}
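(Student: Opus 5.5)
The plan is to reduce everything to the trace-preserving conditional expectation $\tpExpectation$ onto $\algebra_n\coloneqq\ima\condExpectation$ on $A^n$. We then use two facts: bimodularity (Eq.~(\ref{eq:bimodularity})), and the property that $\hat\omega_n$ commutes with every permutation-invariant operator.

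\textbf{Step 1 (factor out the twisting).} By \AssumptionLink{locality}{locality of destruction}, the destruction on $A^n$ is $\Delta^{\otimes n}$, so $\Delta(I^{A^n})=\Delta(I^A)^{\otimes n}$, which is permutation invariant. Appendix~\ref{appendix:conditional-expectation} gives $\Delta(Y)=\Delta(I)\,\tpTerminal{Y}=\tpTerminal{Y}\,\Delta(I)$ for every $Y$, and $\Delta(I)$ commutes with every element of $\algebra_n$. Since $\sigma$ is free, this yields
\begin{equation}
  \sigma=\Delta(\sigma)=\Delta(I)\,\tpTerminal{\sigma}\quad\text{and}\quad \Delta(\hat\omega_n)=\Delta(I)\,\tpTerminal{\hat\omega_n},
\end{equation}
where $\tpTerminal{\sigma},\tpTerminal{\hat\omega_n}\in\algebra_n$ and $\Delta(I)$ commutes with both. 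It therefore suffices to show that $\tpTerminal{\sigma}$ and $\tpTerminal{\hat\omega_n}$ commute.

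\textbf{Step 2 (permutation invariance of $\tpTerminal{\sigma}$).} Since $\sigma=\Delta(\sigma)$, we have $\tpTerminal{\sigma}=\twisting^{-1}(\Delta(\sigma))=\Delta(I)^{-1/2}\sigma\Delta(I)^{-1/2}$. Both $\sigma$ and $\Delta(I)^{\pm1/2}$ are invariant under conjugation by the permutation unitaries, so $Y\coloneqq\tpTerminal{\sigma}$ is a permutation-invariant element of $\algebra_n$. By the defining property of the universal state, $Y\hat\omega_n=\hat\omega_n Y$.

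\textbf{Step 3 (bimodularity).} Since $Y\in\algebra_n$ and $\tpExpectation$ is an $\algebra_n$-bimodular conditional expectation,
\begin{equation}
  Y\,\tpTerminal{\hat\omega_n}=\tpTerminal{Y\hat\omega_n}=\tpTerminal{\hat\omega_n Y}=\tpTerminal{\hat\omega_n}\,Y.
\end{equation}
Combining this with Step 1 gives $\sigma\Delta(\hat\omega_n)=\Delta(\hat\omega_n)\sigma$.

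\textbf{Main obstacle.} The only delicate point is that $\Delta$ itself is not bimodular when it is not self-adjoint. The obvious attempt, "$\sigma$ commutes with $\hat\omega_n$, so apply $\Delta$", therefore fails directly, and the twisting decomposition $\Delta=\twisting\circ\tpExpectation$ is what circumvents this. Two facts must be checked carefully: that $\Delta(I)$ on $A^n$ is the tensor power, which holds by locality, and that it commutes with $\algebra_n$, which holds by Eq.~(\ref{eq:destruction-defining-state}) applied to the composite conditional expectation. Both follow from results already stated.
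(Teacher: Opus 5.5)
Your proposal is correct and follows essentially the same route as the paper. Both arguments strip off the twisting by \(\Delta(I)\), note that \(\twisting^{-1}(\sigma)\) is a permutation-invariant element of \(\algebra\) and hence commutes with \(\hat\omega_n\), and then transfer this commutation through \(\tpExpectation\) by bimodularity. The only cosmetic difference is that you write \(\Delta=\Delta(I)\,\tpExpectation\) and move \(\Delta(I)\) past elements of \(\algebra\) directly, whereas the paper packages the same fact as \(\twisting(X)\twisting(Y)=\twisting^2(XY)\) for fixed \(X,Y\).
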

\begin{proof}
 Let \(\tpExpectation = \twisting^{-1} \circ \Delta\) be the trace-preserving conditional expectation onto \(\algebra= \ima \condExpectation\) where \(\twisting\) is defined in Eq.~(\ref{def:twisting}).
  Denote \(\bar{\sigma} \coloneqq \twisting^{-1}(\sigma)\) and note that this state lies within \(\algebra\). 
  Thus,
\begin{equation}
  \sigma \Delta\pa*{\hat \omega_n} = \twisting  (\bar{\sigma}) \twisting ( \bar\Delta\pa*{\hat \omega_n}) \overset{(\ref{eq:twisting-properties-2})}{=}\twisting^2 (\bar{\sigma}\tpExpectation(  \hat \omega_n))\explainabove{=}{\text{\!\!\!\!\!\!bimodularity\!\!\!\!\!\!}} \twisting^2 \circ \tpExpectation( \bar{\sigma} \hat \omega_n)\;.
\end{equation} 
Similarly,
\(  \Delta\pa*{\hat \omega_n}\sigma =\twisting^2 \circ \bar \Delta(\hat \omega_n  \bar{\sigma})\).
 \(\bar{\sigma}\) is permutation invariant (by locality of \(\twisting\)), hence \(\bar{\sigma} \hat \omega_n = \hat \omega_n \bar{\sigma} \), thus concluding
\(\sigma \Delta\pa*{\hat \omega_n} = \Delta\pa*{\hat \omega_n}\sigma\). 
\end{proof}
\subsection{Extremal Additive Monotones}\label{appendix:extremals}
In this section we prove Theorem~\ref{thm:extremal} from the main text, which establishes that \(\dmin \divergence{\rho}{\free}\) and \(\dmax \divergence{\rho}{\Delta(\rho)}\) are the minimal and maximal among all normalized additive monotones.
\ThmExtremal*
\begin{proof}
  We have seen that \(\dmin \divergence{\rho}{\free}\) and \(\dmax \divergence{\rho}{\Delta(\rho)}\) are additive monotones (and they map \(\currency[1]\) to 1).
   Given a normalized additive monotone \(\monotone\), we need to establish
   \begin{equation}\label{eq:additive-monotone-bounds}
    \dmin \divergence{\rho}{\free} \le \monotone(\rho) \le \dmax \divergence{\rho}{\Delta(\rho)}\;.
   \end{equation}
   By Eqs.~(\ref{eq:catalytic}) and~(\ref{eq:cost}), Eq.~(\ref{eq:additive-monotone-bounds}) becomes
  \begin{equation}\label{eq:additive-monotone-bounds-operational}
    \yield[0](\rho\otimes \currency[1])-1 \le \monotone(\rho) \le \cost[0](\rho)\;.
   \end{equation}
 
Let \(m\in (0,\infty)\), if \(\currency[m] \convto[0] \rho\), then by monotonicity and Claim~\ref{claim:weakly-additive-monotones-normalization},
 \( \monotone(\rho)  \le \monotone(\currency[m])
  =m\). Since \(\monotone(\rho)\) is bounded by all \(m\) such that \(\currency[m] \convto[0] \rho\), it is also bounded by their infimum, \(\cost[0](\rho)\), establishing the upper bound of Eq.~(\ref{eq:additive-monotone-bounds-operational}).

  Similarly, if \(\rho\otimes \currency[1] \convto[0]\currency[m]\), then \(\monotone(\rho\otimes \currency[1]) \ge \monotone(\currency[m])\).
  By additivity and Claim~\ref{claim:weakly-additive-monotones-normalization} we obtain \(\monotone(\rho)+1 \ge m\).
  Therefore, \(\monotone(\rho) +1\ge \sup\pc{m|\rho\otimes \currency[1] \convto[0]\currency[m]} \defeq \yield[0](\rho\otimes \currency[1])\). 
  This establishes the lower bound of Eq.~(\ref{eq:additive-monotone-bounds-operational}).
\end{proof}

\section{The Currency of Instability}\label{appendix:numeraire}
A \emph{currency} (also called a \emph{golden unit} or a \emph{family of reference states}) is a parametrized family of resource states \(\pc{\phi_m}_m\) that is \emph{additive} (\(\phi_m \otimes \phi_t \sim \phi_{m+t}\)) and \emph{cofinal} (for every \(\rho \in \states\) there exists \(m\) such that \(\phi_m \convto[0]\rho\)).
The parameter regime can be any sub-semigroup of the  \(([0,\infty),+)\).

In this work we use the thermodynamic currency \(\currency[m]\) defined in the main text (and in \cite{WW2019}).
By Eq.~(\ref{eq:cost}), \(\cost(\rho) = \dmax \divergence{\rho}{\Delta(\rho)}\) for all \(\rho \in \states\).
Because \(\Delta^{\!A*}\) is a faithful conditional expectation, the support of \(\Delta(\rho)\) always contains the support of \(\rho\), thus \(\dmax \divergence{\rho}{\Delta(\rho)}\) is always finite (i.e., \(\pc{\currency[m]}_m\) is cofinal); the supremum over \(\rho\in\states(A)\) is the log of the ``generalized Pimsner-Popa index'' of the conditional expectation \(\Delta^{\!A*}\) \cite{GR2022}.

Moreover, by definition, \(\yield(\rho) \le \cost(\rho)\).
If equality holds, then \( \rho \sim \currency[m]\) for \(m = \yield(\rho) = \cost(\rho)\).
The states satisfying \(\yield(\rho) = \cost(\rho)\) are totally ordered under \(\convto[0]\) and obey a second law already at the single-copy level.
For the following states \(\rho\), we can establish this equality using Lemma~\ref{thm:one-shot} with \(\epsilon=0\) by direct calculations of  \(h^\epsilon(\rho)\) (defined in Eq.~(\ref{def:restricted-HT})) and \(\dmax \divergence{\rho}{\Delta(\rho)}\) (see Eq.~(\ref{eq:def-dmax})). 
\begin{enumerate}
  \item Any \(\sigma\in \free\) satisfies \(\sigma\sim \currency[0]\).
  \item \(\currency[m]\otimes \currency[m'] \sim \currency[m+m']\).
  \item \label{item:Phi} 
  Let \(A\) and \(B\) be systems with the same dimension \(d\) and suppose \(\Delta^{\!AB}(\rho) = \u^A \otimes \rho^B\) (depolarizing \(A\), this is the destruction mechanism in the resource theory of conditional nonuniformity).
The maximally entangled state  \(\ket{\Phi}^{AB} = \frac{1}{\sqrt{d}} \sum \ket x_A \ket x_B\) satisfies \(\ket{\Phi}_{d\times d} \sim \currency[2 \log d]\).
  \item Similarly, the \(d\)-dimensional fully coherent state has \(\ket + _d \sim \currency[\log d]\).
\end{enumerate}
The last two examples provide concrete currency ``alternatives'' for which the one-shot yield/cost formulas in Eqs.~(\ref{eq:distill})--(\ref{eq:cost}) also apply (up to the continuous parameterization in \(\currency[m]\)).
This recovers known results in the literature of coherence and athermality \cite{RFXA2018,LBT2019,WW2019,JGW2025}.

Since \(\yield(\rho)\le \dmin\divergence{\rho}{\free}\), Eq.~(\ref{eq:optimal-entropies}) forces every normalized additive instability monotone to take the same value on this class; for \(\ket\Phi\) (item~\ref{item:Phi} above), this recovers the main statement of \cite{GWBG2024} (and extending this argument to the dynamical regime recovers the main statement of \cite{GKN+2025}).

\begin{claim}\label{claim:weakly-additive-monotones-normalization}
  Let \(\monotone\) be a weakly additive monotone that is normalized \((\monotone(\currency[1]) = 1)\). Then \(\monotone(\currency[t]) = t\) for all \(t\in (0,\infty)\).
\end{claim}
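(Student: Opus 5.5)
Starting from weak additivity, we get the value of \(\monotone\) on dyadic rationals and then pass to all \(t>0\) by monotonicity.

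\textbf{Positive integers.} For \(k\in\N\), additivity of the currency gives \(\currency[1]^{\otimes k}\sim\currency[k]\). Since \(\monotone\) is a monotone, it is constant on \(\sim\)-equivalence classes. Weak additivity then yields
\[
\monotone(\currency[k])=\monotone(\currency[1]^{\otimes k})=k\,\monotone(\currency[1])=k .
\]

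\textbf{Rationals.} For \(t=p/q\) with \(p,q\in\N\), we have \(\currency[t]^{\otimes q}\sim\currency[qt]=\currency[p]\). Weak additivity and the previous step give \(q\,\monotone(\currency[t])=\monotone(\currency[p])=p\), so \(\monotone(\currency[t])=t\) for every positive rational \(t\).

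\textbf{Monotonicity in \(t\).} For \(t\le s\) we need \(\currency[s]\convto[0]\currency[t]\). On the classical two-level system the states are diagonal, so we can use the classical channel (stochastic matrix) that sends \(\ket0\mapsto\ket0\) with probability \(a\) and \(\ket0\mapsto\ket1\) with probability \(1-a\), and sends \(\ket1\mapsto\ket1\) deterministically. Composing this with the dephaser keeps the whole map classical. Destruction-covariance between replacer systems amounts to mapping the Gibbs state \(\gibbs_s\) to \(\gibbs_t\): since the channel is classical, \(\mathcal N(\Delta(\rho))=\tr[\rho]\,\mathcal N(\gibbs_s)\) and \(\Delta(\mathcal N(\rho))=\tr[\rho]\gibbs_t\). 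The requirement \(\mathcal N(\gibbs_s)=\gibbs_t\) becomes \(a\,2^{-s}=2^{-t}\), which is solvable with \(a\in[0,1]\) exactly when \(t\le s\). The resulting channel maps \(\ket0\!\bra0\mapsto \ket0\!\bra0\)? No — for \(a<1\) it sends \(\ket0\) to a mixture, so this naive choice does not output \(\currency[t]\). Instead, the plan is to embed \(\currency[s]\) into \(\currency[t]\) by mapping \(\ket0\mapsto\ket0\) and \(\ket1\mapsto\) the normalized remainder \(\bigl(\gibbs_t-2^{-s}\ket0\!\bra0\bigr)/(1-2^{-s})\). This operator is positive precisely when \(2^{-s}\le 2^{-t}\), i.e.\ \(t\le s\). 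The channel sends \(\gibbs_s\mapsto\gibbs_t\) and \(\currency[s]\mapsto\currency[t]\), and since it is classical (measure-and-prepare in diagonal bases) it commutes with the replacers. I expect this construction to be the main point requiring care.

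\textbf{Sandwiching.} Monotonicity of \(\monotone\) now makes \(t\mapsto\monotone(\currency[t])\) nondecreasing. For arbitrary real \(t>0\), choose rationals \(q_1\le t\le q_2\) arbitrarily close to \(t\). Then \(q_1\le\monotone(\currency[t])\le q_2\), and letting \(q_1,q_2\to t\) gives \(\monotone(\currency[t])=t\).
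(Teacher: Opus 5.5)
Your proof is correct and takes the same route as the paper. You get exact values on positive rationals from currency additivity and weak additivity, then extend to all \(t>0\) using that \(t\mapsto\monotone(\currency[t])\) is monotone. The one addition is that you spell out the conversion \(\currency[s]\convto[0]\currency[t]\) for \(t\le s\), which the paper leaves implicit. Your corrected measure-and-prepare channel (not the discarded first attempt) is exactly the preparation channel from the paper's dilution-cost derivation applied to \(\rho=\currency[t]\), consistent with \(\cost[0](\currency[t])=t\).
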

\begin{remark}
  This claim holds for any quantum resource theory and any currency if the monotones are normalized with respect to it.
  An essentially similar proof is provided in~\cite{GT2021} (Lemma~3), for example.
\end{remark}
\begin{proof}
  By item 2, we have \(\currency[\frac m n]^{\otimes n} \sim \currency[m]\sim \currency[1]^{\otimes m}\) for all \(m,n \in \N\).
By additivity and normalization of \(\monotone\) we therefore infer \(
  n\, \monotone(\currency[\frac{m}{n}])=\monotone(\currency[\frac{m}{n}]^{\otimes n}) = m\, \monotone(\currency[1]) = m
\), proving the claim for all positive rational numbers.
 The extension to \(t \in (0,\infty)\) follows from monotonicity of \(\monotone\).
\end{proof}

\section{The Dilution Cost}
This Appendix analyzes the dilution task. 
In Sec.~\ref{appendix:cost-derivation} we calculate the one-shot instability cost and in Sec.~\ref{appendix:cost-bounds} we bound it via the smoothed max-relative entropy and use the generalized quantum Stein's lemma to conclude that \(\cost[\epsilon](\rho)\) regularizes asymptotically to the Umegaki monotone \(D \divergence{\rho}{\Delta(\rho)}\).

\subsection{Calculation of the One-Shot Dilution Cost}\label{appendix:cost-derivation}
\emph{The exact one-shot instability cost} of a state \(\rho\) is the minimal \(m\) for which \(\currency[m] \convto[0] \rho\) where \(\currency[m]\) is described by the state \(\ket 0\! \bra 0\) relative to the Gibbs state 
\begin{equation}
  \gamma = 2^{-m}\ket 0\! \bra 0 + (1-2^{-m})\ket 1\! \bra 1
\end{equation}
 on a thermodynamic (classical, two-level) system \(B\).

We seek a free preparation channel \(\mathcal{P}\in \free(B\to A)\) with \( \mathcal{P}(\ket 0\! \bra 0) = \rho\).
By destruction covariance (Eq.~(\ref{eq:dcov})) and linearity of \(\mathcal{P}\)
\begin{equation}
  \Delta\pa*{\rho} = \mathcal{P}\pa*{\gamma} =
    2^{-m} \mathcal{P}(\ket 0\! \bra 0) + (1-2^{-m})\mathcal{P}(\ket 1\! \bra 1)\;.
\end{equation}
Substituting \( \mathcal{P}(\ket 0\! \bra 0) = \rho\) again and isolating \(\mathcal{P}(\ket 1\! \bra 1)\) yields 
\begin{equation}
  \mathcal{P}(\ket 1\! \bra 1) = \frac{2^m \Delta\pa*{\rho}-\rho}{2^m -1}\;.
\end{equation}
Thus, such a preparation channel is feasible if and only if \(2^m\Delta\pa*{\rho}-\rho \ge 0\).
The minimal \(m\) with this property, by definition, is the max-relative entropy (see Eq.~(\ref{eq:def-dmax})) between \(\rho\) and \(\Delta(\rho)\), concluding:
\begin{equation}\label{eq:exact-cost}
  \cost(\rho) \defeq \inf\pc*{m \smallgiven  \currency[m]\! \convto[0]\! \rho}
  = \dmax\divergence{\rho}{\Delta\pa*{\rho}}\,.
\end{equation}

The \(\epsilon\)-one-shot instability cost follows immediately by minimizing the cost over states \(\tau\) in the \(\epsilon\)-neighborhood of \(\rho\):
\begin{equation}\tag{\ref{eq:cost}}
    \cost[\epsilon](\rho) \defeq \inf\pc*{m \smallgiven  \currency[m]\! \convto[\epsilon]\! \rho}
    =  \inf_{\tau \approx_\epsilon \rho}\dmax\divergence{\tau}{\Delta\pa*{\tau}}\,.
\end{equation}\vspace{-12pt}\\
\(\cost[\epsilon](\rho)\) is generally larger than \(\dmax^{\epsilon}\divergence{\rho}{\free}\!\! 
= \displaystyle \inf _{\begin{smallmatrix}
  \sigma\in \states\\ \tau\approx_\epsilon \rho
\end{smallmatrix}}  \dmax\divergence{\tau}{\Delta\pa*{\sigma}}\)\vspace{-7pt}
 \cite{Datta2009}.
 In the following section we connect these quantities.

\subsection{Bounds of the One-Shot Cost and the Asymptotic Cost}\label{appendix:cost-bounds}
In this section we show that \(\lim_{n \to \infty} \cost[\epsilon](\rho^{\otimes n}) = D \divergence{\rho}{\Delta(\rho)}\) for all \(\rho\in \states\) and all \(\epsilon\in (0,1)\). 
First, we prove:
\begin{align}
  \label{eq:cost-bound}
    \dmax^\epsilon \divergence{\rho}{\free} &\le \cost[\epsilon] (\rho) \le \dmax^{\epsilon-\delta} \divergence{\rho}{\free}+ \log(1/\delta)\;,\qquad \forall\rho\in \states\;, \epsilon\in (0,1)\;, \delta\in (0, \epsilon)\;, \qquad \text{where}
 \\
  \dmax^\epsilon \divergence{\rho}{\free} &\coloneqq \inf_{\tau\approx_\epsilon \rho}\dmax \divergence{\tau}{\free} \defeq \inf_{\begin{smallmatrix}
      \tau \approx_\epsilon \rho \\
      \sigma \in \states
    \end{smallmatrix}}\dmax\divergence{\tau}{\Delta\pa*{\sigma}} \;.
 \end{align}
Before proving Eq.~(\ref{eq:cost-bound}), let us establish a claim.
\begin{claim}\label{claim:initial-cost-higher-bound}
\(\dmax \divergence{\rho}{\free} + \log 1/\epsilon \ge \cost[\epsilon] (\rho)\) for all \(\epsilon> 0\) and all \(\rho\in \states\).
\end{claim}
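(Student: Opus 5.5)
The plan is to use the one-shot characterization of the dilution cost, Eq.~(\ref{eq:cost}), which says $\cost[\epsilon](\rho)=\inf_{\tau\approx_\epsilon\rho}\dmax\divergence{\tau}{\Delta(\tau)}$. It then suffices to exhibit a single state $\tau$ that is $\epsilon$-close to $\rho$ and satisfies $\dmax\divergence{\tau}{\Delta(\tau)}\le \dmax\divergence{\rho}{\free}+\log(1/\epsilon)$. The construction mixes $\rho$ with the optimal free state of the max-relative entropy, so that $\Delta(\tau)$ retains a fixed fraction of that free state.

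First, set $m\coloneqq\dmax\divergence{\rho}{\free}$. Since $\free(A)$ is compact, the infimum is attained, so we may pick $\sigma\in\free(A)$ with $\rho\le 2^m\sigma$ and $m\ge 0$. Also assume $\epsilon\le 1$, since the case $\epsilon\ge 1$ is trivial because any state is then within $\epsilon$ of $\rho$. Next define
\begin{equation}
  \tau\coloneqq(1-\epsilon)\rho+\epsilon\sigma\;.
\end{equation}

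Closeness is immediate: $\tau-\rho=\epsilon(\sigma-\rho)$. The trace distance between two states is at most $1$, so $\tau\approx_\epsilon\rho$.

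For the max-relative entropy bound, linearity of $\Delta$ and $\Delta(\sigma)=\sigma$ give $\Delta(\tau)=(1-\epsilon)\Delta(\rho)+\epsilon\sigma\ge\epsilon\sigma$, because $\Delta$ is positive. On the other side, $\rho\le 2^m\sigma$ yields
\begin{equation}
  \tau\le\bigl((1-\epsilon)2^m+\epsilon\bigr)\sigma\le 2^m\sigma\le \frac{2^m}{\epsilon}\,\Delta(\tau)\;,
\end{equation}
where the middle inequality uses $m\ge 0$. By the definition in Eq.~(\ref{eq:def-dmax}), this gives $\dmax\divergence{\tau}{\Delta(\tau)}\le m+\log(1/\epsilon)$. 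Taking the infimum in Eq.~(\ref{eq:cost}) then concludes the claim.

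There is no real obstacle here. The only points needing care are the attainment of the minimizer $\sigma$ (or an approximating sequence in its place) and the elementary inequality $(1-\epsilon)2^m+\epsilon\le 2^m$.
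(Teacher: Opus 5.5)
Your proposal is correct and follows the paper's proof essentially verbatim. It uses the same mixture \(\tau=(1-\epsilon)\rho+\epsilon\sigma\) with the \(\dmax\)-optimal free state \(\sigma\), the same bound \(\Delta(\tau)\ge\epsilon\sigma\), and \(2^m\ge 1\) to absorb the \(\sigma\)-component; the paper phrases that last step as a convex combination of \(\frac{t}{\epsilon}\Delta(\tau)\ge\rho\) and \(\frac{t}{\epsilon}\Delta(\tau)\ge\sigma\).

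One small correction to your aside: for \(\epsilon>1\) the case is not trivial but can fail. There \(\cost[\epsilon](\rho)=0\), while \(\dmax\divergence{\rho}{\free}+\log(1/\epsilon)<0\) whenever \(\rho\) is free. So the claim should be read for \(\epsilon\in(0,1]\), which is the only range the paper uses.
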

\begin{remark}
  Compare this result with Lemma 4 in Appendix D of \cite{JGW2025}.
\end{remark}
\begin{proof}
  Let \(\sigma\) be the optimizing free state so that \(\dmax \divergence{\rho}{\sigma} = \dmax \divergence{\rho}{\free}\) and let \(t \coloneqq  2^{\dmax \divergence{\rho}{\sigma}}\), so that \(t\sigma \ge \rho\). 
  Recall that \(\cost[\epsilon](\rho) = \inf_{\tau \approx \rho}\dmax \divergence{\tau}{\Delta(\tau)}\) (Eq.~(\ref{eq:cost})), so it is enough to find a state \(\tau\approx_\epsilon \rho\) such that \(\frac t \epsilon \Delta(\tau) \ge \tau\).
  
  We define \(\tau \coloneqq (1-\epsilon)\rho + \epsilon \sigma\); clearly \(\tau\approx_\epsilon \rho\).
  Moreover, 
  \begin{equation}\label{eq:freater-than-epsilon-sigma}
    \terminal{\tau} = (1-\epsilon )\terminal{\rho} + \epsilon\explainabove{\terminal{\sigma}}{=\sigma} \ge \epsilon\sigma\;,
  \end{equation}
 so:
\begin{enumerate}
  \item \(\frac t  \epsilon \terminal{\tau}\ge \rho\) because \(t \sigma \ge \rho\) by definition. 
  \item  \(\frac t  \epsilon \terminal{\tau}\ge \sigma\) since \(t\ge 1\).
\end{enumerate}  
Therefore, for all \(\lambda \in[0,1]\), \(\frac t  \epsilon \terminal{\tau} = \lambda\frac t  \epsilon \terminal{\tau} +(1-\lambda)\frac t  \epsilon \terminal{\tau} \ge \lambda \rho + (1-\lambda)\sigma\) ; in particular 
\begin{equation}
  \frac t  \epsilon \terminal{\tau} \ge (1-\epsilon)\rho + \epsilon \sigma = \tau\;.
\end{equation}
\end{proof}
\begin{proof}[Proof of Eq.~(\ref{eq:cost-bound})]~
  \\
   \textbf{Lower bound.}
  \begin{align}\tag{\ref{eq:cost}}
    \cost[\epsilon](\rho)&= \inf_{\tau \approx_\epsilon \rho}\dmax\divergence{\tau}{\Delta\pa*{\tau}} \\
    & \ge \inf_{\begin{smallmatrix}
      \tau \approx_\epsilon \rho \\
      \sigma \in \states
    \end{smallmatrix}}\dmax\divergence{\tau}{\Delta\pa*{\sigma}} \defeq \dmax^\epsilon\divergence{\rho}{\free}\;.
  \end{align}

  \textbf{Upper bound.} Fix \(\omega\approx_{\epsilon-\delta}\!\rho\). Since the \(\delta\)-neighborhood of \(\omega\) is contained within the \(\epsilon\)-neighborhood of \(\rho\) we have 
  \begin{equation}
    \cost[\delta](\omega)  = \inf_{\tau \approx_\delta \omega}\dmax\divergence{\tau}{\Delta\pa*{\tau}} \ge \inf_{\tau \approx_\epsilon \rho}\dmax\divergence{\tau}{\Delta\pa*{\tau}} = \cost[\epsilon](\rho)\;.
  \end{equation}
  Thus, by Claim \ref{claim:initial-cost-higher-bound}, \(\dmax \divergence{\omega}{\free} + \log (1/\delta) \ge \cost[\delta](\omega) \ge \cost[\epsilon](\rho)\).
  Since \(\omega\) was an arbitrary state in the \((\epsilon-\delta)\)-neighborhood of \(\rho\), we conclude 
  \begin{equation}
    \cost [\epsilon] (\rho) \le \inf _{\omega \approx_{\epsilon-\delta}\rho} \dmax \divergence{\omega}{\free} + \log(1/\delta) \defeq \dmax^{\epsilon-\delta} \divergence{\rho}{\free}+ \log(1/\delta)\;.
  \end{equation}
\end{proof}
\begin{corollary}\label{appendix:asymptotic-cost}
  Let \(\rho\in \states\) and \(\epsilon\in (0,1)\). 
  Then, \(\displaystyle \cost[\epsilon]_\infty(\rho) \defeq \lim_{n \to \infty} \frac 1 n \cost[\epsilon](\rho^{\otimes n}) = D \divergence{\rho}{\Delta(\rho)}\). 
\end{corollary}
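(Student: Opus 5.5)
Apply Eq.~(\ref{eq:cost-bound}) to $\rho^{\otimes n}$, divide by $n$, and evaluate both sides with the generalized quantum Stein's lemma for the free set $\free$. The sandwich gives
\begin{equation}
\frac1n \dmax^{\epsilon}\divergence{\rho^{\otimes n}}{\free}\;\le\;\frac1n\cost[\epsilon](\rho^{\otimes n})\;\le\;\frac1n \dmax^{\epsilon-\delta}\divergence{\rho^{\otimes n}}{\free}+\frac{\log(1/\delta)}{n}\,,
\end{equation}
for any fixed $\delta\in(0,\epsilon)$. The additive $\log(1/\delta)$ vanishes after dividing by $n$. It therefore suffices to show that, for every fixed $\epsilon'\in(0,1)$,
\begin{equation}
\lim_{n\to\infty}\frac1n\dmax^{\epsilon'}\divergence{\rho^{\otimes n}}{\free}=D\divergence{\rho}{\free}\,,
\end{equation}
with the limit existing and independent of $\epsilon'$ (strong converse). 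Both bounds would then converge to the same value.

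The right-hand side equals $D\divergence{\rho}{\Delta(\rho)}$ by the Umegaki chain rule quoted in the main text. It also regularizes trivially: the optimizer $\Delta(\rho)$ for $\rho$ gives the optimizer $\Delta(\rho)^{\otimes n}=\Delta(\rho^{\otimes n})$ for $\rho^{\otimes n}$ by locality. Hence $D^\infty\divergence{\rho}{\free}=\lim_n \frac1n D\divergence{\rho^{\otimes n}}{\free}=D\divergence{\rho}{\Delta(\rho)}$.

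The core step is to identify the $\epsilon'$-smoothed max-relative entropy rate with this regularized relative entropy. I would first check that the family $\{\free(A^{\otimes n})\}_n$ meets the Brand\~ao--Plenio axioms as corrected in \cite{BBG+2023,HY2025,lami2024}:
\begin{itemize}
\item each set is convex and closed, since $\free=\Delta(\states)$ is the image of a convex compact set under a linear map;
\item it contains a full-rank state, namely the assumed full-rank fixed state $\sigma_0$ and hence $\sigma_0^{\otimes n}$;
\item it is closed under tensor products and partial traces, by locality and trace preservation of $\Delta$;
\item it is closed under permutations, since the twirl commutes with $\Delta$.
\end{itemize}
The generalized Stein's lemma (Hayashi--Yamasaki, Lami) then gives $\lim_n \frac1n D_H^{\epsilon'}\divergence{\rho^{\otimes n}}{\free}=D^\infty\divergence{\rho}{\free}$ for all $\epsilon'\in(0,1)$. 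The standard duality between smoothed $\dmax$ and hypothesis testing then converts this into the corresponding statement for $\dmax^{\epsilon'}$ (Brand\~ao--Plenio's reformulation of the lemma in terms of the smoothed max-relative entropy of resource).

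I expect the main obstacle to be this conversion, because the paper smooths in trace distance rather than purified distance. The achievability direction, $\limsup_n\frac1n\dmax^{\epsilon'}\le D^\infty$, should follow from known results stated in terms of $\dmax^{\epsilon}$. The converse direction, $\liminf\ge D^\infty$ for every $\epsilon'<1$, needs a relation of the form
\begin{equation}
\dmax^{\epsilon'}\divergence{\rho}{\free}\ge D_H^{1-\epsilon''}\divergence{\rho}{\free}-\log\frac{1}{\epsilon'' -\epsilon'}\,,
\end{equation}
combined with the strong-converse part of the Stein's lemma. For a fixed free state $\sigma$ this relation is standard: if $2^m\sigma\ge\tau$ with $\tau\approx_{\epsilon'}\rho$, then any test $\Gamma$ gives $\tr[\sigma\Gamma]\ge 2^{-m}(\tr[\rho\Gamma]-\epsilon')$. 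Passing to $\free$ only requires taking infima on both sides.

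If the strong converse for general $\free$ is not available in the needed form, I would fall back on the instability structure itself. Use $\free\cap\sym$, the universal state $\hat\omega_n$ with $\Delta(\hat\omega_n)$, and the factor $g_n$ with $\log g_n=o(n)$, as in Lemma~\ref{appendix:asymptotic-form}. This replaces the infimum over $\free$ by the single free state $\Delta(\hat\omega_n)$ at an $o(n)$ cost. The ordinary quantum Stein's lemma with strong converse, applied against $\Delta(\rho)^{\otimes n}$ after pinching, then yields $D\divergence{\rho}{\Delta(\rho)}$.
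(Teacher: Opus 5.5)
Your proposal is correct and follows the paper's own argument: sandwich $\frac1n\cost[\epsilon](\rho^{\otimes n})$ using Eq.~(\ref{eq:cost-bound}), then regularize $\dmax^{\eta}\divergence{\rho^{\otimes n}}{\free}$ for $\eta\in\{\epsilon,\epsilon-\delta\}$ via the generalized quantum Stein's lemma together with $D\divergence{\rho}{\free}=D\divergence{\rho}{\Delta(\rho)}$, which regularizes trivially by locality. One correction to your side analysis: your inequality $\dmax^{\epsilon'}\divergence{\rho}{\free}\ge D_{\!H}^{1-\epsilon''}\divergence{\rho}{\free}-\log\frac{1}{\epsilon''-\epsilon'}$ needs a \emph{lower} bound on $D_{\!H}^{\mu}$ at small type-I error $\mu=1-\epsilon''$, which is the direct (achievability) part of the generalized Stein's lemma proved by Hayashi--Yamasaki and Lami, not its strong converse; meanwhile the upper bound $\limsup_n\frac1n\dmax^{\eta}\divergence{\rho^{\otimes n}}{\free}\le D\divergence{\rho}{\Delta(\rho)}$ already follows from the i.i.d.\ free state $\Delta(\rho)^{\otimes n}$ and the ordinary smoothed-max AEP (trace-distance smoothing only helps here), so your fallback is unnecessary (and as written it would not close, since after the universal-state reduction the remaining alternative hypothesis is $\Delta(\hat\omega_n)$, not $\Delta(\rho)^{\otimes n}$).
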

\begin{proof}
  By the fact that \(D \divergence{\rho}{\free} = D \divergence{\rho}{\Delta(\rho)}\) and the generalized Quantum Stein's lemma \cite{BP2010b,HY2025,lami2024} it follows that \(\lim_{n \to \infty} \frac 1 n \dmax^\eta \divergence{\rho}{\free}  = D \divergence{\rho}{\Delta(\rho)}\) for all \(\eta \in (0,1)\). 
  Taking \(\eta= \epsilon\) and \(\eta = \epsilon- \delta\) for some fixed \(\delta\in (0,\epsilon)\) and regularizing Eq.~(\ref{eq:cost-bound}) provides the result.
\end{proof}

\section{The Distillable Instability Yield}
This appendix is dedicated to the distillation task.
In Sec.~\ref{appendix:one-shot-yield} we calculate the \(\epsilon\)-one-shot distillable yield and in Sec.~\ref{appendix:distillable-fragility} we connect the one-shot yield with the hypothesis testing divergence.

Let us define:
\begin{equation}
  \scalarTerminal_m \coloneqq \pc*{0 \le \Gamma \le I \smallgiven \condExpectation(\Gamma) = 2^{-m} I} =\pc*{0 \le \Gamma \le I \smallgiven \tr[\sigma\Gamma]= 2^{-m} \quad \forall\sigma \in \free} \;,
\end{equation}
so that \(\bigcup_{m \in [0,\infty]} \scalarTerminal_m  = \pc{0 \le \Gamma \le I \smallgiven \condExpectation(\Gamma) \propto I} \defeq  \scalarTerminal \) and hence
\begin{equation}\label{eq:characterization-of-restricted-HT-monotone}
  h^\epsilon(\rho) \defeq D_{\!H}^{\epsilon,\scalarTerminal }\divergence{\rho}{\sigma}\defeq
  -\log\!\!\!\!\!\inf_{\begin{smallmatrix}
\Gamma\in\scalarTerminal\\
\tr[\rho\Gamma]\ge 1-\epsilon
\end{smallmatrix}}\!\!\!\!\!\!
\tr[\sigma\Gamma]= \sup \pc{m>0\smallgiven \exists \Gamma\in \scalarTerminal_m \;,\; \tr [\rho\Gamma] \ge 1- \epsilon} \;.
\end{equation}

\subsection{Calculation of the One-Shot Distillable Yield}\label{appendix:one-shot-yield}
The \(\epsilon\)-one-shot yield of \(\rho\in \states(A)\) is the maximal \(m\) such that \(\rho \convto[\epsilon] \currency[m]\). 
Recall that \(\currency[m]\) is described by the athermal state \(\ket 0\! \bra 0\) on a two-level \emph{classical} system \(B\) with Gibbs state \(\gibbs = 2^{-m} \ket 0\! \bra 0 +(1-2^{-m})\ket 1\! \bra 1\).

Therefore, the conversion \(\rho \succ \currency[m]\) is achieved via a free binary measurement (POVM) channel \(\mathcal{M}\in \free(A \to B)\).
Any such channel is determined by a single \emph{quantum effect}, \(0\le \Gamma\le I^A\):
\begin{equation}\label{measurement-channel}
  \mathcal{M} (\tau) = \tr [\tau \Gamma] \ket 0\! \bra 0 + \tr[\tau(I-\Gamma)]\ket 1\! \bra 1,\; \qquad  \tau\in \states(A)\;.
\end{equation}
The condition that \(\mathcal{M}(\rho)\approx_{\epsilon} \ket 0\! \bra 0\) means that \(\tr [\rho \Gamma] \ge 1-\epsilon\).

Destruction-covariance condition (\ref{eq:dcov}) requires \(\mathcal{M}(\Delta\pa*{\sigma})= \gibbs\) for all states \(\sigma\).
This fixes \(\tr [\Delta\pa*{\sigma} \Gamma] = 2^{-m}\) for all states \(\sigma\), i.e., forces that \(\Gamma\in \scalarTerminal_m\).

The maximum \(m\in (0, \infty)\) for which there exist \(\Gamma\in \scalarTerminal_m\) with \(\tr \rho \Gamma \ge 1- \epsilon\) is, by Eq.~(\ref{eq:characterization-of-restricted-HT-monotone}), \(h^\epsilon \pa{\rho}\).

\subsection{The Instability Yield and Hypothesis Testing}\label{appendix:distillable-fragility}

In this section we connect the one-shot instability yield (Eq.~(\ref{def:one-shot-yield})), its catalytic version (Eq.~(\ref{eq:def-yield-cat})) and \(D_{\!H}^{\epsilon} \divergence{\rho}{\free}\).

As a corollary of item \ref{item:sce-non-catalytic} of Claim~\ref{effects-claim}, we lower bound the distillable instability yield and prove that it regularizes to \(D \divergence{\rho}{\Delta(\rho)}\):
\begin{lemma}
  Let \(\rho\in \states\) and \(\epsilon \in (0,1) \). Then:
  \begin{enumerate}
    \item \label{item:one-shot-yield-bound}
    If \(\delta \in(0,\epsilon)\) and \(\displaystyle D_{\!H}^{\epsilon-\delta} \divergence{\rho}{\free} \ge \log (1/\delta)\), then 
  \( \displaystyle D_{\!H}^{\epsilon-\delta} \divergence{\rho}{\free} \le \yield[\epsilon](\rho)\).
  \item \label{item:asymptotic-yield-convergence}
  \(\yield[\epsilon]_\infty (\rho) \defeq \lim _{n \to \infty} \frac 1 n \yield[\epsilon](\rho^{\otimes n}) = D \divergence{\rho}{\Delta(\rho)}\).
  \end{enumerate}  
\end{lemma}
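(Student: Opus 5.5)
The plan for item~\ref{item:one-shot-yield-bound} is to convert an optimal unrestricted test for \(D_{\!H}^{\epsilon-\delta}\divergence{\rho}{\free}\) into a test in \(\scalarTerminal_m\), losing only \(\delta\) in the error. Then item~\ref{item:asymptotic-yield-convergence} will follow by sandwiching \(\yield[\epsilon]\) between two hypothesis-testing quantities and invoking the generalized quantum Stein's lemma, as in Corollary~\ref{appendix:asymptotic-cost}.

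\textbf{Item 1.} Let \(m\coloneqq D_{\!H}^{\epsilon-\delta}\divergence{\rho}{\free}\). By Eq.~(\ref{eq:hypothesis-heisenberg}) there is an effect \(0\le\Gamma\le I\) with \(\tr[\rho\Gamma]\ge 1-\epsilon+\delta\) and \(\norm{\condExpectation(\Gamma)}_\infty= 2^{-m}\). If the minimum is not attained, I would work with \(m-\eta\) and let \(\eta\to0\). The idea is to pad \(\Gamma\) with an element of the algebra \(\algebra=\ima\condExpectation\). Define
\begin{equation*}
\Gamma'\coloneqq (1-\delta)\Gamma + Y,\qquad Y\coloneqq 2^{-m}I-(1-\delta)\condExpectation(\Gamma).
\end{equation*}
Since \(Y\in\algebra\) and \(\condExpectation\) is idempotent and unital, \(\condExpectation(Y)=Y\), and hence \(\condExpectation(\Gamma')=2^{-m}I\). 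Positivity of \(\Gamma'\) follows from \(Y\ge 2^{-m}I-(1-\delta)2^{-m}I\ge0\). For the upper bound, \(Y\le 2^{-m}I\le \delta I\), where the last inequality is exactly the hypothesis \(m\ge\log(1/\delta)\); this gives \(\Gamma'\le(1-\delta)I+\delta I=I\). So \(\Gamma'\in\scalarTerminal_m\). Moreover
\begin{equation*}
\tr[\rho\Gamma']\ge(1-\delta)(1-\epsilon+\delta)=1-\epsilon+\delta(\epsilon-\delta)\ge1-\epsilon .
\end{equation*}
Eq.~(\ref{eq:characterization-of-restricted-HT-monotone}) then gives \(h^\epsilon(\rho)\ge m\), and Lemma~\ref{thm:one-shot} (Eq.~(\ref{eq:distill})) gives \(\yield[\epsilon](\rho)\ge m\).

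\textbf{Item 2, converse.} Since \(\scalarTerminal\) is a subset of all effects, \(\yield[\epsilon](\rho^{\otimes n})=h^\epsilon(\rho^{\otimes n})\le D_{\!H}^\epsilon\divergence{\rho^{\otimes n}}{\free}\). The free sets \(\free(A^{n})\) are the fixed states of \(\Delta^{\otimes n}\). They are convex and closed, contain a full-rank state, are closed under tensor products, partial traces and permutations (by locality), and are stable under \(\Delta\). This is the setting of the generalized quantum Stein's lemma \cite{HY2025,lami2024}, which I will invoke in its form valid for every fixed \(\epsilon\in(0,1)\):
\begin{equation*}
\tfrac1n D_{\!H}^\epsilon\divergence{\rho^{\otimes n}}{\free}\to \lim_k \tfrac1k D\divergence{\rho^{\otimes k}}{\free}.
\end{equation*}
The regularization is trivial here. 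By the chain rule (or Lemma~\ref{thm:optimized-renyi} at \(\alpha=1\)), \(D\divergence{\cdot}{\free}=D\divergence{\cdot}{\Delta(\cdot)}\), and this quantity is additive by locality. Hence the limit of the upper bound equals \(D\divergence{\rho}{\Delta(\rho)}\).

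\textbf{Item 2, achievability.} Fix \(\delta\in(0,\epsilon)\). If \(D\divergence{\rho}{\Delta(\rho)}=0\), then \(\rho\) is free and both sides vanish. Otherwise, Stein's lemma applied with error \(\epsilon-\delta\) shows that \(D_{\!H}^{\epsilon-\delta}\divergence{\rho^{\otimes n}}{\free}\) grows linearly in \(n\), so it exceeds \(\log(1/\delta)\) for all large \(n\). Item~\ref{item:one-shot-yield-bound} then gives \(\liminf_n\frac1n\yield[\epsilon](\rho^{\otimes n})\ge D\divergence{\rho}{\Delta(\rho)}\). Combined with the converse, the limit exists and equals \(D\divergence{\rho}{\Delta(\rho)}\).

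The main obstacle is the legitimacy of applying the generalized Stein's lemma for every \(\epsilon\in(0,1)\), including the strong-converse direction, to the sets \(\free(A^n)\). I would check the axioms listed in \cite{HY2025,lami2024} against the locality assumption and the full-rank fixed state. The one-shot padding step in item~\ref{item:one-shot-yield-bound} is elementary once the choice of \(Y\) is made.
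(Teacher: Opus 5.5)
Your proposal is correct and follows essentially the same route as the paper. In item 1 you pad the optimal unrestricted test with an element of \(\ima\Delta^{*}\), exactly as in item 1 of Claim~\ref{effects-claim}; the only cosmetic difference is that you use the coefficient \(1-\delta\) directly, whereas the paper uses \(1-2^{-m}\) and then \(2^{-m}\le\delta\). In item 2 you sandwich \(\yield[\epsilon](\rho^{\otimes n})\) between \(D_{\!H}^{\epsilon-\delta}\) and \(D_{\!H}^{\epsilon}\) relative to the free set and close with the generalized quantum Stein's lemma, as the paper does, and the paper's proof relies on the fixed-\(\epsilon\) (strong-converse) form of that lemma in exactly the same way you flag.
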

\begin{proof} Item~\ref{item:asymptotic-yield-convergence} will follow from item~\ref{item:one-shot-yield-bound}.

  \ref{item:one-shot-yield-bound}.
  Let \(m \coloneqq D_{\!H}^{\epsilon-\delta} \divergence{\rho}{\free}\).
  By Eq.~(\ref{eq:hypothesis-heisenberg}), there exists an effect \(\Gamma\) satisfying \(\tr [\rho \Gamma] \ge 1- (\epsilon - \delta)\) and \(\norm*{\expectation{ \Gamma}}_\infty = 2^{-m}\).
By item \ref{item:sce-non-catalytic} of Claim~\ref{effects-claim}, there exists \(\Gamma' \in \scalarTerminal_m\) with \(\Gamma' \ge (1-2^{-m}) \Gamma\); hence \(\Gamma' \ge (1-\delta) \Gamma\) by assumption and the definition of \(m\).
Therefore
\begin{equation}
  \tr [\rho  \Gamma'] \ge (1-\delta)\tr[\rho \Gamma] \ge (1-\delta)(1-(\epsilon-\delta))  \ge 1 -\epsilon\;.
\end{equation}
Thus, by Eq.~(\ref{eq:distill}), \(\yield[\epsilon](\rho) \ge m \defeq D_{\!H}^{\epsilon-\delta} \divergence{\rho}{\free}\).

\ref{item:asymptotic-yield-convergence}.
Fix some \(\delta\in (0, \epsilon)\).
It follows from item~\ref{item:one-shot-yield-bound} that 
\begin{equation}\label{eq:yield-bounds}
  D_{\!H}^{\epsilon-\delta} \divergence{\rho}{\free} -\log(1/\delta) \le \yield[\epsilon](\rho)\;.
\end{equation}
Moreover,  \(\yield[\epsilon](\rho)= D_{\!H}^{\scalarTerminal,\epsilon} \divergence{\rho}{\free}\le D_{\!H}^{\epsilon} \divergence{\rho}{\free}\).
By the generalized quantum Stein's lemma \cite{BP2010b,HY2025,lami2024} both these lower and upper bounds regularize to \(D \divergence{\rho}{\Delta(\rho)}\), hence so must \(\yield[\epsilon](\rho)\).
\end{proof}
After finding the yield in the asymptotic paradigm, let us find it in the catalytic and battery-assisted paradigms.
\LemmaCatalytic*
\begin{remark}
  Recall the definitions of the catalytic and battery-assisted instability yield:
  \begin{align}
    \tag{\ref{eq:def-yield-cat}}
    \yieldcat[\epsilon](\rho)
    & \coloneqq
    \sup\pc*{ r \smallgiven \exists\,\tau\in\states\!:\ \rho\otimes\tau \convto[\epsilon] \currency[r]\otimes\tau}\;,\\
    \tag{\ref{eq:def-yield-bat}}
  \yieldbat[\epsilon](\rho)
    & \coloneqq    \sup \pc*{m-t \smallgiven \rho\otimes\currency[t]\convto[\epsilon]\currency[m]}\;.
  \end{align}
  To see that the battery-assisted yield is a form of catalytic yield (when catalysts can only be currency states) and explain Eq.~(\ref{eq:inequality-of-yield}), note that \(\currency[r]\otimes\currency[t] \sim \currency[m]\) whenever \(m=t+r\) for \(t,r,m \in (0,\infty)\) (additivity of the currency; explained in Ref.~\cite{WW2019}, but also proved Appendix~\ref{appendix:numeraire}). Therefore, by Eq.~(\ref{eq:def-yield-bat}),  
\(  
    \yieldbat[\epsilon](\rho) = \max \pc* {r \smallgiven \exists t\;:\; \rho\otimes\currency[t] \convto[\epsilon] \currency[r] \otimes \currency[t]}
 \).
\end{remark}
\begin{proof}[Proof of Eq.~(\ref{eq:catalytic})]

By definition of the one-shot yield, \(\yieldbat[\epsilon](\rho) = \sup_{t} \pb[\big]{\yield[\epsilon](\rho\otimes\currency[t]) - t}\) and thus \(\yieldbat[\epsilon](\rho)\ge \yield[\epsilon](\rho\otimes\currency[1]) - 1\). To establish Eq.~(\ref{eq:catalytic}), we therefore need to show:
\begin{align}\label{eq:battery-yield-achievability}
    D_{\!H}^{\epsilon} \divergence{\rho}{\free} & \le \yield[\epsilon](\rho\otimes\currency[1]) - 1 \;,\qquad \text{and} \\
\label{eq:battery-yield-bound}
    \rho\otimes\currency[t] \convto[\epsilon] \currency[m] & \Rightarrow  D_{\!H}^{\epsilon} \divergence{\rho}{\free} \ge m-t\;. 
  \end{align}
  To prove Eq.~(\ref{eq:battery-yield-achievability}), set \(m = D_{\!H}^{\epsilon} \divergence{\rho}{\free}\)
and let \(\Gamma^A\) be the optimizing effect satisfying \(\tr [\rho \Gamma] \ge 1-\epsilon\) and \(\norm*{\expectation{ \Gamma}}_\infty = 2^{-m}\).
Let \(B\) be a classical thermodynamic bit with Gibbs state \(\gibbs \coloneqq \frac 1 2 \ket 0\! \bra 0 + \frac 1 2 \ket 1\! \bra 1\).
Recall that \(\currency[1]\) is the state \(\ket 0\! \bra 0^B\).

Item~\ref{item:corollary-for-depolarizing-system} of Claim~\ref{effects-claim} constructs an effect \(\Upsilon^{AB}\in \scalarTerminal_{m+1}\) satisfying \(\Upsilon^{AB} \ge \Gamma^A \otimes \ket 0\! \bra 0^B \).
    Thus, 
    \begin{equation}
      \tr [(\rho \otimes \currency[1])\Upsilon] \ge \tr [(\rho\otimes\ket 0\! \bra 0)( \Gamma \otimes \ket 0\! \bra 0)] = \tr [\rho  \Gamma] \ge 1-\epsilon \;.
    \end{equation}
    Therefore, by Eq.~(\ref{eq:distill})
     \begin{equation}
      \yield[\epsilon](\rho\otimes\currency[1]) 
     = D_{\!H,\scalarTerminal}^\epsilon \divergence{\rho\otimes\currency[1]}{\free} 
     \ge m+1 \defeq D_{\!H}^{\epsilon}\divergence{\rho}{\free} +1\;,
     \end{equation}
proving Eq.~(\ref{eq:battery-yield-achievability}).

To conclude Eq.~(\ref{eq:catalytic}), it is left to prove Eq.~(\ref{eq:battery-yield-bound}).
Here, we assume  \(\rho\otimes\currency[t] \convto[0] \p \approx_\epsilon \currency[m]\) where we recall that \(\currency[m]\) is the athermal state \(\ket 0\! \bra 0\) with respect to the Gibbs state  \(\gamma = 2^{-m} \ket 0\! \bra 0 +(1-2^{-m})\ket 1\! \bra 1\) and \(\p\) is some classical state.
Therefore, \(\tr [\p \ket 0\! \bra 0 ]\ge 1-\epsilon\) and hence \(\ket 0\! \bra 0\) is an admissible effect in the optimization of \(D_{\!H}^{\epsilon} \divergence{\p} \free = D_{\!H}^{\epsilon} \divergence{\p} {\gamma}\), thus, by the definition of the hypothesis testing divergence (Eq.~(\ref{eq:def-hypo-testing})),
\begin{align}
    D_{\!H}^{\epsilon} \divergence{\p} \free & \ge - \log \tr [\ket 0\! \bra 0 \gamma] \\
    \hspace{-100pt}\explain{\(\gamma=2^{-m} \ket 0\! \bra 0 +(1-2^{-m})\ket 1\! \bra 1\)}&= m\;. 
\end{align}
Thus, since \(D_{\!H}^{\epsilon} \divergence{\cdot}{\free}\) is a monotone and \(\rho \otimes\currency[t]\convto[0]\rho\),
\begin{align}
  m &\le D_{\!H}^{\epsilon} \divergence{\p}{\free} \\ 
  & \le  D_{\!H}^{\epsilon} \divergence{\rho\otimes \currency[t]}{\free} \\
 \hspace{-40pt}\explain{Claim~\ref{claim:hypo-splits-with-battery} below}& = D_{\!H}^{\epsilon} \divergence{\rho}{\free} + t\;. 
\end{align}
This establishes Eq.~(\ref{eq:battery-yield-bound}) and hence completes the proof of Eq.~(\ref{eq:catalytic}).
\end{proof}

\begin{proof}[Proof of Eq.~(\ref{eq:exact-catalytic})]
By Eq.~(\ref{eq:catalytic}) and the rightmost inequality of Eq.~(\ref{eq:inequality-of-yield}), we need to show
\(\dmin \divergence{\rho}{\free} \ge \yieldcat(\rho)\).

  To do this, note that if \(\rho \otimes \tau \succ \currency[r] \otimes \tau\) then by additivity and monotonicity of \(\dmin \divergence{\cdot}{\free}\) we have
  \[\dmin \divergence{\rho}{\free} + \dmin \divergence{\tau}{\free} = \dmin \divergence{\rho\otimes\tau}{\free} \ge \dmin \divergence{\currency[r] \otimes\tau}{\free} = r + \dmin \divergence{\tau}{\free}\;,\]
  which implies \(\dmin \divergence{\rho}{\free} \ge r\).
  Thus, by the definition of the catalytic yield in Eq.~(\ref{eq:def-yield-cat}),
  \(\dmin \divergence{\rho}{\free} \ge \yieldcat(\rho)\).
\end{proof}
\textbf{Auxiliary Claims:}
\begin{claim}
  \label{claim:hypo-splits-with-battery}
 \(D_{\!H}^{\epsilon} \divergence{\rho\otimes \currency[t]}{\free} = D_{\!H}^{\epsilon} \divergence{\rho}{\free} + t\) for all \(\rho\in \states(A)\) and \(t\in (0,\infty)\).
\end{claim}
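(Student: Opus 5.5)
The plan is to work entirely with the Heisenberg-picture formula Eq.~(\ref{eq:hypothesis-heisenberg}),
\[
D_{\!H}^\epsilon\divergence{\omega}{\free}=-\log\min\pc*{\norm{\Delta^{\!*}(\Upsilon)}_\infty \smallgiven 0\le\Upsilon\le I,\ \tr[\omega\Upsilon]\ge1-\epsilon}.
\]
We apply it to both \(\rho\) on \(A\) and \(\rho\otimes\currency[t]\) on \(AB\). Here \(B\) is the classical two-level system carrying \(\currency[t]=\ket0\!\bra0\), with the replacer destruction onto \(\gibbs=2^{-t}\ket0\!\bra0+(1-2^{-t})\ket1\!\bra1\). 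Its dual is therefore \(\Delta^{\!B*}(Y)=\tr[\gibbs Y]\,I^B\). By \AssumptionLink{locality}{locality}, \(\Delta^{\!AB*}=\Delta^{\!A*}\otimes\Delta^{\!B*}\). We then prove the two inequalities separately by transferring feasible effects between the two optimization problems.

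\textbf{Achievability (\(\ge\)).} Let \(\Gamma\) be the optimal effect for \(\rho\), so that \(\tr[\rho\Gamma]\ge1-\epsilon\) and \(\norm{\Delta^{\!A*}(\Gamma)}_\infty=2^{-D_{\!H}^\epsilon\divergence{\rho}{\free}}\). Set \(\Upsilon\coloneqq\Gamma\otimes\ket0\!\bra0\). This is an effect with \(\tr[(\rho\otimes\ket0\!\bra0)\Upsilon]=\tr[\rho\Gamma]\ge1-\epsilon\). Moreover,
\[
\Delta^{\!AB*}(\Upsilon)=\Delta^{\!A*}(\Gamma)\otimes\tr[\gibbs\ket0\!\bra0]\,I=2^{-t}\,\Delta^{\!A*}(\Gamma)\otimes I,
\]
whose operator norm is \(2^{-t}\norm{\Delta^{\!A*}(\Gamma)}_\infty\). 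This gives \(D_{\!H}^\epsilon\divergence{\rho\otimes\currency[t]}{\free}\ge D_{\!H}^\epsilon\divergence{\rho}{\free}+t\).

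\textbf{Converse (\(\le\)).} Take any effect \(\Upsilon^{AB}\) with \(\tr[(\rho\otimes\ket0\!\bra0)\Upsilon]\ge1-\epsilon\). Compress it to \(\Gamma\coloneqq(I\otimes\bra0)\Upsilon(I\otimes\ket0)\). Then \(0\le\Gamma\le I\) and \(\tr[\rho\Gamma]=\tr[(\rho\otimes\ket0\!\bra0)\Upsilon]\ge1-\epsilon\), so \(\Gamma\) is feasible for \(\rho\).

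Next write \(W\coloneqq(\Delta^{\!A*}\otimes\id)(\Upsilon)\ge0\) and denote its diagonal blocks by \(W_{bb}=(I\otimes\bra b)W(I\otimes\ket b)\). The map \(\Delta^{\!A*}\otimes\id\) commutes with compression on \(B\), so \(W_{00}=\Delta^{\!A*}(\Gamma)\). Hence
\[
\Delta^{\!AB*}(\Upsilon)=\big(2^{-t}W_{00}+(1-2^{-t})W_{11}\big)\otimes I\ge 2^{-t}\Delta^{\!A*}(\Gamma)\otimes I,
\]
using \(W_{11}\ge0\), which follows from complete positivity of \(\Delta^{\!A*}\). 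Taking operator norms gives \(\norm{\Delta^{\!AB*}(\Upsilon)}_\infty\ge2^{-t}\norm{\Delta^{\!A*}(\Gamma)}_\infty\ge 2^{-t}2^{-D_{\!H}^\epsilon\divergence{\rho}{\free}}\). Minimizing over \(\Upsilon\) then yields the reverse inequality.

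The only delicate point is the converse step, specifically two facts. First, the partial dual \(\Delta^{\!A*}\otimes\id\) must commute with the compression onto \(\ket0\) of \(B\); this holds because the two maps act on different tensor factors. Second, the discarded \(\ket1\) block must contribute positively. Both are immediate from complete positivity and the product form of the dual. We also use the standard minimax justification behind Eq.~(\ref{eq:hypothesis-heisenberg}), which is taken as given.
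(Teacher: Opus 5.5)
Your proof is correct. It is the paper's argument transposed to the Heisenberg picture: the paper uses the same two moves, namely embedding an effect as $\Lambda\otimes\ket{0}\!\bra{0}$, and compressing an arbitrary $\Gamma^{AB}$ to its $\ket{0}$-block while dropping the nonnegative term $(1-2^{-t})\tr[\sigma\Gamma_1]$ (the scalar counterpart of your $W_{11}\ge 0$). The paper applies these for each fixed $\sigma\in\free(A)$ to get $D_{\!H}^\epsilon\divergence{\rho\otimes\currency[t]}{\sigma\otimes\gibbs}=D_{\!H}^\epsilon\divergence{\rho}{\sigma}+t$, then infimizes using $\free(AB)=\{\sigma\otimes\gibbs\}$; this pointwise route avoids the minimax step behind Eq.~(\ref{eq:hypothesis-heisenberg}) that yours relies on, but that formula is valid, so your argument goes through.
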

\begin{proof}
   \(\currency[t]\) is the athermal state \(\ket 0\! \bra 0\) on a thermodynamic system \(B\) with Gibbs state \(\gamma=2^{-t} \ket 0\! \bra 0 +(1-2^{-t}) \ket 1\! \bra 1\). 

  Let \(\sigma\in \free(A)\). 
  For \(0\le \Gamma \le I^{AB}\) an effect on the composite system,  we have
  \begin{align}
    \tr[(\rho^A\otimes \ket 0\! \bra 0^B)\Gamma] & = \tr [\rho \Gamma_0]\;,
    \\ 
    \tr [(\sigma \otimes \gamma) \Gamma] & = 2^{-t} \tr[\sigma \Gamma_0] + (1-2^{-t}) \tr[\sigma \Gamma_1] \\
    & \ge 2^{-t} \tr[\sigma \Gamma_0]\;,
  \end{align}
 where \(\Gamma_x = \tr_B [\ket x\! \bra x^B \Gamma^{AB}]\) for \(x=0,1\) are effects on \(A\).
 Suppose now that \(\tr[(\rho\otimes\currency[t]) \Gamma] \ge 1-\epsilon\). Then, 
  \begin{equation}\label{eq:inequality-for-hypo-divergence}
    \tr[(\sigma \otimes \gamma) \Gamma] \ge 2^{-t} \tr[\sigma \Gamma_0] \ge 2^{-t}\inf \pc*{\tr [\sigma \Lambda'] \smallgiven 0\le \Lambda'\le I^A\;, \tr[\rho\Lambda'] \ge 1-\epsilon}\;.
  \end{equation}
  On the other hand, given an effect \(0\le\Lambda\le I^A\), 
  \begin{align}
    \tr[(\rho^A\otimes \ket 0\! \bra 0^B)(\Lambda^A\otimes \ket 0\! \bra 0^B)] & = \tr [\rho \Lambda]\;,
    \\ 
    \tr [(\sigma \otimes \gamma) (\Lambda^A\otimes \ket 0\! \bra 0^B)] & = 2^{-t} \tr[\sigma \Lambda]\;,
  \end{align}
  and thus, if \(\tr[\rho \Lambda] \ge 1-\epsilon\), then
  \begin{equation}\label{eq:inequality-for-hypo-divergence-reverse}
    2^{-t} \tr[\sigma \Lambda] \ge \inf \pc*{\tr [(\sigma \otimes \gamma) \Gamma'] \smallgiven 0\le \Gamma'\le I^{AB}\;, \tr[(\rho^A\otimes \ket 0\! \bra 0^B)\Gamma'] \ge 1-\epsilon}\;.
  \end{equation}
  Since Eqs.~(\ref{eq:inequality-for-hypo-divergence}) and Eq.~(\ref{eq:inequality-for-hypo-divergence-reverse}) are true for all effects \(\Gamma^{AB}, \Lambda^A\), by definition of \(D_{\!H}^\epsilon\) (Eq.~(\ref{eq:def-hypo-testing})), 
  \begin{equation}
    D_{\!H}^\epsilon \divergence{\rho}{\sigma} +t = D_{\!H}^\epsilon \divergence{\rho\otimes \currency[t]}{\sigma\otimes\gamma}\;.
  \end{equation}
Because \(\sigma\in \free(A)\) was arbitrary and \(\free(AB) = \pc*{\sigma\otimes\gamma|\sigma\in \free(A)}\) (by locality of destruction), this concludes the proof. 
\end{proof}

\begin{claim}\label{effects-claim}
  Let \(\Gamma\) be a quantum effect on system \(A\) and suppose \(\norm{\expectation{ \Gamma}}_\infty = 2^{-m}\).
  The following hold:
  \begin{enumerate}
    \item There exists an effect \(\Gamma' \in \scalarTerminal_m\) with \(\Gamma' \ge (1-2^{-m}) \Gamma\). \label{item:sce-non-catalytic}
    \item Given a number \(t \in (0,\infty)\) satisfying \(2^{-(t+m)} \le 1-2^{-t}\) (in particular, any \(t\ge 1\)) and an effect \(\Lambda^B \in \scalarTerminal_t\) on some system \(B\), there exists a composite effect \(\Upsilon^{AB} \in \scalarTerminal_{m+t}\) satisfying \(\Upsilon \ge  \Gamma \otimes \Lambda\). \label{item:sce-catalytic}
    \item If \(B\) is the two-level system and \(\Delta^{\!B}\) is the depolarization \((\Delta^B(\rho) = \u^B)\), then there exists an effect \(\Upsilon^{AB}\in \scalarTerminal_{m+1}\) satisfying \(\Upsilon \ge \Gamma^A \otimes \ket 0\! \bra 0^B\). \label{item:corollary-for-depolarizing-system}
  \end{enumerate}
\end{claim}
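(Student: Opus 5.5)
The plan is to prove all three items by explicit construction. In each case I add to a given effect a positive correction lying in the image algebra $\algebra=\ima\condExpectation$. Such a correction is fixed by $\condExpectation$, so adding it shifts $\condExpectation$ of the effect by a known amount. Only two facts are needed: $\condExpectation$ is positive and unital, and it fixes every element of $\algebra$ (Appendix~\ref{appendix:conditional-expectation}). Throughout write $p\coloneqq 2^{-m}$, so that $0\le\condExpectation(\Gamma)\le pI$.

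\textbf{Item~\ref{item:sce-non-catalytic}.} I would set
\begin{equation}
  \Gamma'\coloneqq (1-p)\Gamma + \bigl(pI-(1-p)\condExpectation(\Gamma)\bigr)\;.
\end{equation}
The bracket lies in $\algebra$, hence is fixed by $\condExpectation$. Applying $\condExpectation$ therefore gives $\condExpectation(\Gamma')=(1-p)\condExpectation(\Gamma)+pI-(1-p)\condExpectation(\Gamma)=pI$, so $\Gamma'\in\scalarTerminal_m$ provided $\Gamma'$ is an effect.

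For positivity, $\condExpectation(\Gamma)\le pI$ implies that the bracket is at least $p^2I\ge0$. This also yields $\Gamma'\ge(1-p)\Gamma$ directly. For $\Gamma'\le I$, the inequality is equivalent to $(1-p)\bigl(\Gamma-\condExpectation(\Gamma)\bigr)\le(1-p)I$. This holds because $\Gamma\le I$ and $\condExpectation(\Gamma)\ge0$.

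\textbf{Item~\ref{item:sce-catalytic}.} Put $q\coloneqq2^{-(m+t)}$. By locality, $\condExpectation(\Gamma\otimes\Lambda)=\condExpectation(\Gamma)\otimes2^{-t}I\le qI$. The natural correction $(qI-2^{-t}\condExpectation(\Gamma))\otimes I^B$ would fail the bound $\Upsilon\le I$, because it overlaps with $\Gamma\otimes\Lambda$. I would therefore place the correction on $I-\Lambda$ instead, using $\condExpectation(I-\Lambda)=(1-2^{-t})I$:
\begin{equation}
  \Upsilon\coloneqq\Gamma\otimes\Lambda+\frac{1}{1-2^{-t}}\bigl(qI-2^{-t}\condExpectation(\Gamma)\bigr)\otimes(I-\Lambda)\;.
\end{equation}
The first factor of the second term lies in $\algebra$, so bimodularity applied to the tensor product gives $\condExpectation(\Upsilon)=2^{-t}\condExpectation(\Gamma)\otimes I+\bigl(qI-2^{-t}\condExpectation(\Gamma)\bigr)\otimes I=qI$. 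Both terms are positive, and $\Upsilon\ge\Gamma\otimes\Lambda$ is immediate.

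The only delicate point is $\Upsilon\le I$, which is where the hypothesis $q\le1-2^{-t}$ enters. We have $\Gamma\otimes\Lambda\le I\otimes\Lambda$, and the second term is at most $\frac{q}{1-2^{-t}}I\otimes(I-\Lambda)\le I\otimes(I-\Lambda)$. Adding these gives $\Upsilon\le I$. For $t\ge1$ the hypothesis holds because $q\le2^{-t}\le\tfrac12\le1-2^{-t}$.

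\textbf{Item~\ref{item:corollary-for-depolarizing-system}.} Take $\Lambda=\ket0\!\bra0^B$. Depolarization is self-adjoint, so $\Delta^{B*}(\Lambda)=\tr[\Lambda]\,I/2=\tfrac12I$, which means $\Lambda\in\scalarTerminal_1$. With $t=1$, the condition $2^{-(m+1)}\le\tfrac12$ holds, and item~\ref{item:sce-catalytic} supplies the desired $\Upsilon\in\scalarTerminal_{m+1}$.

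The main obstacle I anticipate is the upper bound $\Upsilon\le I$ in item~\ref{item:sce-catalytic}. That is why the correction is supported on $I-\Lambda$ rather than on $I$, and it is the one place where the hypothesis on $t$ is used.
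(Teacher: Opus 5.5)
Your proof is correct and follows the paper's argument essentially verbatim. Item 1 uses the same $\Gamma' = (1-p)\Gamma + pI - (1-p)\Delta^{*}(\Gamma)$, and in item 2 your $\Upsilon$ coincides with the paper's because $qI - 2^{-t}\Delta^{*}(\Gamma) = 2^{-t}\bigl(pI - \Delta^{*}(\Gamma)\bigr)$; the checks of $\Upsilon \le I$ via the hypothesis $2^{-(m+t)} \le 1-2^{-t}$, of $\Delta^{*}(\Upsilon) = 2^{-(m+t)}I$ via locality and idempotence, and the reduction of item 3 to $\Lambda = |0\rangle\langle 0|$ with $t=1$ are all the same as in the paper.
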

\begin{proof}
  We prove all items by explicit construction. We set \(p \coloneqq 2^{-m}= \norm{\expectation{ \Gamma}}_\infty\).

  \ref{item:sce-non-catalytic}.
  Define \(\Gamma' \coloneqq (1-p) \Gamma + p I  - (1-p) \expectation{ \Gamma}\). We need to show that \((1-p)\Gamma \overset{\ref{item:≥pI}}{\le} \Gamma' \overset{\ref{item:≤I}}{\le}  I \) and \(\Delta(\Gamma') \overset{\ref{item:∆()=pI}}{=} pI\).
  Observe:
  \begin{enumerate}[label=(\alph*)]
    \item \label{item:≤I}
    \(\Gamma' \le (1-p)  \Gamma + p I \le I\). 
    \item \label{item:≥pI} \(p=\norm{\expectation{ \Gamma}}_\infty\), thus
    \(p I \ge \expectation{ \Gamma} \ge (1-p) \expectation{ \Gamma}\),
    hence
  \begin{equation}
    \Gamma' \defeq p I - (1-p) \expectation{ \Gamma} + (1-p) \Gamma \ge (1-p) \Gamma\;.
  \end{equation}
    \item \label{item:∆()=pI}
    Because \(\condExpectation\) is a unital idempotent, \(\expectation{I} = I\) and \(\expectation{(\expectation{ \Gamma})} = \expectation{ \Gamma}\); therefore
\begin{equation}
    \expectation{\Gamma'} =  
    (1-p)\expectation{ \Gamma} + 
    p \explainabove{\expectation{I}}{=I} - 
    (1-p) \expectation{(\expectation{\Gamma})} =
     p I\;.
\end{equation}
  \end{enumerate}
  This completes the proof of \ref{item:sce-non-catalytic}.
   
   \ref{item:sce-catalytic}.
   Define
\begin{equation}\label{eq:def-composite-effect}
  \Upsilon \coloneqq \Gamma \otimes \Lambda  + \frac{2 ^{-t}}{1- 2^{-t}}\pa*{p I - \expectation{ \Gamma}} \otimes (I - \Lambda)\;.
\end{equation}
    We need to prove \(\Gamma\otimes\Lambda \overset{\ref{item:Upsilon≥}}{\le} \Upsilon \overset{\ref{item:Upsilon≤I}}{\le} I \) and \(\Delta(\Upsilon) \overset{\ref{item:∆(Upsilon)=}}{=} 2^{-(m+t)}I\):
    \begin{enumerate}[label=(\alph*)]
        \item \label{item:Upsilon≥}
        \(\Gamma \le I\) hence
         \(p I \ge \expectation{ \Gamma}\), so \(\frac{2 ^{-t}}{1- 2^{-t}}\pa*{p I - \expectation{ \Gamma}} \otimes (I - \Lambda) \in \positiveOperators(AB)\) and hence \(\Upsilon \ge  \Gamma\otimes \Lambda\).
        \item \label{item:Upsilon≤I}
         \(p I - \expectation{ \Gamma} \le p I \defeq 2^{-m} I \), hence 
        \begin{equation}
          \begin{aligned}
              \frac{2 ^{-t}}{1- 2^{-t}}\pa*{p I - \expectation{ \Gamma}} 
              &
              \le \frac{2 ^{-(t+m)}}{1- 2^{-t}}I   
              \le I\;.\end{aligned}
        \end{equation}
        where the rightmost equality follows by the assumption that \(2^{-(t+m)} \le 1-2^{-t}\).
        Therefore, by the definition in Eq.~(\ref{eq:def-composite-effect}), \(\Upsilon \le \Gamma \otimes \Lambda  + I \otimes (I - \Lambda) \le I\).
        \item  \label{item:∆(Upsilon)=}
        Finally,  \vspace{-20pt}       
         \begin{align}
                \expectation{\Upsilon}
                &  = \expectation{ \Gamma} \otimes \explainabove{\expectation{\Lambda}}{=2^{-t}I}
                 +\frac{2 ^{-t}}{1- 2^{-t}}\pa[\big]{p \explainabove{\expectation{I}}{=I} - \explainabove{\expectation{(\expectation{ \Gamma})}}{= \expectation{ \Gamma}}} 
                \otimes
                 (\explainabove{\expectation{I}}{=I} - \explainabove{\expectation{\Lambda}}{=2^{-t}I} )
                \\ &
                = 2 ^{-t} \expectation{ \Gamma} \otimes I + \frac{2 ^{-t}}{1- 2^{-t}} (p I - \expectation{ \Gamma}) \otimes (1-2 ^{-t}) I
                \\ & 
                = 2 ^{-t} \expectation{ \Gamma} \otimes I + 2 ^{-t} (p I - \expectation{ \Gamma}) \otimes  I
                \\ &
= 2^{-t} p I = 2^{-(m+t)} I 
            \end{align}
\end{enumerate} 
\ref{item:corollary-for-depolarizing-system}. Apply item~\ref{item:sce-catalytic} for \(\Lambda= \ket 0\! \bra 0^B\) (for which \(t=1\)).
\end{proof}

\end{document}